\documentclass[journal]{IEEEtran}
\usepackage{cite}
\usepackage{amsmath,amssymb,amsfonts}
\usepackage{amsthm}
\newtheorem{proposition}{Proposition}
\newtheorem{theorem}{Theorem}

\usepackage[linesnumbered, ruled, vlined]{algorithm2e}
\SetAlFnt{\small}
\SetAlgoNlRelativeSize{-1}
\SetKwSty{textbf}
\SetArgSty{textnormal}
\usepackage{graphicx}

\usepackage[svgnames, table]{xcolor}
\definecolor{blue}{RGB}{0,0,255}
\definecolor{red}{RGB}{255,0,0}
\definecolor{green}{RGB}{0,128,0}
\newcommand{\blue}[1]{\textcolor{blue}{#1}}
\newcommand{\red}[1]{\textcolor{red}{#1}}

\usepackage{textcomp}
\usepackage{booktabs}
\usepackage{multirow}
\usepackage{xspace}
\usepackage{mdframed}
\usepackage{placeins}
\usepackage{hyperref}
\usepackage{etoolbox}
\usepackage{bm}

\def\BibTeX{{\rm B\kern-.05em{\sc i\kern-.025em b}\kern-.08em
    T\kern-.1667em\lower.7ex\hbox{E}\kern-.125emX}}

\def\cat{{CAT}\xspace}

\newcommand{\xtxsend}[3]{$\mathsf{Send}(#1,\allowbreak{}#2,\allowbreak{}#3)$}

\def\ie{{i.e.},~}
\def\eg{{e.g.},~}

\def\emptyset{\varnothing}

\def\z3{{\sc Z3}\xspace}

\newcommand{\figref}[1]{Figure~\ref{#1}}
\newcommand{\secref}[1]{Sec.~\ref{#1}}

\def\accepted{\textsf{Accepted}}
\def\postponed{\textsf{Postponed}}
\def\propose{\textsf{Propose}}
\def\status{\textsf{Status}\xspace}
\def\failure{\textsf{failure}\xspace}
\def\success{\textsf{success}\xspace}
\def\pending{\textsf{pending}\xspace}
\def\depth{\textsf{depth}\xspace}

\def\emptyset{\varnothing}

\def\reads{\mathsf{reads}}

\def\writes{\mathsf{writes}}

\def\externalreads{\mathsf{xReads}}
\def\externalwrites{\mathsf{xWrites}}
\def\Next{\mathsf{Next}}
\def\range{\mathsf{Range}}
\def\mostrecentwrite{\mathsf{LastWrite}}

\def\calD{{\cal D}}
\def\calg{{\cal G}}
\def\calI{{\cal I}}
\def\calK{{\cal K}}

\def\calN{{\cal N}}

\def\calT{{\cal T}}
\def\calV{{\cal V}}
\def\Skip{\textsf{Skip}}
\def\dom{\textsf{dom}}
\def\codom{\textsf{codom}}
\def\changeset{\mathsf{ChangeSet}}
\def\trace{\mathsf{trace}}
\def\memtrace{\mathsf{MemTr}}

\newcommand{\directReadDependsOn}[1]{\rightarrow^R_{#1}}
\newcommand{\directWriteDependsOn}[1]{\rightarrow^W_{#1}}
\newcommand{\directDependsOn}[1]{\rightarrow_{#1}}
\newcommand{\dependsOn}[1]{\rightarrow_{#1}^+}

\newcommand{\mutuallyDependsOn}[1]{\leftrightarrow_{#1}^+}
\newcommand{\directIndependent}[1]{\not\leftrightarrow_{#1}}
\newcommand{\independent}[1]{\not\leftrightarrow_{#1}^+}

\def\nonTP{nTP\xspace}

\def\emptymap{[ \,\, ]}

\definecolor{ForestGreen}{RGB}{34,139,34}

\let\oldsigma\sigma
\renewcommand{\sigma}{\oldsigma}

\newtheorem{observation}{Observation}
\newtheorem{lemma}{Lemma}
\newtheorem{definition}{Definition}

\begin{document}

\title{CATs: Secure Blockchain Interoperability with Cross-chain Atomic Transactions}
\author{%
  \begin{tabular}{c@{\hspace{4em}}c}
    Andreas Penzkofer & Franck Cassez \\[2pt]
    Move Industries & Movement Labs \\
    San Francisco, USA & Sydney, Australia
  \end{tabular}%
}

\maketitle

\begin{abstract}
  We propose a protocol for \emph{cross-chain atomic transactions} (\cat{}s), enabling composable atomic execution across different blockchains. The protocol addresses the key interoperability challenge of providing atomicity guarantees in the presence of asynchronous communication and Byzantine actors. It preserves chain autonomy by allowing each blockchain to maintain its own execution model while participating in coordinated cross-chain operations. The design introduces a shared coordination layer involving sequencers, transaction processors, a coordinator, and a confirmation layer which together ensure that either all parts of a \cat succeed or none do. ~\raisebox{-0.75mm}{\includegraphics[height=1em]{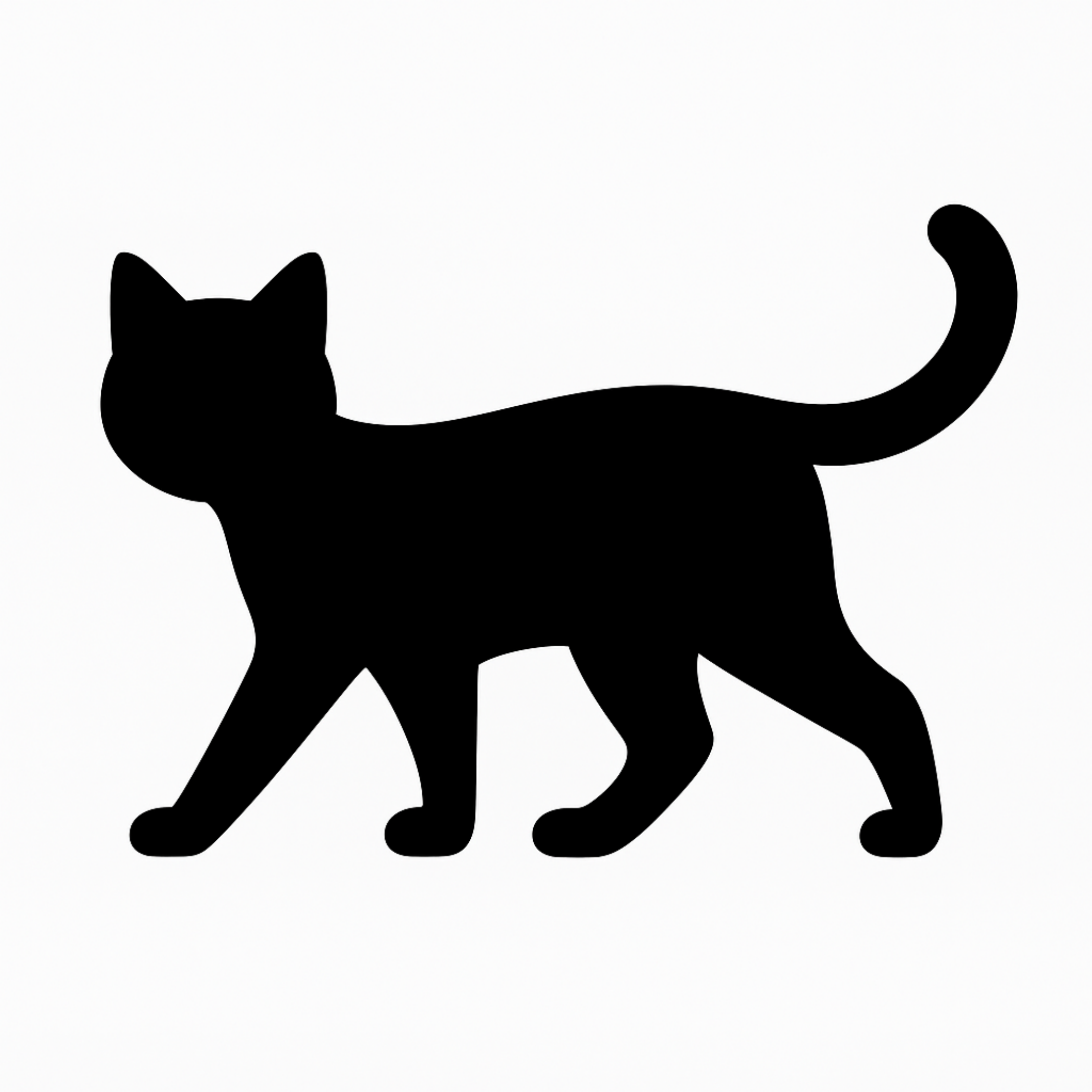}}

  To prevent unnecessary blocking, we separate transaction execution into accepted and postponed sets, with the coordination layer resolving the outcomes of \cat{}s within a few rounds. We further introduce timeouts and dependency-depth bounds for liveness and mitigation of cascading delays. 
  Our formal analysis establishes strong safety and liveness guarantees and demonstrates that the protocol achieves minimal blocking for independent transactions while ensuring bounded blocking time for dependent transactions. Experimental evaluation shows high \cat{} success when cross-chain transactions are a modest share of traffic, and characterizes the \cat{}-lifetime trade-off between success and dependent-transaction latency.

  This protocol enables fast, secure, and deterministic atomic cross-chain execution while preserving chain autonomy, providing a foundation for scalable blockchain interoperability solutions.
\end{abstract}

\section{Introduction}

Blockchain systems are increasingly evolving into multi-chain and multi-layered ecosystems, offering new scalability and specialization benefits at the cost of significant interoperability challenges. 
In a rollup-centric Ethereum landscape, dozens of Layer-2 (L2) networks now operate in parallel, while alternative Layer-1 (L1) chains proliferate. This proliferation has introduced fundamental issues: fragmented liquidity, cross-chain transaction latency, and a lack of atomicity in inter-chain operations.
While transactions may be faster and cheaper on individual chains, the overall user experience and capital efficiency suffer dramatically from this fragmentation.

\subsection{Multi-Chain Interoperability}

In the following, we outline the challenges and opportunities in multi-chain interoperability.
\subsubsection*{Fragmentation of Liquidity Across Chains and Layers}

Scaling via multiple chains or rollups inherently fragments liquidity and state. Each L2 operates as a separate execution environment with its own asset instances and protocols. Consequently, capital that was once pooled on L1 becomes trapped in isolated silos across chains, dramatically reducing market depth and efficiency. 
DeFi protocols struggle to maintain deep liquidity across all networks, forcing users into suboptimal pricing or complex multi-step maneuvers.
A simple token purchase may require network switching, asset bridging, and fee payments across multiple chains, creating significant friction and risk.
This \emph{maze of isolated chains} fundamentally undermines the composability and seamlessness expected from a unified platform.
Fragmented liquidity has emerged as a critical scaling trade-off, where throughput gains and specialized execution come at the expense of capital efficiency and user convenience.
To address this, projects like AggLayer~\cite{agglayer2025} propose a neutral cross-chain execution layer that unifies liquidity, users, and state across sovereign chains, using Ethereum as the base for finality~\cite{polygon2024misconceptions,espresso2024agglayer}. 
By introducing a shared execution layer and unified bridge model, such designs aim to concentrate liquidity and support seamless cross-chain interactions without manual bridging.
The Ethereum community also explores standards, which allow applications to operate across multiple chains through a single abstracted interface. ERC-7683 in particular defines a way for users to express an intent (e.g. swap assets across chains) as one high-level request, which a network of relayers then executes across chains on the user's behalf~\cite{erc7683}. 
This and other standards \cite{Paradigm2023Intents, AnomaIntents2022} seek to give users the impression of a unified liquidity pool and one-step access to all chains. 
The solutions recognize that a multi-chain ecosystem needs to restore composability and liquidity of a single network, most prominently through a common coordination layer.

\subsubsection*{Latency in Cross-Chain Transactions}

Cross-chain transactions can introduce significant latency, which hampers user experience and protocol composability. Unlike single-chain operations, they involve multi-step processes—such as locking assets, relaying messages, and unlocking or minting on the target chain—with added confirmation delays for security. For example, moving assets from Ethereum to a rollup can take minutes, while withdrawals from optimistic rollups may take days due to finality or challenge periods. This latency prevents atomic execution of multi-chain DeFi strategies (for instance, collateralizing an asset on one chain to borrow on another, or arbitraging price differences across exchanges on two chains), forcing developers to break interactions into separate steps vulnerable to market changes. As a result, users avoid cross-chain workflows, and composability suffers. Emerging solutions like Espresso aim to address this by enabling fast interim confirmations and near-instant cross-rollup communication without waiting for full L1 settlement~\cite{espresso2024agglayer}. Such protocols rely on a confirmation layer to provide fast ``finality'', often utilizing a separate consensus mechanism.

\subsubsection*{Lack of Atomicity in Multi-Chain Interactions}

Cross-chain operations typically lack atomicity, meaning that a single logical transaction spanning multiple blockchains cannot be executed in an all-or-nothing fashion. Unlike single-chain transactions, there is no shared view or transaction manager to guarantee coordinated commits. 
This creates failure scenarios where one part of a trade might succeed while another fails, leaving users in inconsistent or loss-bearing states. Even protocols like hash-time-lock contracts (HTLCs), designed to enforce mutual exchange or refund, rely on timeouts and can fail under delays or crashes, undermining atomic guarantees~\cite{zakhary2020atomic}. 
The absence of atomicity complicates dApp design, requiring developers to simulate rollback mechanisms across independent chains, which introduces security risks and implementation complexity.
Research proposals such as decentralized witness networks and cryptographic commit protocols offer potential solutions~\cite{zakhary2020atomic,Cai2024cats-through-state-layers}, but they are not yet widely adopted. For now, the inability to ensure atomic multi-chain interactions remains a major barrier to seamless cross-chain composability.

\subsubsection*{Optimistic Concurrency Control}

OCC~\cite{kung1981optimistic} is a concurrency paradigm in transactional systems that assumes conflicts are rare and allows transactions to execute speculatively without locks.
Transactions proceed independently and are validated only at commit time, where conflicting ones are aborted and retried. 
This approach has been adopted in high-performance blockchain systems such as Aptos, where the Block-STM engine~\cite{gelashvili2022blockstmscalingblockchainexecution} leverages OCC to parallelize simulation of execution before commitment while ensuring determinism. 
Our protocol shares conceptual similarities—such as partially ordered execution and deferred resolution of outcomes. It does not rely on speculative rollback, but takes into account all relevant state transitions.
It also embeds causal dependencies directly into the scheduling and confirmation layers, using explicit postponement or skipping rules and synchronization guarantees to ensure atomicity, liveness, and efficient composability across chains. The similarity is not coincidental, and \eg ~\cite{Cai2024cats-through-state-layers} implements a cross-chain atomic transaction protocol based on OCC, for instance.

\subsection{Contributions and Paper Structure}

Fragmented liquidity, high latency, and non-atomic operations limit blockchain interoperability. As can be seen in the related work \secref{sec:related-work}, solutions like AggLayer and Espresso aim to bridge these gaps in Ethereum, while other L1s, such as Polkadot and Cosmos, integrate interoperability more natively, albeit at the cost of other constraints. A path forward will blend infrastructure coordination, cryptographic assurance, and protocol design to achieve scalable, seamless multi-chain ecosystems.

In this paper, we present a novel protocol for atomic multi-chain execution that overcomes the limitations of existing interoperability solutions by embedding causal dependencies directly into the scheduling and confirmation process. Our design enables atomic, low-latency, and composable cross-chain interactions without requiring speculative rollback or relying fully on centralized coordination. The protocol ensures correctness through explicit dependency tracking and synchronization guarantees, while maintaining liveness under chain-level asynchrony and partial failures. We detail the architecture, formalize the protocol's guarantees, and evaluate its performance under realistic network conditions and adversarial scenarios. Through simulation, we show that the protocol is performant and practical even under high load and a great degree of contention on state key accesses. 

The remainder of the paper is structured as follows: \secref{sec:motivation} introduces an exemplary motivation for the protocol; \secref{sec:model} introduces the system model and assumptions; \secref{sec:protocol} describes the protocol in detail; \secref{sec:analysis-protocol-correctness} analyzes its correctness and liveness properties; \secref{sec:evaluation} presents an empirical evaluation; and \secref{sec:related-work} discusses related work and key distinctions. We conclude in \secref{sec:conclusion}.


\section{Motivation}
\label{sec:motivation}


To illustrate the need for cross-chain transaction protocols that ensure \textbf{atomicity}, are \textbf{minimally blocking}, and guarantee \textbf{liveness}, we walk through a concrete scenario with two chains, Chain~1 and Chain~2, and a unique asset (token) that exists with a corresponding representation on both chains.

\subsection{Regular transactions.}
Let us assume, for simplicity, that each chain offers a single type of transaction \xtxsend{sender}{recipient}{amount}, where \emph{amount} is the amount of the asset to be transferred from \emph{sender} to \emph{recipient}.
The \emph{state} of the chain is a function that maps each account to its balance.
The execution of a transaction on a chain either succeeds or fails, and the status of the execution of a transaction is either \success or \failure.
When a transaction is successful, the state of the chain is updated accordingly. If it fails the state of the chain is left unchanged.
The status of the execution of a transaction depends on the \emph{state} of the chain.
For instance, \xtxsend{Bob}{Charlie}{200} succeeds from a state $s$ if and only if $s(\textit{Bob}) \geq 200$ \ie Bob's balance is at least $200$. If it succeeds, Bob's balance is decreased by $200$ and Charlie's balance is increased by $200$.
If Bob's balance is less than $200$ in state $s$, the execution of the transaction fails.

\begin{figure}[thbp]
  \centering
  \includegraphics[width=0.49\textwidth]{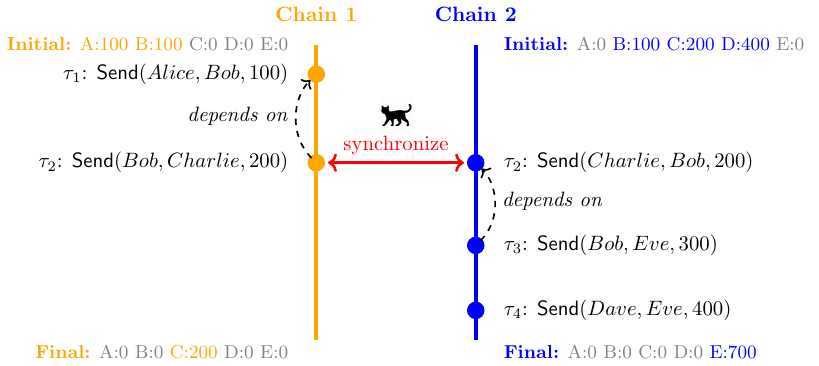}
  \caption{\small A scenario with 5 participants and a cross-chain atomic transaction (CAT) that swaps assets between Bob and Charlie on two chains. Time evolves from top to bottom. Examples of balances are shown for an initial state and after the transactions are executed, for the case where all transactions succeed. Dependencies are shown with dashed lines.}
  \label{fig:example-sync}
\end{figure}

\subsection{Multi-chain transactions.}
For simplicity, assume we have a \emph{shared sequencer} that can order and schedule pairs of transactions on the two chains. We use the pair [t, \Skip] for a transaction t on Chain~1 and no counterparty on Chain~2.
Users submit pairs of transactions (which may contain \Skip) directly to the sequencer, and the sequencer will order the tuple of transactions to be executed on the two chains.
We consider a scenario involving several participants: Alice, Bob, Charlie, Dave, and Eve, see \figref{fig:example-sync}. The sequence of transactions to execute is shown in Table~\ref{tab:transactions}.

\begin{table}[htbp]
  \centering
  \begin{tabular}{@{}l@{\hspace{1.5em}}l@{\hspace{2em}}l@{\hspace{0.5em}}l@{}}
    \textbf{Tx} & \textbf{Type} & \textbf{Chain 1} & \textbf{Chain 2} \\
    \midrule
    $\tau_1$ & Regular & \xtxsend{Alice}{Bob}{100} & \Skip \\
    $\tau_2$ & Cross-chain & \xtxsend{Bob}{Charlie}{200} & \xtxsend{Charlie}{Bob}{200} \\
    $\tau_3$ & Regular & \Skip & \xtxsend{Bob}{Alice}{300} \\
    $\tau_4$ & Regular & \Skip & \xtxsend{Dave}{Eve}{400} \\
    \midrule
  \end{tabular}
  \caption{Sequence of transactions to execute.}
  \label{tab:transactions}
\end{table}

The semantics of CATs is that either all parts of a CAT succeed or none do~\cite{lu2024atomicity,Robinson2020Performance,RobinsonRamesh2020Layer2}.
As a result, the status of $\tau_2$ depends on the status of the two sides \xtxsend{Bob}{Charlie}{200} and \xtxsend{Charlie}{Bob}{200} on their respective chains.
Without coordination between the two chains, there are two possible classes of outcomes for each side of $\tau_2$:
on Chain~1, \xtxsend{Bob}{Charlie}{200} succeeds or fails, and on Chain~2, \xtxsend{Charlie}{Bob}{200} succeeds or fails.

This is the source of several difficulties in implementing cross-chain atomic transactions:
\begin{enumerate}
  \item \textbf{No trusted interchain communication.} The two chains are independent and cannot \emph{directly} communicate their states and transactions' statuses.
  \item \textbf{Rollback complexity.} For a cross-chain transaction $\tau_2 = [t_1, t_2]$, once $t_1$ is confirmed on Chain 1, but $t_2$ fails on Chain 2, rolling back $t_1$ is non-trivial (especially in chains where transactions are supposed to be final once included in a block) and may lead to loss of user funds\footnote{For example in the example in Table~\ref{tab:transactions}, if the first part of $\tau_2$ succeeds, but the second part fails, Bob would lose 200 tokens.}.
  \item \textbf{Independent simulation.} Our protocol addresses this through simulation and coordination before commitment~\cite{RobinsonRamesh2020Layer2}. More specifically, we have to first \emph{determine} the status of a transaction on each chain (i.e., simulate it) but cannot commit it yet. 
  \item \textbf{Necessity of a shared confirmation layer.} Next, a consensus has to be reached on the combined outcome. If both sides of a cross-chain transaction succeed, the results can be committed. 
  If at least one side of a cross-chain transaction fails, the chains should agree on the failure and skip it. Without a shared confirmation layer, parts of a CAT may be committed on one chain before the other part is executed on the other chain, or vice versa.  This motivates the need for a common confirmation layer that coordinates execution across chains.
\end{enumerate}

\begin{mdframed}[backgroundcolor=gray!10, linewidth=0.5pt]
  Our protocol must ensure \textbf{atomicity}. This means that either all constituent transactions of a CAT succeed, or all fail. No partial execution is allowed, maintaining consistency across all involved chains.
\end{mdframed}

\medskip


\subsection{Minimally-blocking protocol.}

Some transactions depend on the outcome of other transactions. As shown in BlockSTM~\cite{gelashvili2022blockstmscalingblockchainexecution}, even in the case where execution could be optimistically parallelized, the application of a transaction (\ie its change set) to the state has to await other transactions it depends on. For example, in \secref{sec:motivation}, $\tau_3$ (\xtxsend{Bob}{Eve}{300}) depends on the outcome of $\tau_2$ (\xtxsend{Charlie}{Bob}{200}). 
We say a transaction is \emph{blocking} if it prevents other transactions from being executed until its execution outcome is known.

One way to implement a protocol for \cat{}s is, in principle, to \emph{synchronize} at each \cat{} and block entirely until resolved. 
However, this approach is not optimal as it introduces unnecessary delays. A better approach is to \emph{minimize} the blocking of transactions and allow transactions to be executed (and accepted) out of order, as long as they are independent transactions, see \secref{sec:naive-cat-protocol} for a description of the transaction flow.

For example, consider the scenario in \figref{fig:example-sync} where $\tau_4$ could be blocked because of $\tau_3$. This is not optimal as this transaction is independent of $\tau_2$ and $\tau_3$. 
Maintaining the exact order of transactions is blocking and may delay every other transaction ordered after a \cat.
  For instance, the transactions $\tau_3$ and $\tau_4$ can only be processed after $\tau_2$ has been completed.
  This may be unavoidable for some transactions like $\tau_3$ as its execution depends on the outcome of $\tau_2$.
  However, $\tau_4$ does not depend on the result of $\tau_2$.

  \begin{mdframed}[backgroundcolor=gray!10, linewidth=0.5pt]
    The \cat{} protocol should be \textbf{minimally-blocking}. This means that transactions that are independent of blocking transactions should be allowed to be executed without any delay.
  \end{mdframed}

\subsection{Eventual progress.}
  
The chains have to synchronize (and agree) on transaction outcomes that affect multiple chains. In reality, the two chains may run at different speeds, and one chain may have to wait for the other before committing its new state.
This synchronization requirement can lead to significant delays, as faster chains must wait for slower ones to complete their operations. For this reason, we must ensure liveness of the protocol, which can be achieved by introducing a timeout mechanism and using a shared confirmation layer.

\begin{mdframed}[backgroundcolor=gray!10, linewidth=0.5pt]
 The \cat{} protocol must ensure \textbf{liveness}. This means all transactions eventually reach a final status (success or failure) within a bounded time, preventing indefinite blocking and ensuring the system can make progress even in the presence of failures or extreme network delays.
\end{mdframed}


\section{Transaction and Computation Model}\label{sec:model}

In this section, we formalize the semantics of transactions, their computation, and the dependencies between them. We define the mathematical foundation for transaction execution, state transitions, and dependency relationships that will be used throughout the paper. We assume homogeneous chains sharing the same VM architecture; the Transaction Processor (TP) is a native component of each chain, not an external add-on. Heterogeneous chain integration is feasible in principle given appropriate VM-level adaptations, but is outside the scope of this work.
While we use cross-chain atomic transactions (\cat{}s) as examples, the focus here is on the transaction model itself rather than the multi-chain aspects. The multi-chain coordination and atomicity properties are addressed in Section~\ref{sec:protocol}.

\subsection{Maps and sequences}
\label{sec:notations}

A \emph{partial map} is a partial function $m: \calK \rightharpoonup \calV$ that may not be defined on all of $\calK$.
The \emph{domain} of $m$, $\dom(m) \subseteq \calK$, is the set of \emph{keys} on which $m$ is defined.
The \emph{codomain} of $m$, $\codom(m) \subseteq \calV$, is the set of \emph{values} that $m$ can take.
We write $[k_1 \mapsto v_1, k_2 \mapsto v_2, \cdots, k_n \mapsto v_n]$ for the partial map that maps $k_i$ to $v_i$ for $1 \leq i \leq n$, where $n$ is the total number of key-value pairs in the map.
The \emph{empty map} $\emptymap$ is the partial map with $\dom(\emptymap) = \emptyset$, \ie defined nowhere.
A \emph{total map} is a partial map defined everywhere \ie with $\dom(m) = \calK$.
We write $m: \calK \rightarrow \calV$ ($\rightarrow$ instead of $\rightharpoonup$) for a total map.

Given two partial maps $m_1: \calK \rightharpoonup \calV$ and $m_2: \calK \rightharpoonup \calV$, the \emph{update} of $m_1$ with $m_2$ denoted $m_1 \oplus m_2$ is a partial map defined on $\dom(m_1) \cup \dom(m_2)$ as:
\[
  (m_1 \oplus m_2)(k) =
  \begin{cases}
    m_2(k) & \text{if } k \in \dom(m_2) \\
    m_1(k) & \text{otherwise} \\
  \end{cases}
\]

\medskip

For a sequence $\sigma = e_1.e_2.\cdots.e_{n}$, we let $|\sigma| = n$ be the \emph{length} of the sequence.
The \emph{empty sequence} is $\varepsilon$ and has length $|\varepsilon| = 0$.
We denote $\sigma_1.\sigma_2$ as the concatenation of sequences.
For $1 \leq i \leq |\sigma|$, we write $\sigma[i]$ for $e_i$.
For $1 \leq i \leq j \leq |\sigma|$, $\sigma[i..j]$ is the sequence $e_i.\cdots.e_{j}$.

\subsection{Transactions}
\label{sec:skipping-transactions}

We assume both chains offer the set $\calT$ of transactions.
We require that $\calT$ contains a special transaction $\Skip \in \calT$ that has no effect.
A \cat is a pair $(t_1, t_2)$ with $t_j \in \calT, j \in \{1,2\}$, where $t_1 \neq \Skip$ and $t_2 \neq \Skip$.
If $t_1 = \Skip$ or $t_2 = \Skip$, the transaction is a \emph{regular} transaction (involving only one chain).
This definition enables us to use the same notation for regular transactions and \cat{}s.
Transactions, regular or \cat, are submitted to the network $\calN$ as pairs of transactions, one for each chain.

\subsection{States, change sets, and state transitions}
\label{sec:states-and-change-sets}

A \emph{state} $s$ of chain $C_i, i = 1, 2$ is a total map $s: \calK \to \calV$.
The set $\calK$ can be thought of as the set of memory locations or global storage locations of chain $C_i$, and the set $\calV$ as the set of values that can be stored in these locations.
We assume that both chains have the same type of memory locations $\calK$ and the same type of values $\calV$.
The initial state of the chain is the \emph{genesis state} $\calg(C_i)$.

\medskip

A \emph{change set} is a partial map from $\calK$ to $\calV$.
The semantics of the execution of a transaction $t \in \calT$ from a state $s$ is defined by a change set function $\changeset(s, t) : \calK \rightharpoonup \calV$, and we require that:
\[
  \changeset(s, \Skip)=\emptymap
\]
The function $\Next$ defines the state obtained after executing $t$ from $s$:
\[
  \Next(s, t) = s \oplus \changeset(s, t)
\]
We extend $\Next$ to sequences of transactions:
\[
  \Next(s, \varepsilon) =  s \\
  \Next(s, t.\sigma) = \Next(\Next(s, t), \sigma)
\]
To capture the status of the execution of a transaction, we define the \emph{state transition} function $s \xrightarrow{\ t/\alpha \ } s'$ as follows: $\alpha \in \{\textsf{success}, \textsf{failure}\}$ is the status of the execution of transaction $t$ from state $s$, and
$s' = \Next(s, t)$.
The transition function must satisfy the following requirements:
\[
  s \xrightarrow{\ t/\failure \ } s' \implies \changeset(s, t) = \emptymap
\]
\[
  s \xrightarrow{\ \Skip/\alpha \ } s' \implies \alpha = \success \wedge \changeset(s, \Skip) = \emptymap
\]
A consequence of the above is that $s \xrightarrow{\ t/\failure \ } s'$ implies $\Next(s, t) = s$,
and that $\Next(s, \Skip) = s$.

Given a sequence of transactions $\sigma = t_1.t_2.\cdots.t_{n}$, the \emph{execution} $\gamma$ of the sequence is a sequence of state transitions:
\[
  \gamma: s_0  \xrightarrow{\ t_1/\alpha_1 \ } s_1 \xrightarrow{\ t_2/\alpha_2 \ } \cdots \xrightarrow{\ t_{n}/\alpha_{n} \ }
  s_{n}
\]
where $\alpha_i, 1 \leq i \leq n$, is the status of the execution of transaction $t_i$.
The \emph{trace} of $\gamma$ is:
\[
  \trace(\gamma) = t_1/\alpha_1.t_2/\alpha_2.\cdots.t_{n}/\alpha_{n}
\]

\subsection{Reads and writes, memory traces}
\label{sec:reads-and-writes}

The execution of a transaction $t$ from a given state $s$ may read or write a set of memory locations in $\calK$.
We define the following operations: $\reads(k)$ to denote a read operation that reads the value of location $k\in \calK$ and $\writes(k, v)$ to denote a write operation that writes the value $v\in \calV$ to location $k$.

There are several other operations performed during the computation of a transaction (\eg arithmetic, control flow, etc.), but we do not need to consider them for the purpose of this discussion.

The memory \emph{trace}, $\memtrace(s, t)$, resulting from the execution of a transaction $t$ from a state $s$ is a sequence of $\reads(\cdot)$ and $\writes(\cdot,\cdot)$ operations.

For instance, assuming $\calK$ and $\calV$ are the set of non-negative integers, the sequences of operations $\mu_i, 1 \leq i \leq 6$ in Table~\ref{tab:transaction-set} each represent a memory trace. The memory access pattern for these transactions is visualized in \figref{fig:memory-locations}.

\begin{table}[htbp]
  \centering
  \renewcommand{\arraystretch}{1.3}
  \begin{tabular}{clc}
  \arrayrulecolor{gray!30}
  \hline
  \rowcolor{gray!20}
  \textbf{Transaction} & \textbf{Memory Trace} & \textbf{Range} \\
  \hline
  $t_1 = \sigma[1]$ & $\mu_1 = \reads({\color{blue}10}).\writes({\color{red}20}, v_1).\reads({\color{blue}30})$ & $[1,3]$ \\
  \hline
  $t_2 = \sigma[2]$ & $\mu_2 = \reads({\color{blue}40}).\writes({\color{red}50}, v_2).\reads({\color{blue}60})$ & $[4,6]$ \\
  \hline
  $t_3 = \sigma[3]$ & $\mu_3 = \reads({\color{blue}20}).\writes({\color{red}60}, v_3)$ & $[7,8]$ \\
  \hline
  $t_4 = \sigma[4]$ & $\mu_4 = \writes({\color{red}60}, v_4).\reads({\color{blue}50})$ & $[9,10]$ \\
  \hline
  $t_5 = \sigma[5]$ & $\mu_5 = \writes({\color{red}70}, v_5)$ & $[11,11]$ \\
  \hline
  $t_6 = \sigma[6]$ & $\mu_6 = \writes({\color{red}80}, v_6)$ & $[12,12]$ \\
  \hline
  \end{tabular}
  \vspace{4pt}
  \caption{Reference transaction set used throughout the paper. Colored numbers indicate the memory locations of {\color{blue}read} ({\color{red}write}) operations. $v_i$ is the value written to the memory location. For a more visual representation, see Figure~\ref{fig:memory-locations} and Figure~\ref{fig:dependencies}.}
  \label{tab:transaction-set}
\end{table}

\begin{figure}[htbp]
  \centering
  \includegraphics[width=0.48\textwidth]{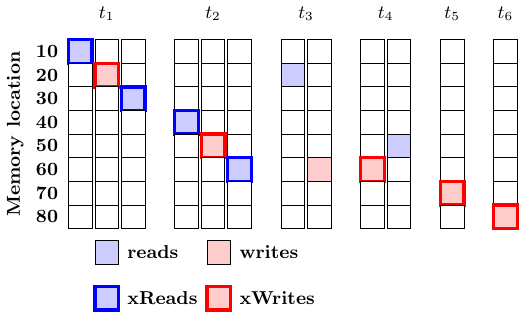}
  \caption{Memory locations matrix showing read and write operations for each transaction from Table~\ref{tab:transaction-set}. Each row represents a memory location, and each column represents an operation in the memory trace of the transaction. External reads and writes for the sequence $\sigma=t_1.t_2.t_3.t_4.t_5.t_6$ are highlighted with borders.}
  \label{fig:memory-locations}
\end{figure}

Given a state $s$ and a sequence of transactions $\sigma$, $\memtrace(s, \sigma)$ is the concatenation of sequences of $\reads$ and $\writes$ to locations in $\calK$ over each transaction $\sigma[i], 1 \leq i \leq |\sigma|$.
We extend $\memtrace$ to sequences of transactions:
\begin{align*}
  \memtrace(s, \varepsilon)&  = \varepsilon &  \\
  \memtrace(s, t.\sigma) & = \memtrace(s, t).\memtrace(\Next(s, t), \sigma) &
\end{align*}

\subsection{Memory ranges}
\label{sec:memory-ranges}

The \emph{range} of each transaction in $\memtrace(s, \sigma)$ is the section of reads and writes occurring when executing the transaction.
We can define the function $\range(s, \sigma, i)$ that returns a (convex) interval of indices (range) of $\memtrace(s, \sigma)$ that corresponds to the execution of the transaction $\sigma[i]$.
For instance, assume the memory trace of the sequence of transactions $\sigma=\sigma[1].\sigma[2]$ (see Table~\ref{tab:transaction-set}) from state $s$ is
\begin{align*}
  \memtrace(s, \sigma) & = \memtrace(s, \sigma[1].\sigma[2]) & \\
  & =  \underbrace{\memtrace(s, \sigma[1])}_{\mu_1}.\underbrace{\memtrace(\Next(s, \sigma[1]), \sigma[2])}_{\mu_2} & \\
  & = \mu_1.\mu_2  &
\end{align*}
The range of $\sigma[1]$ in $\memtrace(s, \sigma)$ is $\range(s, \sigma, 1) = [1,3]$ and the range of $\sigma[2]$ is $\range(s, \sigma, 2) = [4,6]$ as shown in~\figref{fig:memtrace}.

\begin{figure}[htbp]
  \centering
  \includegraphics[width=0.49\textwidth]{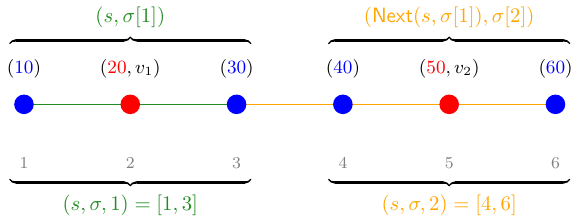}

  \caption{Memory traces $\color{ForestGreen}\mu_1$ and $\color{orange}\mu_2$ with ranges, as shown in Figure~\ref{fig:memory-locations} and Table~\ref{tab:transaction-set}. {\color{blue}Read} ({\color{red}Write}) operations are shown with their value written to the memory location.}
  \label{fig:memtrace}
\end{figure}

\subsection{External reads and writes}
\label{sec:external-reads-and-writes}

Given a memory trace $\mu$, we define the \emph{external reads} $\externalreads(\mu)$ as the set of memory locations that are read but not preceded by a write to the same location within the trace. These definitions use the indexing structure of memory traces as formalized in the previous section on memory ranges:
\begin{align}
  \externalreads(\mu) = \{ k \in \calK \mid \exists i \in [1, |\mu|] : \mu[i] = \reads(k) \wedge \nonumber \\
  \forall j < i : \mu[j] \neq \writes(k, \cdot) \} \nonumber
\end{align}

Similarly, we define the \emph{external writes} $\externalwrites(\mu)$ as a partial map that captures the final key-value pair written to each memory location in the trace:
\begin{align}
  \externalwrites(\mu) = [k \mapsto v \mid \exists i \in [1, |\mu|] : \mu[i] = \writes(k, v) \wedge \nonumber \\
  \forall j > i : \mu[j] \neq \writes(k, \cdot)] \nonumber
\end{align}

Let $\mu' = \mu_1.\mu_2.\mu_3.\mu_4.\mu_5.\mu_6$ be the concatenated memory trace of the transactions from Table~\ref{tab:transaction-set}.
The external reads are $\externalreads(\mu') = \{ 10, 30, 40, 60 \}$ because $\reads(20)$ in $t_3$ is preceded by a write to $20$ (in $t_1$) and $\reads(50)$ in $t_4$ is preceded by a write to $50$ (in $t_2$), so neither are external reads.
We have $\externalwrites(\mu') = \{ 20 \mapsto v_1, 50 \mapsto v_2, 60 \mapsto v_4, 70 \mapsto v_5, 80 \mapsto v_6 \}$, since $t_4$ writes to $60$ again after $t_3$; see Figure~\ref{fig:memory-locations}.

\subsection{Dependencies}
\label{sec:dependencies}

Given a sequence of transactions $\sigma$ and a range index $0 \leq i < |\memtrace(\sigma)|$ such that there is a read at this location \ie $\memtrace(\sigma)[i] = \reads(\ell)$ for some $\ell \in \calK$, we define $\mostrecentwrite(\sigma, i)$ to be the most recent write to this location in the memory trace.
Let $K = \{ j < i \mid \memtrace(\sigma)[j] = \writes(\ell, -) \}$. The function $\mostrecentwrite(\sigma, i)$ is defined\footnote{With the convention that $\max \varnothing = - \infty$.} by $max_{k \in K} k$.

Given a state $s$, the result of the execution of $\sigma$ from $s$ is given by the $\Next$ function as defined in Section~\ref{sec:states-and-change-sets}.
This assumes that we execute the transactions in sequence. However, for performance reasons, we may want to compute some of them in parallel (as is the case with BlockSTM), or as we will later see in \secref{sec:naive-cat-protocol}, we may want to execute some of them out of order.
For the latter to be possible, we need to make sure that transactions executed out of order give the same results as if they were executed in sequence.
This can be formalized by a notion of \emph{dependency} between transactions.

\subsubsection{Read-Write dependency.}

A transaction $\sigma[j]$ \emph{depends directly on} a transaction $\sigma[i]$ if $i < j$ and $\sigma[j]$ reads from a location that was written by $\sigma[i]$. This is a \emph{read-write dependency}.
Given a state $s$ and $i < j$, 
we define the dependency relation $\cdot \directReadDependsOn{s} \cdot$ between transactions in a sequence of transactions $\sigma$ by: let $\mu = \memtrace(s, \sigma)$,
\[
\begin{gathered}
\sigma[j] \directReadDependsOn{s} \sigma[i] \iff \\[2pt]
\begin{cases}
  \exists k \in \calK, & \text{[a memory location]} \\
  \exists n \in \range(s,\sigma,j), & \text{[an operation in $\sigma[j]$]} \\
  \mu[n] = \reads(k), & \text{[which is a $\reads(k)$]} \\
  \mostrecentwrite(\mu,n) \in \range(s,\sigma,i), &
     \text{[Last write is by $\sigma[i]$]}
\end{cases}
\end{gathered}
\]

\subsubsection{Write-Write dependency.}

A transaction $\sigma[j]$ has a \emph{write-write dependency} on transaction $\sigma[i]$ if $i < j$ and both transactions write to the same memory location. We define this as:
\[
\begin{gathered}
  \sigma[j] \directWriteDependsOn{s} \sigma[i] \iff \\[2pt]
  \begin{cases}
    \exists k \in \calK, & \text{[a memory location]} \\
    \exists n \in \range(s, \sigma, j), & \text{[an operation in $\sigma[j]$]} \\
    \exists m \in \range(s, \sigma, i), & \text{[an operation in $\sigma[i]$]} \\
    \mu[n] = \writes(k, \cdot), & \text{[which is a $\writes(k, \cdot)$]} \\
    \mu[m] = \writes(k, \cdot), & \text{[and $\sigma[i]$ also writes to $k$]}
  \end{cases}
\end{gathered}
\]
A write-write dependency is, for example, relevant if we consider the final value written to a location.

Note that nonce ordering is itself a dependency: nonce $n{+}1$ from the same account depends on nonce $n$, which is a special case of a write-write dependency on the account's nonce storage slot. Monotonicity is thus enforced naturally by the dependency mechanism.

\subsubsection{dependsOn.}

We can define the \emph{dependsOn} relation as the combination of the direct read-write dependency and the direct write-write dependency.
\begin{align*}
  \sigma[j] \directDependsOn{s} \sigma[i] \iff
  \begin{cases}
    \sigma[j] \directReadDependsOn{s} \sigma[i] \wedge \sigma[j] \directWriteDependsOn{s} \sigma[i]
  \end{cases}
\end{align*}

We can define the transitive closure, \emph{dependsOn}, $\cdot \dependsOn{s} \cdot$ of the direct dependency relation as the smallest relation that satisfies:
\begin{align*}
  \sigma[j] \directDependsOn{s} \sigma[k] \implies \sigma[j] \dependsOn{s} \sigma[k] & \\
  \sigma[j] \dependsOn{s} \sigma[k] \wedge \sigma[k] \directDependsOn{s} \sigma[i] \implies \sigma[j] \dependsOn{s} \sigma[i] &
\end{align*}
A transaction $\sigma[j]$ \emph{depends on} a transaction $\sigma[i]$ if $\sigma[j] \dependsOn{s} \sigma[i]$; otherwise, we say that $\sigma[j]$ \emph{does not depend on} $\sigma[i]$ and write $\sigma[j] \not\dependsOn{s} \sigma[i]$.

\subsubsection{Independence.}
We define the \emph{mutual dependency} relation as the case where two transactions depend on each other in either direction:
\begin{align*}
  \sigma[i] \mutuallyDependsOn{s} \sigma[j] \iff
  \sigma[i] \dependsOn{s} \sigma[j] \ \vee\  \sigma[j] \dependsOn{s} \sigma[i]
\end{align*}
Similarly, we say that two transactions $\sigma[i]$ and $\sigma[j]$ are \emph{independent} if they are not mutually dependent, i.e.,
\begin{align*}
  \sigma[i] \independent{s} \sigma[j] \iff
  \sigma[i] \not\dependsOn{s} \sigma[j] \ \wedge\  \sigma[j] \not\dependsOn{s} \sigma[i]
\end{align*}

\begin{lemma}[Independence and disjoint memory access]
\label{lem:independence-disjoint-access}
Two transactions $\sigma[i]$ and $\sigma[j]$ are directly independent iff their external reads and writes of $\sigma[i]$ are completely disjoint from the external reads and writes of $\sigma[j]$, \ie:
\[
  \begin{gathered}
  \sigma[i] \directIndependent{s} \sigma[j] \iff \\[2pt]
    {\setlength{\arraycolsep}{1pt} 
    \begin{array}{ccc}
      \left(\externalreads(\memtrace(s, \sigma[i])) \cup \externalwrites(\memtrace(s, \sigma[i]))\right) & \cap & \\
      \left(\externalreads(\memtrace(s, \sigma[j])) \cup \externalwrites(\memtrace(s, \sigma[j]))\right) & = & \emptyset
    \end{array}}
  \end{gathered}
\]
\end{lemma}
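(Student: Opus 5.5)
The plan is to read $\sigma[i] \directIndependent{s} \sigma[j]$ as the negation of direct dependency in both directions. Direct dependency $\directDependsOn{s}$ is taken to be the combination (union) of $\directReadDependsOn{s}$ and $\directWriteDependsOn{s}$. Throughout, $k$ belongs to $\externalwrites(\cdot)$ when $k$ is in its domain. Without loss of generality $i<j$, so only $\sigma[j]\directDependsOn{s}\sigma[i]$ can hold. I will prove the two directions separately. Each direction reduces to locating a key, using only the definitions of $\externalreads$, $\externalwrites$, $\range$ and $\mostrecentwrite$. I use the fact that $\memtrace(s,\sigma)$ restricted to $\range(s,\sigma,i)$ is exactly $\memtrace(\Next(s,\sigma[1..i-1]),\sigma[i])$. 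This follows by unfolding the inductive definition of $\memtrace$ on sequences.

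($\Leftarrow$) I argue by contraposition: assume $\sigma[j]\directDependsOn{s}\sigma[i]$ and exhibit a common key.

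\emph{Write-write case.} Both transactions write some $k$. The last write to $k$ inside each range survives in the corresponding $\externalwrites$, so $k$ lies in both sets.

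\emph{Read-write case.} There is $n\in\range(s,\sigma,j)$ with $\mu[n]=\reads(k)$ and $m=\mostrecentwrite(\mu,n)\in\range(s,\sigma,i)$. First, no write to $k$ occurs in $\sigma[j]$'s range before $n$; otherwise that write would be more recent than $m$. Hence $k\in\externalreads(\memtrace(\cdot,\sigma[j]))$. Second, no write to $k$ occurs in $\sigma[i]$'s range after $m$, for the same reason. Hence $\mu[m]$ is the final write to $k$ in $\sigma[i]$, and $k\in\dom\externalwrites(\memtrace(\cdot,\sigma[i]))$.

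In both cases the intersection is nonempty.

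($\Rightarrow$) Again by contraposition: assume some $k$ lies in both unions, and case-split on where $k$ occurs.

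\emph{Case 1: $k$ is an external write of both.} Then $\sigma[j]\directWriteDependsOn{s}\sigma[i]$ holds immediately.

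\emph{Case 2: $k$ is an external write of $\sigma[i]$ and an external read of $\sigma[j]$.} Take the first external read of $k$ in $\sigma[j]$. I show that its $\mostrecentwrite$ lies in $\range(s,\sigma,i)$, which gives $\sigma[j]\directReadDependsOn{s}\sigma[i]$. For this I must rule out an intermediate transaction $\sigma[l]$, with $i<l<j$, that writes $k$. I plan to handle this by induction on $j-i$ over the positions of writers of $k$: if such an $l$ exists, then $\sigma[l]$ and $\sigma[j]$ are in Case 2, and $\sigma[i]$ and $\sigma[l]$ are in Case 1. This yields dependency through $\sigma[l]$.

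\emph{Case 3: $k$ is an external read of $\sigma[i]$ and an external write of $\sigma[j]$.} This is an anti-dependency.

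\emph{Case 4: $k$ is an external read of both.} Here there is no write at all.

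I expect Cases 3 and 4 to be the main obstacle. Neither $\directReadDependsOn{s}$ nor $\directWriteDependsOn{s}$, as defined, registers a link in these cases. Case 2 with an intermediate writer also only yields a dependency through $\sigma[l]$, via the transitive closure, not a direct one.

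To close these gaps I would first try to show that the intended notion of ``directly independent'' in the statement is the conflict-free one. Concretely, the union on each side of the intersection is to be read as the set of keys on which a $\writes$ by one transaction can interact with the other. Under that reading, the read-only overlap of Case 4 is excluded from the intersection. Case 3 is then discharged by the symmetric argument of Case 2 with the roles of $i$ and $j$ exchanged, since the statement is symmetric in $i$ and $j$. If that reading cannot be justified from the text, the $\Rightarrow$ direction as stated holds exactly for pairs without read-only shared keys. In that event I would state this restriction explicitly at this step rather than conceal it.
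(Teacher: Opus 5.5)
Your ($\Leftarrow$) direction is sound, and more careful than the paper's, which simply asserts it. Because the ranges are consecutive disjoint intervals, $m=\mostrecentwrite(\mu,n)\in\range(s,\sigma,i)$ rules out any write to $k$ after $m$ inside $\sigma[i]$ and any write to $k$ before $n$ inside $\sigma[j]$. That is exactly what puts $k$ in $\dom\externalwrites$ of the one and in $\externalreads$ of the other. You should state explicitly that you read $\memtrace(s,\sigma[i])$ in the lemma as the in-context trace $\memtrace(\Next(s,\sigma[1..i-1]),\sigma[i])$. With the literal initial state $s$, even this direction can fail, because transactions preceding $\sigma[j]$ may change which keys it accesses.

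The gap is ($\Rightarrow$), and it cannot be closed: that direction is false for the relations the paper defines. You correctly spot Case 4 and the intermediate writer in Case 2, but both of your repairs fail.

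\emph{Case 3 is not the mirror of Case 2.} $\directReadDependsOn{s}$ is defined through $\mostrecentwrite$, which only looks at earlier indices. So for $i<j$ the relation $\sigma[i]\directReadDependsOn{s}\sigma[j]$ can never hold, and a write-after-read is invisible to $\directDependsOn{s}$. The symmetry of the right-hand side does not transfer to the left. Concretely, take $\sigma=t_1.t_2$ with memory traces $\reads(k)$ and $\writes(k,v)$. The two transactions are directly independent, yet they share $k$.

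\emph{Your fallback is also false.} The claim that ($\Rightarrow$) ``holds exactly for pairs without read-only shared keys'' fails on that example, which has no read-only shared key. It also fails on a sequence with traces $\writes(k,v_1)$, $\writes(k,v_2)$, $\reads(k)$: for the pair $(\sigma[1],\sigma[3])$ your induction gives only $\sigma[3]\dependsOn{s}\sigma[1]$, not a direct dependency.

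The paper's proof does not supply a missing idea here. It gets ($\Rightarrow$) ``by definition'' by silently reading direct independence as ``neither transaction reads or writes a location written by the other''. That is not what $\directReadDependsOn{s}$ and $\directWriteDependsOn{s}$ say, and even under that reading a key read by both transactions stays in the intersection. You have two sound options. One is to prove only ($\Leftarrow$). The other is to \emph{define} direct independence as symmetric conflict-freedom on read/write sets and drop read--read overlap from the right-hand side. In that case the equivalence is definitional, not something you can show to be ``intended'' from the text.
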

\begin{proof}
By definition, $\sigma[j] \directIndependent{s} \sigma[i]$ means $\sigma[j]$ does not read or write to any location written by $\sigma[i]$. For direct independence, both $\sigma[j] \directIndependent{s} \sigma[i]$ and $\sigma[i] \directIndependent{s} \sigma[j]$ must hold, which requires that external reads and writes are completely disjoint in both directions.

Conversely, if external reads and writes are completely disjoint, then $\sigma[j] \not\directReadDependsOn{s} \sigma[i]$, $\sigma[i] \not\directReadDependsOn{s} \sigma[j]$, $\sigma[j] \not\directWriteDependsOn{s} \sigma[i]$, and $\sigma[i] \not\directWriteDependsOn{s} \sigma[j]$, establishing direct independence. 
\end{proof}

Intuitively, we can extend this to sequences. I.e., instead of $\sigma[i]$ being a transaction, we may also replace it by an entire sequence $\sigma'$.

\subsubsection{Dependency graph.}
The graph of the dependency relation in state $s$ for sequence of transactions $\sigma$ is the
\emph{dependency graph} $\calD(s, \sigma)$, a directed acyclic graph (DAG), with set of nodes $\sigma[j], 1 \leq j \leq |\sigma|$, and the set of directed edges defined by the pairs $(\sigma[j], \sigma[i])$ such that $\sigma[j] \directReadDependsOn{s} \sigma[i]$.
\medskip

From the definition of the dependency graph, it follows that $\sigma[j] \dependsOn{s} \sigma[i]$ if and only if there is a path from $\sigma[j]$ to $\sigma[i]$ in the dependency graph $\calD(s, \sigma)$.
For example, consider the sequence of transactions $\sigma' = t_1.t_2.t_3.t_4$, from Table~\ref{tab:transaction-set}.
The DAG is depicted in \figref{fig:dependencies}, left.
From state $s_0$, this sequence of transactions produces the execution:
\[
  s_0 \xrightarrow{\ t_1/\alpha_1\ } s_1 \xrightarrow{\ t_2/\alpha_2\ } s_2 \xrightarrow{\ t_3/\alpha_3\ } s_3 \xrightarrow{\ t_4/\alpha_4\ } s_4
\]
In this example, $t_2$ does not depend on any other transaction and is an end node. However, $t_3 \directDependsOn{s} t_1$ because $t_3$ reads from location $20$ which was last written by $t_1$. Similarly, $t_4 \directDependsOn{s} t_3$ as it writes to location $60$ which was last written by $t_3$. Also, $t_4 \directDependsOn{s} t_2$ as it reads from location $50$ which was last written by $t_2$. 

\begin{figure}[htbp]
  \centering
  \includegraphics[width=0.49\textwidth]{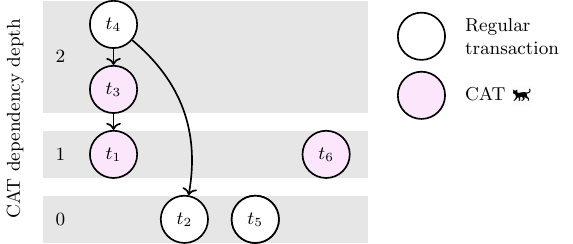}
  \caption{\small Dependencies between transactions in $\sigma$ displayed in a dependency graph $\calD$ if all transactions are successful. If $\sigma[j] \directDependsOn{s} \sigma[i]$ there is an edge from $\sigma[j]$ to $\sigma[i]$. Transactions $t_1$, $t_3$, and $t_6$ are each part of (separate) CATs and increment the CAT dependency depths. For transaction details, see Table~\ref{tab:transaction-set}.}
  \label{fig:dependencies}
\end{figure}

\subsubsection{CAT dependency depth.}
In a cross-chain transaction setting, whether any part of a \cat is committed or not depends on the successful computation of the other parts of the \cat (on the other chains).
For regular transactions on a single chain, this is not a problem, as in the description of our model, the other chain executes a $\Skip$ which is always successful. Hence, the result of the computation can be committed straightaway.\footnote{Unless it depends on a CAT, in which case it is added to a pending list.}
In contrast, for a \cat $T=(t_1, t_2)$ we have to wait before committing the results of the involved transactions.
This means that if there is a \cat in the sequence $\sigma$, we can execute all the regular transactions before the \cat, but the transactions after the \cat may be delayed and postponed to the next block or beyond.
However, if we can determine that some transactions after the \cat are independent of the \cat, we can execute them. This has similarity to BlockSTM, where independent transactions are computed in parallel; however, it goes beyond this by allowing to execute transactions out of order.

\vspace{0.3cm}

Let us identify how we determine whether a transaction is independent of a \cat.
Assume we have on Chain~1 the above CAT $T$, a sequence of regular transactions $\sigma$ and a state $s$.
There are two possible scenarios for $T$: \success or \failure.
We, thus, must consider two sequences of transactions: one in which $T$ succeeds, resulting in sequence $\sigma_\success=t_1.\sigma$, and one in which $T$ fails, resulting in sequence $\sigma_\failure=\Skip.\sigma$.
As one may expect, as the number of CATs increases, this could result in an exponentially growing number of scenarios.
Efficient algorithms for managing multiple conflicting states have been presented, for example, in~\cite{müller2023realitybasedutxoledger}. However, it remains important to limit the number of scenarios. A useful metric that can be utilized to limit the number of branching scenarios is to limit the number of CATs that can be pending and dependent on each other. 

\vspace{0.3cm}

We define the \emph{CAT dependency depth} $\mathsf{depth}_{s,\sigma}$ of a transaction $\sigma[i] \in \sigma$ on $s$ recursively as follows:
Let $D_i = \{\, j < i \mid \sigma[i] \mutuallyDependsOn{s} \sigma[j] \,\}$ be the set of prior transactions in $\sigma$ that $\sigma[i]$ mutually depends on with respect to $s$. Then:
\[
\begin{gathered}
\mathsf{depth}_{s,\sigma}(\sigma[i]) = \\[2pt]
\max\left( \{\, \mathsf{depth}_{s,\sigma}(\sigma[j]) \mid j \in D_i \,\} \cup \{0\} \right) + 
\begin{cases}
1 & \sigma[i] \text{ is CAT}, \\
0 & \text{otherwise}.
\end{cases}
\end{gathered}
\]

\noindent In \figref{fig:dependencies}, we show the CAT dependency depths for the transactions in Table~\ref{tab:transaction-set}.

\subsection{Change Sets and External Writes}
\label{sec:change-sets-external-writes}

In the semantics of transactions, there should be a natural correspondence between the change set of a transaction and the external writes of its memory trace.
The change set of a transaction $t$ describes the updates \emph{after} the transaction has been executed, whereas the external writes of the memory trace of $t$ describe the memory locations that are written to by $t$.
The resulting sets are identical.

\begin{observation}[Change sets are external writes]
  \label{thm:change-sets-external-writes}
  For any state $s$ and a transaction $t$,
  \[
    \changeset(s, t) = \externalwrites(\memtrace(s, t)) 
  \]
\end{observation}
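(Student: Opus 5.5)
The plan is to make explicit the operational semantics that the paper leaves implicit. Executing a transaction $t$ from $s$ is a run of the VM that issues the operations of $\memtrace(s,t)$ one at a time against a \emph{write buffer} $w$, a partial map $\calK \rightharpoonup \calV$ that starts as $\emptymap$. A $\reads(k)$ returns $w(k)$ if $k \in \dom(w)$ and $s(k)$ otherwise. A $\writes(k,v)$ replaces $w$ by $w \oplus [k \mapsto v]$. The change set is then, by definition of the execution model, the final buffer. So $\changeset(s,t)$ is the map the VM produces, and we must show that this map coincides with $\externalwrites(\memtrace(s,t))$ as computed syntactically from the trace.

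The key step is a prefix induction on $\mu = \memtrace(s,t)$. Let $w_n$ be the buffer after the first $n$ operations. The invariant is $w_n = \externalwrites(\mu[1..n])$, with $w_0 = \emptymap = \externalwrites(\varepsilon)$.

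For the inductive step there are two cases.
\begin{itemize}
\item If $\mu[n+1] = \reads(k)$, the buffer is unchanged. The set of last writes per location is also unchanged, since the definition of $\externalwrites$ only looks at $\writes$ operations.
\item If $\mu[n+1] = \writes(k,v)$, the new buffer is $w_n \oplus [k \mapsto v]$. On the trace side, position $n+1$ now becomes the last write to $k$, so the entry for $k$ becomes $v$. Every other location keeps its last write, because the condition ``$\forall j > i : \mu[j] \neq \writes(k',\cdot)$'' is unaffected for $k' \neq k$.
\end{itemize}
Both sides therefore agree, and taking $n = |\mu|$ gives the claimed equality. As a corollary, $\Next(s,t) = s \oplus \externalwrites(\memtrace(s,t))$.

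The main obstacle is consistency with the failure and $\Skip$ requirements, $\changeset(s,t) = \emptymap$ whenever the status is $\failure$. For $\Skip$ there is nothing to do: its trace is empty, so both sides are $\emptymap$. For a failing $t$, the VM may have issued writes before aborting, and those would appear in $\externalwrites$ of the raw trace. I would resolve this by adopting the convention, consistent with the requirement that failure has no effect, that $\memtrace(s,t)$ records only the committed execution. On abort the buffer is discarded and the recorded trace retains no writes (at most reads, e.g.\ a balance check). Under this convention both sides are $\emptymap$ in the failure case, and the observation holds for every $s$ and $t$.

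I would also remark that the statement cannot be derived from $\Next$ alone. Equality $s \oplus m = s \oplus m'$ does not force $m = m'$: for instance, a write of a value already present in $s$ changes nothing in the state. This is why the argument has to go through the buffer semantics rather than through state equality.
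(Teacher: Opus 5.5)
The paper gives no proof of this statement. It introduces it as an \emph{observation} about how the semantics ought to behave, supported only by the informal remark that the change set and the external writes of the trace ``are identical.'' In the paper it therefore acts as a modeling axiom linking two primitives, $\changeset$ and $\memtrace$, that are otherwise introduced independently.

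You take a different route. You fix a concrete write-buffer semantics, define $\changeset(s,t)$ as the final buffer, and prove the equality by prefix induction. The induction is sound: a read leaves both sides unchanged, and a write $\writes(k,v)$ at position $n+1$ gives $\externalwrites(\mu[1..n+1]) = \externalwrites(\mu[1..n]) \oplus [k \mapsto v]$. It is a faithful formalization of the paper's one-line intuition.

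What your approach buys is a consistency check on the model. It shows the axiom can actually be realized. It also exposes a constraint the paper leaves implicit: combined with the requirement $\changeset(s,t) = \emptymap$ for failing transitions, the observation forces traces of failed transactions to contain no surviving writes. Your abort convention makes this explicit.

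What it costs is generality. The buffer definition of $\changeset$, the empty trace of \Skip{}, and the abort convention are assumptions you add, not consequences of the paper's definitions. As you correctly note, those definitions cannot yield the statement on their own. Your argument establishes the observation for one natural execution model, whereas the paper's postulate covers any model that satisfies it.
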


Symmetrically, the semantics of a transaction can only depend on the memory it reads during the execution and is not impacted by memory locations that are not read.
We denote $s \equiv_K s'$ if $\forall k \in K, s'[k] = s[k]$ \ie the states $s$ and $s'$ agree on the values of the keys in $K$.
\begin{observation}[State equivalence and external reads]
  \label{thm:state-equivalence-external-reads}
  Let $s$ be a state and $t$ a transaction.  Let $K = \externalreads(\memtrace(s, t))$.
  For any state $s' \equiv_K s$ we have:
  \begin{equation}
    \memtrace(s', t) = \memtrace(s, t)
  \end{equation}
  Combined with Observation~\ref{thm:change-sets-external-writes}, we have:
  \begin{equation}
    \changeset(s', t) = \changeset(s, t)
  \end{equation}
\end{observation}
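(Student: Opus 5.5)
The statement to prove is Observation~\ref{thm:state-equivalence-external-reads}. I would argue by induction along the memory trace. The key modelling assumption, implicit in the transaction model, is that execution is deterministic. Each next memory operation emitted by $t$, and any value it writes, is a function only of the values returned by the reads performed so far. Reads return the current value of the location, which is either the value of the most recent write to it in the trace or, if there is none, the value in the starting state. Under this assumption the execution of $t$ from $s$ and from $s'$ can be run in lockstep.

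Write $\mu = \memtrace(s,t)$ and $\mu' = \memtrace(s',t)$. I would prove by induction on $n$ that the prefixes $\mu[1..n]$ and $\mu'[1..n]$ coincide, and that the values returned by every read in these prefixes coincide. The base case is the empty prefix. For the step, assume the prefixes of length $n$ agree, together with all values observed so far. By determinism, the $(n+1)$-th operation is then the same in both runs, or both runs terminate, and in either case any written value is the same.

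If the $(n+1)$-th operation is $\reads(k)$, there are two cases. First, some $j \le n$ has $\mu[j] = \writes(k,v)$. Then both runs read the value of the most recent such write, and this value is identical in the two runs because the prefixes agree. Second, there is no earlier write to $k$. Then by definition $k \in \externalreads(\mu) = K$, so the read returns $s[k] = s'[k]$ by the assumption $s' \equiv_K s$. In both cases the observed values agree. Since both traces are finite and agree at every step, they stop at the same length, which gives $\memtrace(s',t) = \memtrace(s,t)$.

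The change-set equality then follows directly from Observation~\ref{thm:change-sets-external-writes}:
\[
\changeset(s',t) = \externalwrites(\memtrace(s',t)) = \externalwrites(\memtrace(s,t)) = \changeset(s,t).
\]

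The main obstacle is that the paper never formally states the determinism and read-semantics assumption: that a transaction's behaviour is a function of the values it reads, and that a read returns the last written value or else the state value. I would make this explicit as the operational semantics underlying $\memtrace$ and then carry out the induction above. Without it the Observation cannot be derived and must be taken as a postulate of the model.
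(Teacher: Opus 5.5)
Your argument is correct, but it differs from the paper, which gives no proof of this Observation. The paper states it as a modelling principle, justified only by the preceding sentence that ``the semantics of a transaction can only depend on the memory it reads during the execution.'' Apart from that, $\memtrace$ and $\changeset$ stay abstract functions, and the change-set equality rests on Observation~\ref{thm:change-sets-external-writes}, which is itself postulated.

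You instead make that informal sentence precise: the next operation, any written value, and termination are functions of the values read so far, and a read returns the last value written earlier in the trace, otherwise the start-state value. You then derive trace equality by a lockstep induction. You handle the two delicate points correctly:
\begin{itemize}
\item Since $\reads(k)$ records only the location and not the value, you carry equality of the read values in the induction hypothesis.
\item In the external-read case, you use that the common $(n{+}1)$-th operation is $\mu[n{+}1]$ of the run from $s$. Hence $k \in \externalreads(\memtrace(s,t)) = K$, so the run from $s'$ never consults $s'$ outside $K$.
\end{itemize}

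Your route turns the Observation into a theorem about a concrete execution model. It also makes explicit which assumptions Lemma~\ref{thm:parallel-computation-independent-transactions} ultimately rests on. The paper's route is more general, since it constrains $\memtrace$ and $\changeset$ directly without committing to any particular operational semantics. Your closing remark, that without such an assumption the statement must be postulated, is exactly the stance the paper takes.
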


\subsection{Independent transactions and change sets}
\label{sec:independent-transactions-change-sets}

If a transaction $\sigma[j] \not\dependsOn{s} \sigma[i]$ (see \secref{sec:dependencies}) with respect to a state $s$, then the change sets of the two transactions can be \emph{computed in parallel}. 
If two transactions are independent (i.e., $\sigma[i] \independent{s} \sigma[j]$, see \secref{sec:dependencies}) with respect to a state $s$, then the change sets of the two transactions can also be \emph{applied out of order}.

\begin{mdframed}[backgroundcolor=gray!10, linewidth=0.5pt]
  In BlockSTM~\cite{Gelashvili2023}, the term \emph{execution of incarnations} is used for what we call \emph{computed}. In literature, execution is frequently considered to be a successful application of a transaction to a state. Hence, we choose to say a transaction is \emph{executed} if it is computed, successfully validated, and applied to the state.
\end{mdframed}

\begin{lemma}[Parallel computation of non-dependent transactions]
  \label{thm:parallel-computation-independent-transactions}
  Let $\sigma = t_1.t_2$ be a sequence of two transactions and $s_0$ a state.
  Assume $t_2 \not\directDependsOn{s_0} t_1$ and the execution of $\sigma$ from $s$ produces the execution $s_0 \xrightarrow{\ t_1\ } s_1 \xrightarrow{\ t_2\ } s_2$.
  Then
  \begin{equation*}
    \changeset(s_1, t_2) = \changeset(s_0, t_2)
  \end{equation*}
\end{lemma}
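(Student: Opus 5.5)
The plan is to reduce the statement to Observation~\ref{thm:state-equivalence-external-reads} (state equivalence and external reads), with $s_1$ as the reference state and $s_0$ as the perturbed state. Let $\mu = \memtrace(s_0, \sigma) = \memtrace(s_0, t_1).\memtrace(s_1, t_2)$. The trace of $t_2$ that appears inside $\mu$ is computed from $s_1$, not $s_0$. So the non-dependency hypothesis gives direct information about $K' = \externalreads(\memtrace(s_1, t_2))$, and I will compare states on $K'$.

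First, I will read the hypothesis $t_2 \not\directDependsOn{s_0} t_1$ as saying in particular that $t_2 \not\directReadDependsOn{s_0} t_1$. This is the intended meaning if $\directDependsOn{s}$ is the union of the read-write and write-write relations; the displayed $\wedge$ in the definition looks like a typo for $\vee$, and I will state that explicitly. Unfolding the definition, no index $n \in \range(s_0,\sigma,2)$ with $\mu[n] = \reads(k)$ has $\mostrecentwrite(\mu,n) \in \range(s_0,\sigma,1)$.

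Second, I will show that $K' \cap \dom(\externalwrites(\memtrace(s_0,t_1))) = \emptyset$. Take $k \in K'$, witnessed by a read at position $n$ in $t_2$'s range with no earlier write to $k$ inside $t_2$'s range. If $t_1$ wrote $k$ at all, then the last write to $k$ before $n$ in $\mu$ lies in $t_1$'s range. That contradicts the first step. Hence $t_1$ never writes $k$, so $k$ is not in the domain of its external writes.

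Third, by Observation~\ref{thm:change-sets-external-writes}, $s_1 = \Next(s_0,t_1) = s_0 \oplus \externalwrites(\memtrace(s_0,t_1))$. So $s_1$ and $s_0$ differ only on keys written by $t_1$, and therefore $s_0 \equiv_{K'} s_1$. Applying Observation~\ref{thm:state-equivalence-external-reads} with base state $s_1$, transaction $t_2$ and $s' = s_0$ yields $\memtrace(s_0,t_2) = \memtrace(s_1,t_2)$ and $\changeset(s_0,t_2) = \changeset(s_1,t_2)$, which is the claim.

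The main obstacle is this choice of direction. The tempting route is to take $K = \externalreads(\memtrace(s_0,t_2))$, but the dependency hypothesis says nothing directly about $t_2$'s trace from $s_0$, which leads to a circular argument. Anchoring Observation~\ref{thm:state-equivalence-external-reads} at $s_1$ avoids this. The other delicate point is handling the $\wedge$ in the definition of $\directDependsOn{s}$. Taken literally it would make the hypothesis too weak, since $t_2$ could then read a location written by $t_1$ provided it does not also write a location $t_1$ writes, and the lemma could fail. So the read-write reading is essential.
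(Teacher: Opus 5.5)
Your proposal is correct and follows the paper's own proof. The paper likewise sets $K_2 = \externalreads(\memtrace(s_1,t_2))$, uses the non-dependency hypothesis to show $t_1$ writes no key of $K_2$, derives $s_0 \equiv_{K_2} s_1$ via Observation~\ref{thm:change-sets-external-writes}, and applies Observation~\ref{thm:state-equivalence-external-reads} anchored at $s_1$; your reading of $\directDependsOn{s}$ as a disjunction is what the paper's proof uses implicitly when it infers $t_2 \directDependsOn{s_0} t_1$ from $t_1$ writing a key that $t_2$ externally reads.
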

\begin{proof}
  Let $\mu_1 = \memtrace(s_0, t_1)$ and $\mu_2 = \memtrace(s_1, t_2)$ denote the memory traces of $t_1$ and $t_2$ from the initial state $s_0$ and the state $s_1$, respectively.
  Let $K_2 = \externalreads(\mu_2)$.
  If $K_2 = \varnothing$, then $s_0 \equiv_{K_2} s_1$ and by Observation~\ref{thm:state-equivalence-external-reads}, we have $\changeset(s_0, t_2) = \changeset(s_1, t_2)$.
  Otherwise let $k \in K_2$. Then $k$ cannot be written to in $\mu_1$. The proof is by contradiction. Assume $\writes(k, v) \in \externalwrites(\mu_1)$. Then $t_2 \directDependsOn{s_0} t_1$ which is a contradiction.
  As a result, $\writes(k, v) \not\in \externalwrites(\mu_1)$, by Observation~\ref{thm:change-sets-external-writes} $\writes(k, v) \not\in \changeset(s_0, t_1)$ and, therefore, $s_0[k] = s_1[k]$ by definition of $\Next$ and $s_1 = \Next(s_0, t_1) \oplus \changeset(s_0, t_1)$.
  This implies $s_0 \equiv_{K_2} s_1$, and by Observation~\ref{thm:state-equivalence-external-reads} we have $\changeset(s_0, t_2) = \changeset(s_1, t_2)$. 
\end{proof}

In BlockSTM~\cite{Gelashvili2023}, the change sets of transactions are computed optimistically in parallel. 
Once the change sets are computed, they await to be applied to the source state $s$ to obtain the next state $s'$. 
In order to be applied, they must pass a validation step. 
This validation step ensures that the order of transactions is respected. 
For a transaction for which it is later discovered that it has dependencies on other transactions, the change set is not applied and the transaction has to be re-computed.

In the example of \figref{fig:dependencies}, this would imply that the transactions $t_1$ and $t_2$ can be successfully computed in parallel because they do not depend on each other.
The transaction $t_3$ depends on $t_1$ and must be re-computed (if it would be computed in parallel).
The transaction $t_4$ also depends on both $t_1$ and $t_2$.

We now take this approach one step further. In our model, we permit to apply transactions out of order, i.e., permit to change the order of transactions if the final state remains the same. An application of this, as we will see in \secref{sec:naive-cat-protocol}, is that we can assign transactions to different output streams.

\begin{lemma}[Out-of-order execution of independent transactions]
  \label{lemma:out-of-order-execution-independent-transactions}
  Let $\sigma = t_1.t_2$ be a sequence of two transactions and $s_0$ a state.
  Assume $t_2 \directIndependent{s_0} t_1$ and the execution of $\sigma$ from $s_0$ produces the execution $s_0 \xrightarrow{\ t_1\ } s_1 \xrightarrow{\ t_2\ } s_2$.
  Then
  \begin{equation*}
    \Next(s_0, t_1.t_2) = \Next(s_0, t_2.t_1)
  \end{equation*}
\end{lemma}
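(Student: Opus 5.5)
The plan is to compute both sides explicitly as updates of $s_0$ by change sets, and then show that the two change sets commute under $\oplus$. By definition of $\Next$,
\[
\Next(s_0, t_1.t_2) = (s_0 \oplus \changeset(s_0,t_1)) \oplus \changeset(s_1,t_2),
\]
and, writing $s_1' = \Next(s_0,t_2)$,
\[
\Next(s_0, t_2.t_1) = (s_0 \oplus \changeset(s_0,t_2)) \oplus \changeset(s_1',t_1).
\]

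\textbf{Step 1: reduce both change sets to their values from $s_0$.} We read the hypothesis $t_2 \directIndependent{s_0} t_1$ as giving no direct dependency in either direction. In particular $t_2 \not\directDependsOn{s_0} t_1$, so Lemma~\ref{thm:parallel-computation-independent-transactions} gives $\changeset(s_1,t_2)=\changeset(s_0,t_2)$.

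For the reversed order we need the symmetric statement $\changeset(s_1',t_1)=\changeset(s_0,t_1)$. Lemma~\ref{thm:parallel-computation-independent-transactions} is stated for the sequence $t_1.t_2$, so it does not apply directly. We will instead repeat its argument using Lemma~\ref{lem:independence-disjoint-access}, which says the external reads and writes of $t_1$ and $t_2$ (each computed from $s_0$) are disjoint.
\begin{itemize}
\item Let $K_1=\externalreads(\memtrace(s_0,t_1))$. No key of $K_1$ is in $\dom(\changeset(s_0,t_2))$, which equals $\dom(\externalwrites(\memtrace(s_0,t_2)))$ by Observation~\ref{thm:change-sets-external-writes}.
\item Hence $s_1' \equiv_{K_1} s_0$.
\item Observation~\ref{thm:state-equivalence-external-reads} then yields $\changeset(s_1',t_1)=\changeset(s_0,t_1)$.
\end{itemize}

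\textbf{Step 2: commute the two updates.} Write $c_1=\changeset(s_0,t_1)$ and $c_2=\changeset(s_0,t_2)$. The goal becomes $(s_0\oplus c_1)\oplus c_2=(s_0\oplus c_2)\oplus c_1$.

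By Lemma~\ref{lem:independence-disjoint-access} and Observation~\ref{thm:change-sets-external-writes}, $\dom(c_1)\cap\dom(c_2)=\emptyset$. We then check the equality pointwise on a key $k$ by cases:
\begin{itemize}
\item if $k\in\dom(c_2)$, both sides give $c_2(k)$;
\item if $k\in\dom(c_1)$, both sides give $c_1(k)$;
\item otherwise both sides give $s_0(k)$.
\end{itemize}
By disjointness these cases are exhaustive and non-overlapping, which finishes the proof.

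\textbf{Main obstacle.} The delicate point is Step~1 for the reversed order. It requires treating direct independence as two-sided and as implying disjointness of external sets, which is the content of Lemma~\ref{lem:independence-disjoint-access}. It also requires care that the relevant memory traces are all taken from $s_0$. The case $K_1=\emptyset$ is handled trivially, exactly as in the proof of Lemma~\ref{thm:parallel-computation-independent-transactions}.
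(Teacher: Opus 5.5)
Your proof is correct and follows the same route as the paper's: rewrite both orders as updates of $s_0$ by $\changeset(s_0,t_1)$ and $\changeset(s_0,t_2)$, then commute the two updates. Your version is more careful than the paper's in two places. The paper gets the reversed-order step by a ``similarly'' appeal to Lemma~\ref{thm:parallel-computation-independent-transactions}, which is stated only for the order $t_1.t_2$, and it merely asserts that the updates commute; you justify both steps via Lemma~\ref{lem:independence-disjoint-access}, including the disjointness of the two change-set domains that $\oplus$ needs in order to commute.
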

\begin{proof}
By definition of independence, we have both $t_2 \not\directDependsOn{s_0} t_1$ and $t_1 \not\directDependsOn{s_0} t_2$.

From Lemma~\ref{thm:parallel-computation-independent-transactions}, since $t_2 \not\directDependsOn{s_0} t_1$, we have $\changeset(s_1, t_2) = \changeset(s_0, t_2)$.
Similarly, since $t_1 \not\directDependsOn{s_0} t_2$, we have $\changeset(s_0, t_1) = \changeset(\Next(s_0, t_2), t_1)$.
Now we can compute both execution orders:

Order 1: $s_0 \xrightarrow{\ t_1\ } s_1 \xrightarrow{\ t_2\ } s_2$
\[
\begin{gathered}
\setlength{\arraycolsep}{1pt}
\begin{array}{cll}
s_2 & = & s_1 \oplus \changeset(s_1, t_2) \\
& = & s_0 \oplus \changeset(s_0, t_1) \oplus \changeset(s_0, t_2)
\end{array}
\end{gathered}
\]

Order 2: $s_0 \xrightarrow{\ t_2\ } s_1' \xrightarrow{\ t_1\ } s_2'$
\[
\begin{gathered}
\setlength{\arraycolsep}{1pt}
\begin{array}{cll}
s_2' & = & s_1' \oplus \changeset(s_1', t_1) \\
& = & s_0 \oplus \changeset(s_0, t_2) \oplus \changeset(s_0, t_1)
\end{array}
\end{gathered}
\]

Since the change sets are the same in both orders and the two ChangeSet operations are commutative, we have $s_2 = s_2'$, which means $\Next(s_0, t_1.t_2) = \Next(s_0, t_2.t_1)$. 
\end{proof}

\medskip

\section{Cross-chain atomic transactions protocol}
\label{sec:protocol}

In this section, we describe the \cat{} protocol. First, we define the requirements for a \cat{} protocol in \secref{sec:requirements-cat}. We then identify the components that are required for a minimal version of the \cat{} protocol in \secref{sec:components}. In \secref{sec:analysis-protocol-correctness}, we address safety and liveness, as well as non-blocking properties.

\subsection{Requirements}
\label{sec:requirements-cat}

We clarify the requirements for a \cat{} protocol to deliver correct behavior for \cat{}s.
This involves identifying the conditions under which transactions should succeed or fail. Our goal is to describe the operational behavior informally and to move toward a precise problem formulation that can guide protocol design and correctness reasoning.

\medskip

\noindent An informal definition of the requirements for a \cat{} protocol is then as follows:
\begin{itemize}
  \item \textbf{Simulation and Proposals.} Each chain reports (\ie proposes) the best effort status of its part of a \cat{}. 
  A best effort status means the chain \textit{simulates} the execution of the transaction for a given state and reports whether the transaction would have \success or \failure, without actually committing any state changes to persistent storage. This allows chains to determine the transaction's status without prematurely modifying their permanent state before consensus is reached. 
  \cite{Cai2024cats-through-state-layers} introduces the term \emph{dirty state layer} to describe the non-permanent storage that can be reverted once the status is determined. We propose a similar data storage layer; however, we require that it supports nesting. Thus, our data storage layer may support several instances—one for each possibility of a \cat{}'s status, see \eg \cite{müller2023realitybasedutxoledger}.
  \item \textbf{Coordination.} The protocol ensures coordination on the status of the \cat{}s. 
  \item \textbf{Confirmation.} The protocol ensures consensus on the success or failure of all \cat{}s:
    \begin{itemize}
      \item \success. A \cat should succeed if and only if all involved chains report successful execution of their respective parts and consensus is reached on the \status (before any potential timeout).
      \item \failure. A \cat should fail in all other cases: if any of the involved chains reports failed execution of their respective parts, or if consensus cannot be reached on the status before the timeout expires, then consensus must be reached on \failure.
    \end{itemize}
\end{itemize}

On a fundamental level, our \cat{} protocol can be viewed in the context of Byzantine State Machine Replication~\cite{lamport1998parttime,castro1999practical}, where the replicated state spans multiple chains and agreement is required on the \success or \failure of coordinated atomic transactions. In this setting, we adopt the following definitions:\footnote{We note that, \eg \cite{Cai2024cats-through-state-layers} uses the terms \emph{authenticity} for verifiability in the context of cross-chain transaction statuses, and \emph{reliability} and \emph{termination} for eventual delivery and commitment of cross-chain transaction statuses (or results more generally). These terms are similar to our definitions, but we use the terms \emph{safety} and \emph{liveness}, as they provide better intuition for the requirements.}

\begin{enumerate}
  \item \textbf{Safety:} If two honest chains decide on the \status of a given \cat{} $T$, they must decide identically. That is, if one chain confirms \success for $T$, no other honest chain may confirm \failure for $T$ or remain undecided.
  \item \textbf{Liveness:} If the confirmation layer operates correctly and a sequencer submits a well-formed \cat{} $T$ to the confirmation layer, then the protocol must eventually decide \success or \failure of $T$ and deliver that decision to the chain.
  \item \textbf{Non-blocking (optional):} Transactions that do not depend on the outcome of a pending \cat{} must be able to proceed without delay. This ensures that CAT-related coordination does not unnecessarily penalize independent execution.
\end{enumerate}

\noindent Cross-chain atomic commitment is also a special case of the well-studied Non-Blocking Atomic Commitment (NBAC) problem~\cite{guerraoui2002nbac}, which addresses crash-tolerant consensus among processes that must all commit or all abort. However, NBAC is defined for the crash fault model and requires a \emph{perfect failure detector}---an oracle that detects crashed participants so that the protocol can abort rather than block indefinitely. Our protocol operates in a Byzantine setting, where such a detector is unavailable; instead, the CL timeout mechanism serves this role: if a chain's executor does not submit its proposal within the CAT lifetime, the coordinator treats it as failed and aborts the CAT. Our requirements map directly to NBAC properties: Safety to \emph{Agreement}, Liveness to \emph{Termination}, the coordinator's commit logic to \emph{Commit-Validity}, and the CL timeout-triggered abort to \emph{Abort-Validity}.

\subsection{Atomicity Definition}
\label{sec:atomicity-definition}

Before describing the protocol components, we formally define what it means for a cross-chain transaction to be atomic. This definition will be used throughout our analysis in \secref{sec:analysis-protocol-correctness} to establish correctness properties.

\begin{definition}[Cross-Chain Atomicity]
\label{def:cross-chain-atomicity}
A cross-chain transaction $T = (t_1, t_2, \ldots, t_k)$ is atomic if and only if:
\[
\forall i,j \in [1,k]: \quad \status(t_i) = \status(t_j)
\]
where $\status(t_i) \in \{\textsf{success}, \textsf{failure}\}$ denotes the execution status of transaction $t_i$ on chain $C_i$.
That is, either all transactions in $T$ succeed, or all fail. No partial execution is allowed.
\end{definition}

This definition ensures that the fundamental property of atomicity is maintained across all chains involved in a CAT. The protocol must guarantee that this property holds regardless of network conditions, Byzantine behavior, or component failures.

\subsection{Components}
\label{sec:components}

We first identify the components that are required for the \cat{} protocol. We show an overview of the components that are involved in \figref{fig:high-level-complete-system}. We note that this is a minimal set of components, some of which may be merged, such as the executor and the resolver.

\vspace{0.3cm}

\begin{figure}[htbp]
  \centering
  \includegraphics[width=0.49\textwidth]{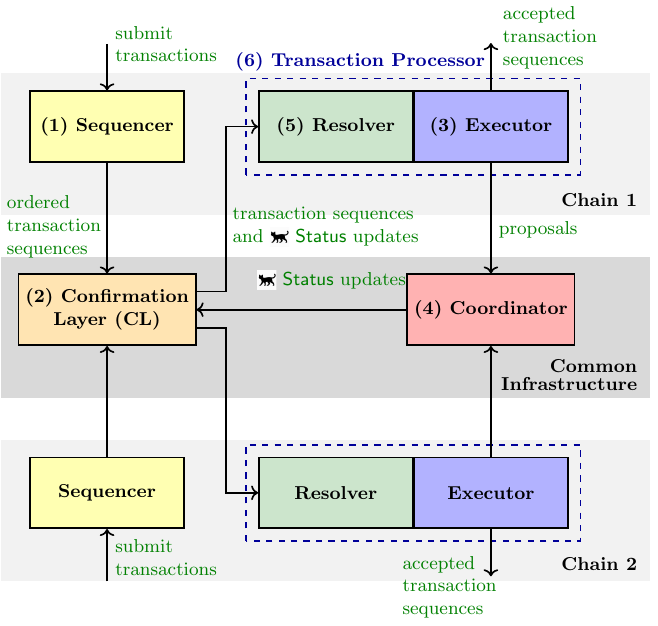}
  \caption{High-level overview of the involved system components in a \cat{} protocol with two chains.}
  \label{fig:high-level-complete-system}
\end{figure}

\subsubsection{Sequencer}
In order to submit a \cat{}, the user passes a transaction to the sequencer of a chain. We assume this sequencer has the right to propose a sequence of transactions to be executed, including \cat{}s. Note that this implies the sequencer of a given chain may also propose transactions for other chains that it interoperates with, as a \cat{} contains transactions for multiple chains.\footnote{We do not address how such a sequencer scheme is implemented; however, a simple approach would be to operate a \emph{shared sequencer}.

A shared sequencer also provides \emph{censorship resistance}: if an individual chain's sequencer censors a transaction, the CL can still include it. Safety derives from the CL and TP proofing, not the sequencer.}

\subsubsection{Confirmation layer (CL)}
Transactions that are provided by the sequencers are recorded on a confirmation layer. 
Moreover, since \cat{}s live across multiple chains, we require that the confirmation layer spans all involved chains, and we require consensus on inclusion of transactions from all chains.\footnote{The confirmation layer may also be considered as a commitment layer, as the sequencer and, as we will see later, the coordinator, commit on this layer.} 
We also record the status of the \cat{} on the confirmation layer.

We assume the CL operates as a BFT replicated state machine~\cite{lamport1998parttime,castro1999practical,schneider1990smr} under partial synchrony~\cite{dwork1988consensus}, with three minimal assumptions:
\begin{enumerate}
  \item \textbf{Safety:} BFT agreement among $2f{+}1$ out of $3f{+}1$ nodes ensures no two honest nodes disagree on CL state.
  \item \textbf{Liveness under partial synchrony:} After GST, honest proposals are included within a bounded number of rounds.
  \item \textbf{Censorship resistance:} Transactions observed by $f{+}1$ honest nodes are eventually included.
\end{enumerate}
Incentive alignment follows standard staking/slashing models as deployed in production BFT chains.

\subsubsection{Executor}
Each chain operates an executor that is responsible for executing transactions and updating the state. Since learning about the (potential) success or failure of a transaction requires execution (but not necessarily applying the changes to the state), the executor is essential to determining the status of \cat{}s and relaying that information.

We assume that the executor may deliver messages asynchronously. While asynchronous delivery could compromise liveness in a naive implementation of our protocol, it does not compromise safety. Moreover, through the introduction of timeouts, we can ensure that the protocol is live (dependent on the confirmation layer) even in the presence of asynchronous delivery.

\begin{figure*}[ht]
  \centering
  \includegraphics[width=0.75\textwidth]
  {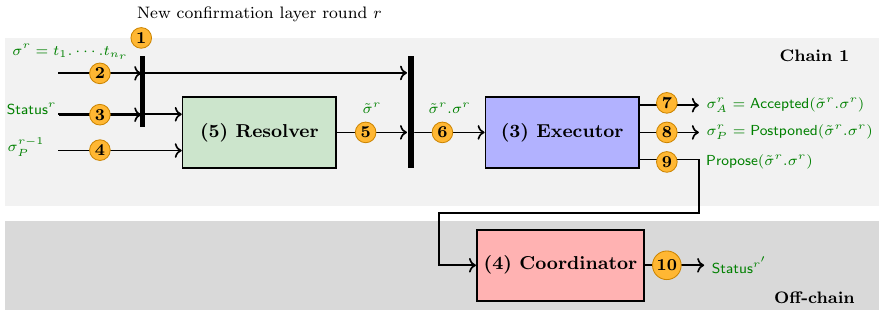}
  \caption{High-level overview of 
  transaction processing on a single 
  chain. The 10 steps shown are 
  described in detail in \secref
  {sec:naive-cat-protocol}.}
  \label{fig:high-level-one-chain}
\end{figure*}

\subsubsection{Coordinator}
An impossibility result by \cite{Zamyatin2021} shows that it is impossible to agree on the status of a \cat{} without a third party. We call this third party the \emph{coordinator}. The coordinator is responsible for facilitating the resolution of \cat{}s across chains. It takes as input proposals from the executors and returns a solution to the involved chains. 

One important aspect becomes apparent when considering the role of the coordinator. The sequencer-proposed transaction sequence is not final—the actual sequence of transactions that gets accepted and applied to the state depends on additional information, specifically the status response from the coordinator. To reconstruct the current state of a chain, we need both the sequence of ordered transactions and a record of all coordinator decisions made for CATs, since these decisions determine which transactions were actually successful or skipped.
As we will later see, the coordinator must provide the solution to the same confirmation layer as the sequencers, as we require strict consensus on the status delivery for all involved chains.

We assume that the coordinator may deliver messages to the confirmation layer asynchronously. While asynchronous delivery could compromise liveness in a naive implementation of our protocol, it does not compromise safety. Moreover, through the introduction of timeouts, we can ensure that the protocol is live (dependent on the confirmation layer) even in the presence of asynchronous delivery.

\begin{mdframed}[backgroundcolor=gray!10, linewidth=0.5pt]
  The coordinator is a third party that commits to the success (or failure) of executions of transactions. This is notably different from the sequencer, which commits ``only'' to the inclusion of transactions.
\end{mdframed}

\subsubsection{Resolver}

Two types of information are passed to the chains at this point. First, the sequencer passes the sequence of transactions to be executed. However, these are not necessarily eligible for execution due to (dependencies on) \cat{}s. Second, the coordinator passes the resolution of the \cat{} to the chains through the CL. Together, these two input streams provide an ultimate sequence of transactions which are accepted for execution. Since we would like the executor to be specialized on execution, we distinguish between the resolver and the executor, where the resolver is responsible for resolving the input stream provided by the coordinator and the sequencer.

\subsubsection{Transaction Processor (TP)}

We define the transaction processor as the component that operates both a resolver and an executor.

\begin{figure*}[h]
  \centering
  \includegraphics[width=0.75\textwidth]{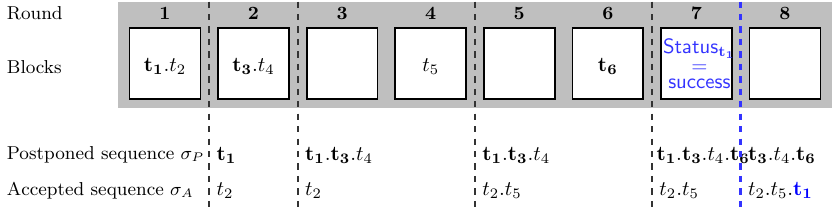}
  \caption{Example of chain activity with the naive \cat protocol. Transactions ${t_1}$, ${t_3}$, and ${t_6}$ are part of \cat{}s, whereas transactions $t_2, t_4$ and $t_5$ are normal transactions. ${t_3} \directDependsOn{s} {t_1}$, ${t_4} \directDependsOn{s} t_3$, and $t_4 \directDependsOn{s} t_2$. The \textcolor{blue}{\textbf{status}} of ${t_1}$, reported by the coordinator after 6 blocks, is $\textcolor{blue}{\success}$. The relationships between the transactions are further illustrated in \figref{fig:dependencies}. }
  \label{fig:example-one-chain}
\end{figure*}

\subsection{Naive \cat{} protocol}
\label{sec:naive-cat-protocol}

We now describe the algorithms in detail for the components Resolver, Executor, and Coordinator. The Confirmation Layer orders the incoming messages into blocks, which is a component external to the core of the \cat{} protocol, and thus will not be discussed here.

\figref{fig:high-level-one-chain} shows the transaction processing for one of the involved chains. The itemized list below corresponds with the numbered steps in the figure. 
We also provide the corresponding algorithms for the Resolver (Algorithm~\ref{alg:resolver}), the Executor (Algorithms~\ref{alg:executor-with-timeouts} and \ref{alg:executor-depth-limited}), and the Coordinator (Algorithm~\ref{alg:coordinator}). The steps are as follows:

\begin{enumerate}
  \item \textbf{Round initialization.} The consensus layer starts a new round $r$. It may have received any or none of the inputs mentioned in points 2. or 3. If there is no input, the round is skipped (or an empty block is produced).
  \item \textbf{Transaction input.} A set of transactions $S^r=\{t_1,..,t_{n_r}\}$ with sequence $\sigma^{r} =t_1. t_2. \cdots . t_{n_{r}}$ is provided by a sequencer.
  \item \textbf{Status updates.} A status update $\status^{r-1}$ is provided by the \emph{coordinator} and may indicate that the status of some \cat{}s in the set $S_P^{r-1}$ has been decided (\failure or \success).
  \item \textbf{Pending transactions.} The transactions that were left \emph{unprocessed/unfinalized} at previous rounds (they are waiting for an \cat{} to be finalized) are in the set $S_P^{r-1}$ with the sequence $\sigma_P^{r-1}$. 
  \item \textbf{Status resolution.} We (informally) define transaction sequence $\tilde{\sigma}^{r}$ to be the result of $\sigma_P^{r-1}$ where the status of the transactions in $S_P^{r-1}$ is resolved: for instance if we had a transaction  in $S_P^{r-1}$ that was either \success/\failure, we resolve its status to the one (\success/\failure) given by $\status_{r-1}$.
  \item \textbf{Sequence construction.} The sequence $\tilde{\sigma}^{r} . \sigma^{r}$ is built to be processed at round $r$.
  \item \textbf{Transaction reordering.} The executor splits this sequence of transactions into two transaction sequences: 
  
  \vspace{0.2cm}
  \hspace{0.5cm}The transaction set $S_A^r$ with sequence $\sigma_A^{r} = \accepted(\tilde{\sigma}^{r} . \sigma^{r})$, are transactions that can be finalized immediately in round $r$.

  \hspace{0.5cm}The transaction set $S_P^r$ with sequence $\sigma_P^{r} = \postponed(\tilde{\sigma}^{r} . \sigma^{r})$, which are transactions that cannot be finalized in round $r$ (they are \cat{}s with a still unresolved status, or they depend on these). 
  \vspace{0.2cm}

 This allows for new order of transactions as the one provided by the sequencer. However, it does not affect the outcome of the original sequence, see Lemma~\ref{thm:parallel-computation-independent-transactions}. \footnote{In \secref{sec:improvements} we discuss improvements that permit the reordering by the coordinator in a way that could lead to different outcomes -- a trade-off that gives the coordinator some control over the order of transactions in exchange for liveness guarantees.}
  \item \textbf{Postponed transactions.} The sequence $\sigma_P^{r}$ is propagated for the next round. In practice the Resolver and the Executor are the same entity (i.e., the Transaction Processor) and thus the sequence is just updated locally.
  \item \textbf{Proposal provision.} The executor computes a proposal $\propose(\tilde{\sigma}^{r} . \sigma^{r})$ for how to resolve \cat{}s in $\tilde{\sigma}^{r} . \sigma^{r}$. The proposal may take arbitrary forms, for example we may directly output \success/\failure for a \cat{} if its status can be computed, or also indirect dependencies including their possible outcomes could be computed and represented through a dependency graph, see \secref{sec:dependencies}. The $\propose(\cdot)$ set is communicated to the coordinator, who is in charge of resolving the \emph{common outcome} of \cat{}s and can be combined with the other $\propose(\cdot)$ sets of the other chains to determine the status that will be communicated to both chains via $\status^{r'}, r' \geq r $.
  \item \textbf{Coordinator resolution.} The coordinator combines the proposals from all chains to determine the statuses of the \cat{}s. This results in a status update $\status^{r'}$, $r' \geq r$, that is processed by the resolver at the next round ($r+1$) or some later round (if delivery to the confirmation layer is delayed).
\end{enumerate}

We illustrate the effects of the naive \cat{} protocol for the set of transactions from Table~\ref{tab:transaction-set} in \figref{fig:example-one-chain}. $t_1$ is a \cat{} and thus gets postponed. $t_2$ and $t_5$ are independent regular transactions and are accepted and executed creating the accepted sequence $t_2.t_5$. In the scenario shown the delivery of the status update for the \cat{} $t_1$ takes 6 blocks. $t_1$ is then successfully executed and the accepted sequence is updated to $t_2.t_5.t_1$.


\subsection{Complete CAT protocol}
\label{sec:improvements}

As we have seen in \secref{sec:analysis-protocol-correctness}, the \cat{} protocol can invoke a high level of blocking, if approached naively. Moreover, this blocking can lead to liveness issues, which is not acceptable. In this section we discuss possible improvements, at the cost of empowering the coordinator. This empowerment of the coordinator is to a degree, where it gains the ability to reorder some of the transactions in a way that could affect the outcome. Consequently, we must consider to incentivize and ensure honest behavior from the coordinator, see also \secref{sec:analysis-protocol-correctness}.

\begin{figure*}[h]
  \centering
  \includegraphics[width=0.75\textwidth]{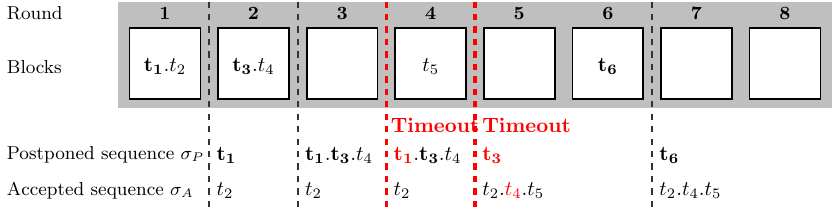}
  \caption{Improvement through \textcolor{red}{timeouts}. Example as in \figref{fig:example-one-chain}, but with a timeout of $\Delta=2 \,\,rounds$. Since the coordinator did not respond two rounds after the blocking transaction $\boldsymbol{t_1}$ was added to the chain, the transaction is \textbf{skipped}. Similarly, $\boldsymbol{t_3}$ also times out shortly after. $t_4$ is a regular transaction and not blocking, therefore, it is accepted and the accepted sequence is updated to $t_2.t_4$. }
  \label{fig:example-one-chain_v2}
\end{figure*}

\subsubsection{Timeouts.}
\label{sec:cat-protocol:timeouts}

As can be seen in \figref{fig:dependencies}, transactions may have deep indirect dependencies. If a transaction is blocking for a substantial amount of time, such as is the case in \figref{fig:example-one-chain}, it may lead to a large number of transactions being postponed, which can lead to many transactions having to be kept in memory and long delays. Ultimately this may even result in liveness issues and an attack vector. For this reason, we impose a timeout $\Delta$ for when the coordinator must reply. The detailed algorithm is provided in Algorithm~\ref{alg:executor-with-timeouts}.

We must ensure that chains do not determine the outcome of a \cat{} based on these timeouts locally, as this would invalidate coordination between chains, and thus violate atomicity. 
This is similar to the approach described in Avalon \cite{Cai2024cats-through-state-layers}, where chains communicate directly with each other.
Similarly to our approach, chains can abort through timeouts, which could lead to inconsistencies when relying on local timestamps. To overcome this challenge, it is proposed to rely on an \emph{upper layer blockchain}, which provides synchronized authenticated timestamps across all underlying chains. A timeout is triggered when the number of blocks from the upper chain exceeds a predetermined value before all relevant messages are published to the upper chain.
Consequently, we propose that timeouts must be handled via the common CL.

We note that this operation is safe, see \secref{sec:analysis-protocol-correctness}, since the CL, which operates across all chains, ensures that the transactions that are part of a \cat{} are \textbf{skipped} at all involved chains at the correct round.

This approach permits us to overcome concerns with respect to the liveness of the coordinator. However, it does also add reliance on the coordinator, who may be Byzantine and could in principle utilize the timeout to reorder the transaction stream. Hence, particular care should be given to the configuration of the timeout $\Delta$.

We illustrate the effects of the timeout mechanism for the set of transactions from Table~\ref{tab:transaction-set} in \figref{fig:example-one-chain_v2}. With a timeout of $\Delta=2$ rounds, the \cat{}s $t_1$ and $t_3$ time out and get skipped after 2 blocks, allowing $t_4$ to be accepted and executed creating the accepted sequence $t_2.t_4.t_5$. This demonstrates how timeouts prevent indefinite blocking.

\subsubsection{Limiting the dependency depth.}
\label{sec:cat-protocol:limiting-dependency-depth}

As can be seen in \figref{fig:dependencies}, transactions may have several dependencies, and thus the dependency depth (which considers only the pending set $\sigma_P$) may grow substantially if not resolved fast enough.
Moreover, each \cat{} introduces a set of new outcomes, dependent on whether the outcome is \success or \failure. This can result in an exponential number of superposition states with the number of \cat{}s, in particular if the \cat{}s or transactions are highly dependent on each other. 
Not only could this result in long delays, but it could also enable resource-exhaustion attack vectors. 

As we show in the previous example, we may reorder the sequencer transaction stream in order to improve the protocol properties, such as liveness. Similarly, here we may reorder the transaction stream proposed by the sequencer in order to limit the dependency depth to prevent resource-exhaustion attacks. However, and analogously to the introduction of timeouts, the coordinator is given additional power, as it can invoke delays that result in reordering of the transaction stream.

\medskip

In \secref{sec:dependencies} we introduced the dependency DAG $\calD$ for a given state $s$ and a sequence of transactions $\sigma$. We show an example of the dependency DAG including the CAT dependency depth in \figref{fig:dependencies_with_depth}. We only consider the pending set $\sigma_P$ for the DAG, \i.e. $\calD=\calD(s,\sigma_P)$. Transactions that can be readily accepted are added to the accepted sequence $\sigma_A$, which we can ignore for the dependency DAG.
For the algorithm, we add a new parameter $\texttt{maxDepth}$ and a new sequence of \emph{ignored} transactions $\calI$.
The approach that we propose limits the CAT dependency depth and protects against resource-exhaustion attacks.
It also reduces the number of transactions that need to be considered for postponement, which reduces the memory and latency of the protocol.

We augment the executor with dependency depth limiting as shown in Algorithm~\ref{alg:executor-depth-limited}. The changes are highlighted in blue. The following rules are applied for a given transaction $t$:
\begin{enumerate}
  \item If $t$ is a \cat{} and $\depth_{s^{r-1},\sigma_P^r}(t) > \texttt{maxDepth}$ we add $t$ to the ignored set $\mathcal{I}$.
  \item If $t$ is not a \cat{} and $\depth_{s^{r-1},\sigma_P^r}(t) > \texttt{maxDepth}$ we add it to the ignored set $\mathcal{I}$.
  \item Else if $t$ is not a \cat{} and $\depth_{s^{r-1},\sigma_P^r}(t) > 0$ we add it to the pending set $\sigma_P$.
  \item Else we process $t$ as normal.
\end{enumerate}

\begin{figure}[htbp]
  \centering
  \includegraphics[width=0.49\textwidth]{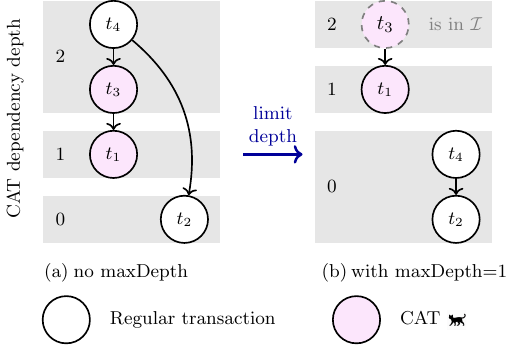}
  \caption{Effect of a depth limit $\texttt{maxDepth}=1$ on the dependency graph from \figref{fig:dependencies}. Since $\depth_{(s,t_1.t_2)}(t_3)=1$, $t_3$ is added to the ignored sequence $\mathcal{I}$ (gray dashed border). Thus, $\depth_{(s,t_1.t_2)}(t_4)=0$.}
  \label{fig:dependencies_with_depth}
\end{figure}

\begin{figure*}[h]
  \centering
  \includegraphics[width=0.75\textwidth]{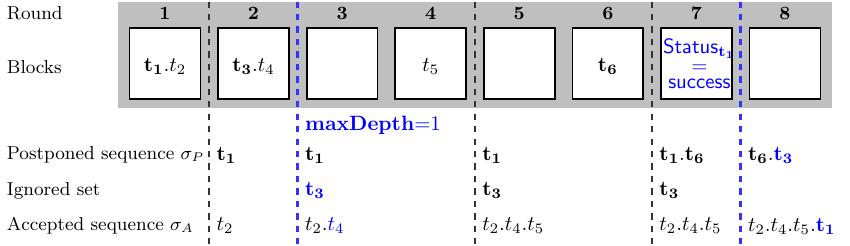}
  \caption{Improvement through \textcolor{blue}{limiting the CAT dependency depth} with $\textbf{maxDepth}=1$. 
  We use the transactions from Table~\ref{tab:transaction-set}, with dependency graph illustrated in \figref{fig:dependencies_with_depth}.
  Since $\depth_{(s,t_1.t_2)}(t_3)=2$, we add $\boldsymbol{t_3}$ to the ignored sequence $\calI$. 
  Since now $\depth_{(s,t_1.t_2)}(t_4)=0$, it is directly added to the accepted sequence $\sigma_A$.
  Eventually $\boldsymbol{t_1}$ is accepted and $\boldsymbol{t_3}$ gets computed and added to the postponed sequence $\sigma_P$.}
  \label{fig:example-one-chain_v3}
\end{figure*}

\noindent We show an example of the application of the algorithm in \figref{fig:example-one-chain_v3}. With a depth limit of $\texttt{maxDepth}=1$, the \cat{} $t_3$ is added to the ignored sequence $\calI$ and $t_4$ is directly added to the accepted sequence $\sigma_A$. Eventually $t_1$ is accepted and $t_3$ gets computed and added to the postponed sequence $\sigma_P$.


\section{Analysis of Protocol Correctness}
\label{sec:analysis-protocol-correctness}

The analysis of our CAT protocol focuses on three critical aspects: first, establishing the foundation that our protocol can be made minimally-blocking through safe reordering of independent transactions; second, formal safety and liveness guarantees with respect to trusted entities that verify protocol correctness; and third, detailed analysis of safety and liveness across the protocol components described in Section~\ref{sec:components}. We establish that the protocol provides strong safety guarantees through cryptographic protections and consensus mechanisms, while ensuring liveness through timeout mechanisms and independent execution of non-dependent transactions.

\subsection{Minimally-blocking protocol.}
\label{sec:minimally-blocking-cat}

Before defining safety and liveness properties, we establish a crucial foundation: our protocol achieves minimal blocking by safely reordering independent transactions. This means that transactions without dependencies can execute immediately with zero blocking time, while dependent transactions are guaranteed to complete within a bounded timeout. This property is essential for protocol efficiency and directly supports the safety analysis that follows, as it shows that regular transaction ordering can be optimized without compromising correctness.

\begin{proposition}[Minimally-blocking Protocol]
\label{prop:minimally-blocking-protocol}
Given a timeout mechanism, our protocol that permits reordering of independent transactions achieves minimal blocking with two aspects:

\medskip

\textbf{Committed regular transactions}: The blocking time for any transaction committed to the CL is bounded by $\Delta$:
\[
\forall t \in \mathcal{T}: \quad \text{BlockingTime}(t) \leq \Delta
\]

\textbf{Independent transactions}: No blocking time at all:
\[
\forall t \in \mathcal{T}: \quad \text{Independent}(t) \implies \text{BlockingTime}(t) = 0
\]
where $\text{BlockingTime}(t)$ is the time a transaction waits before being executed (\failure or \success), and $\text{Independent}(t)$ is a predicate that is true if $t$ has no dependencies.
\end{proposition}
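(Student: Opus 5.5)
The proof splits into the two claims of Proposition~\ref{prop:minimally-blocking-protocol}. Both are treated at the level of the executor's per-round partition of $\tilde{\sigma}^{r}.\sigma^{r}$ into $\sigma_A^r$ and $\sigma_P^r$ (step~7 of \secref{sec:naive-cat-protocol}), augmented with the timeout rule of \secref{sec:cat-protocol:timeouts}. Throughout, $\text{BlockingTime}(t)$ is measured in CL rounds, from the round $r_0$ in which $t$ is committed to the CL by the sequencer to the round in which $t$ enters an accepted sequence (as \success or \failure) or is skipped.

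\textbf{Independent transactions.} Let $t$ be committed in round $r_0$ with $\text{Independent}(t)$. We read this predicate as follows: $t$ is not a \cat{}, and $t \independent{s} t'$ holds for every $t'$ in the pending sequence $\sigma_P^{r_0-1}$ and every earlier \cat{} of $\tilde{\sigma}^{r_0}.\sigma^{r_0}$ that is still unresolved. We then argue in three steps.
\begin{enumerate}
\item By the definition of the partition, $\postponed$ only collects unresolved \cat{}s and transactions that depend on them. Hence $t$ is placed in $\sigma_A^{r_0}$.
\item Correctness of this placement follows from Lemma~\ref{thm:parallel-computation-independent-transactions}: since $t$ does not depend on any pending $t'$, its change set is unchanged whether or not the pending transactions are applied first.
\item Lemma~\ref{lemma:out-of-order-execution-independent-transactions}, applied inductively, lets us commute $t$ past each pending transaction, and Lemma~\ref{lem:independence-disjoint-access} supplies the disjointness of memory accesses. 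Together these show that the resulting state equals the state of the sequencer order, whatever the later \cat{} outcomes are.
\end{enumerate}
So $t$ is executed in round $r_0$ and $\text{BlockingTime}(t)=0$.

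\textbf{Bounded blocking.} Let $t$ be any committed transaction. If $t$ is a \cat{} committed at round $r_0$, then either the coordinator's $\status$ arrives on the CL by round $r_0+\Delta$ and the resolver places $t$ (resolved) into $\sigma_A$, or the CL-based timeout fires at round $r_0+\Delta$. In the timeout case $t$ is skipped on all chains, and $\Skip$ always succeeds with an empty change set. Either way $t$ leaves $\sigma_P$ within $\Delta$ rounds.

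If $t$ is regular and postponed, it is postponed only because it depends on some set of pending \cat{}s. Here the main obstacle appears. A dependent regular transaction could wait on a \cat{} committed later than itself, or on a chain of \cat{}s whose timeouts are staggered, and a naive bound would then be a sum of $\Delta$'s rather than $\Delta$. I would handle this in two parts.
\begin{enumerate}
\item Ordering: $\dependsOn{s}$ is defined only toward earlier positions ($i<j$), so every \cat{} that $t$ depends on was committed in a round $\le r_0$. Each of them is therefore resolved or skipped by round $r_0+\Delta$. Once all of them are resolved, $t$ has CAT dependency depth $0$ and is accepted, or fails, in that round.
\item Cascades: under the timeout algorithm, a \cat{} such as $t_3$ that depends on an earlier \cat{} is skipped at its own deadline regardless of upstream state (cf.\ \figref{fig:example-one-chain_v2}). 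So no deadline is extended by dependencies.
\end{enumerate}
If this monotonicity argument fails for transactions carried over in $\tilde{\sigma}^{r}$, the fallback is to measure $\Delta$ from the commit round of the earliest pending ancestor and note that this round is at most $r_0$. This gives $\text{BlockingTime}(t)\le\Delta$.
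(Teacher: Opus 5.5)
Your proposal is correct and follows the same route as the paper's proof. Independent (non-CAT) transactions land in $\sigma_A$ in their commit round, with Lemma~\ref{lemma:out-of-order-execution-independent-transactions} justifying the reordering; CATs are bounded by the CL timeout $\Delta$; and dependent transactions are bounded by $\Delta$ because every CAT they depend on is ordered, and hence committed, no later than they are. You are more explicit than the paper: you exclude CATs from $\text{Independent}(t)$ and use the fact that dependencies point only to earlier positions. Your one small slip is that a CAT resolved to \textsf{success} is handed back by the resolver as a regular transaction and can be re-postponed rather than going straight to $\sigma_A$, but your ordering argument for postponed regular transactions already covers that case.
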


\begin{proof}
By Lemma~\ref{lemma:out-of-order-execution-independent-transactions} from Section~\ref{sec:model}, independent transactions can be executed in parallel without affecting correctness and the final state.
For any transaction $t$, the protocol achieves minimal blocking through:
\begin{enumerate}
    \item Zero blocking for independent transactions: If $t$ has no dependencies, it executes immediately with $\text{BlockingTime}(t) = 0$.
    \item Bounded blocking for CATs: If $t$ is part of a \cat{} $T$, the timeout mechanism ensures $\text{BlockingTime}(t) \leq \Delta$.
    \item Bounded blocking for dependent transactions: If $t$ has dependencies, the timeout mechanism ensures $\text{BlockingTime}(t) \leq \Delta$, since all CATs that are ordered before $t$ are blocked at most until the timeout $\Delta$.
\end{enumerate}

\noindent This ensures that independent transactions have no blocking time while dependent transactions are bounded by $\Delta$, achieving true minimal blocking.
\end{proof}

\subsection{Formal Model and Definitions}
\label{sec:formal-model}

We build upon the formal model established in Section~\ref{sec:model}. 
Before defining the key properties, we introduce the \emph{non-proposing Transaction Processor} (\nonTP), which are client nodes with which we measure our safety and liveness properties. The \nonTP is in essence a TP but does not interact with the coordinator. It can be operated as trusted since any user may deploy it, and it only relies on read access to the CL. Moreover, since the \nonTP constitutes a subset of the tasks of the TP, all described guarantees derived herein hold true for the TP. We will distinguish between \nonTP{}s on the same chain and \nonTP{}s on different chains. We, therefore, define $\mathcal{N}_{C_i}$ as the set of \nonTP{}s on chain $C_i$.

We use the timeout parameter $\Delta$ (already defined in Section~\ref{sec:cat-protocol:timeouts}) which is measured in rounds and used in the algorithms (see Appendix~\ref{sec:algo:executor-with-timeouts}).
State transitions follow the $\Next$ function defined in Section~\ref{sec:model}.

We now define key properties that our protocol must satisfy:

\begin{definition}[Safety]
\label{def:safety}
A protocol is safe if for any CAT $T$, all \nonTP{}s maintain consistent views of $T$'s status, both within the same chain and across different chains. Specifically:

\medskip
\noindent\textbf{Intra-chain consistency}: For any chain $C_i$ and any two \nonTP{}s $N_1, N_2$ on $C_i$:

\vspace{0.2cm}
\noindent$\forall T, \forall C_i, \forall N_1, N_2 \in \mathcal{N}_{C_i}$: 
\[
\begin{gathered}
\quad \status_{N_1}(T) = \status_{N_2}(T)
\end{gathered}
\]

\noindent\textbf{Inter-chain consistency}: For any two \nonTP{}s $N_1$ on chain $C_i$ and $N_2$ on chain $C_j$ (where $i \neq j$):

\vspace{0.2cm}
\noindent$\forall T, \forall C_i, C_j, \forall N_1 \in \mathcal{N}_{C_i}, \forall N_2 \in \mathcal{N}_{C_j}$:
\[
\begin{gathered}
\quad \status_{N_1}(T) = \status_{N_2}(T)
\end{gathered}
\]

\end{definition}

\begin{definition}[Liveness]
\label{def:liveness}
A protocol is live if for any CAT $T$, there exists a finite timeout threshold $\Delta$ such that:
\[
\forall T: \quad \exists \Delta: \quad \text{resolution}(T) \leq \Delta
\]
where $\text{resolution}(T)$ is the time until $T$ reaches a final state, measured in rounds. This liveness guarantee ensures that all \nonTP{}s will eventually observe a final status for $T$ within $\Delta$ rounds.
\end{definition}

\subsection{Confirmation Layer}
\label{sec:confirmation-layer}

The CL is assumed to provide a decentralized Byzantine fault-tolerant consensus mechanism that secures both the order and content of all transactions (CATs and regular transactions). It determines time progress through rounds, and timeouts are measured in CL rounds. The CL serves as the authoritative source for all transaction-related decisions. All TPs and \nonTP{}s must read from the CL to determine the correct status of transactions and proceed with state transitions.

\begin{proposition}[CAT Status Focus]
\label{prop:cat-status-focus}
Since the CL provides safe consensus on transaction content and order, it is sufficient to focus on whether \nonTP{}s maintain consistency on CAT statuses. Regular transaction ordering does not affect the safety properties that we require to prove for \nonTP{}s.
\end{proposition}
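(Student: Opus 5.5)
The plan is to reduce safety, as in Definition~\ref{def:safety}, to agreement on two data streams: the CL-ordered transaction sequence and the CL-ordered coordinator status updates. Everything a \nonTP{} computes should be a deterministic function of these two streams. First, I would make explicit what a \nonTP{} on chain $C_i$ does at round $r$. It reads the CL block for $r$, which contains $\sigma^r$ and $\status^{r-1}$. It forms $\tilde{\sigma}^r.\sigma^r$ from its locally maintained $\sigma_P^{r-1}$. It applies the timeout rule, whose deadlines are counted in CL rounds. It then splits the sequence into $\accepted$ and $\postponed$ and updates its state via $\Next$. Each step is a deterministic function of the inputs, with no coordinator interaction, since a \nonTP{} never proposes. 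By induction on $r$, starting from the common genesis state $\calg(C_i)$ and an empty pending set, any two \nonTP{}s on $C_i$ that read the same CL prefix hold the same $(s^r,\sigma_A^r,\sigma_P^r,\calI^r)$. The CL safety assumption guarantees that they read the same prefix.

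Second, I would argue that the status of a CAT $T$ is fixed entirely by CL data. The status is either the first coordinator decision on $T$ recorded on the CL, or \failure{} if no such decision appears within $\Delta$ CL rounds of $T$'s inclusion. This gives intra-chain consistency directly. It also gives inter-chain consistency, because the CL spans all chains: \nonTP{}s on $C_i$ and $C_j$ read the same block containing the same decision, or observe the same round at which the timeout fires. I would state explicitly that local clocks play no role, which is the reason timeouts are handled via the common CL.

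Third, I would address the proposition's second claim, that regular transaction ordering does not affect safety. The \accepted{}/\postponed{} reordering moves a transaction ahead of a pending CAT only when it is independent of every pending CAT. By Lemma~\ref{lemma:out-of-order-execution-independent-transactions}, applied repeatedly through adjacent transpositions, the resulting state equals the state of the sequencer order with CATs resolved. In any case the reordering is itself a deterministic function of CL data, so all \nonTP{}s perform it identically. Since it never changes a CAT's status, only CAT statuses remain as a possible source of disagreement.

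The main obstacle will be the inductive step. The split into \accepted{} and \postponed{} depends on the dependency relation $\dependsOn{s}$, which depends on the state and therefore on earlier CAT statuses. Determinism alone suffices for consistency, but I would need to check that the dependency computation and the depth test in Algorithm~\ref{alg:executor-depth-limited} involve no nondeterminism, such as parallel execution order. If needed, I would appeal to Lemma~\ref{thm:parallel-computation-independent-transactions} to show that parallel computation yields the same change sets.
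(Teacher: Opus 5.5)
Your argument is correct in its main line, but it takes a different route from the paper. The paper's proof is semantic. It cites Proposition~\ref{prop:minimally-blocking-protocol}, and hence Lemma~\ref{lemma:out-of-order-execution-independent-transactions}, to say that reordering independent regular transactions leaves the final state unchanged. It adds CL safety for content and order and an unanalysed appeal to ``the executor's correctness''. It then concludes that only CAT-status consistency remains, which it defers to the subsequent propositions, starting with Proposition~\ref{prop:cl-safety}.

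You argue in state-machine-replication style instead. A \nonTP{} is a deterministic function of the CL prefix, so by induction on rounds from the genesis state, any two \nonTP{}s reading the same prefix agree on state and on the accepted, postponed and ignored sequences. The only inputs that are not raw transactions are the CAT statuses, and you additionally show these are fixed by CL data (rounds, not local clocks). This makes the reduction precise. It also turns ``executor correctness'' into a concrete requirement: the split, the dependency and depth computations, and the timeout rule must all be deterministic. Most importantly, it does not need the reordering to preserve the sequencer-order outcome.

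That last point matters, because the semantic claim in your third step is stronger than what holds. Lemma~\ref{lemma:out-of-order-execution-independent-transactions} needs direct independence at the state where each transposition occurs. The executor instead tests independence against $s^{r-1}$, while the traces of pending transactions depend on unresolved CAT outcomes. Once depth limiting is added, the claim fails outright: the paper's own example accepts $t_4$ ahead of the ignored $t_3$, although both write location $60$. If $t_3$'s CAT succeeds, the final value at $60$ then differs from the sequencer order; the paper concedes such outcome-changing reorderings in \secref{sec:improvements}.

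So treat the transposition claim as at most a remark about the timeout-only executor, and let your ``in any case'' determinism argument carry the proof. What you give up relative to the paper's intended reading is any guarantee of faithfulness to the sequencer order. That guarantee is not part of the safety notion in Definition~\ref{def:safety}, so your proof of the statement does not need it.
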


\begin{proof}
By Proposition~\ref{prop:minimally-blocking-protocol}, independent regular transactions can be executed out of order without affecting final state consistency.
Since the CL is safe and secures transaction order and content, for \nonTP{}s this means:
\begin{enumerate}
    \item Regular transaction ordering is handled safely by the \nonTP{}'s executor's reordering logic.
    \item The safety of the state updates for all types of transactions follows from the \nonTP{}'s executor's correctness.
    \item We still need to prove that \nonTP{}s maintain consistency on CAT statuses.
\end{enumerate}

\noindent Therefore, \nonTP{} safety properties can focus exclusively on CAT status consistency.
\end{proof}

\begin{proposition}[Safety with respect to the CL]
\label{prop:cl-safety}
The CL satisfies the safety property defined in Definition~\ref{def:safety}. Specifically, for any CAT $T$ and any two \nonTP{}s $N_1, N_2$ on any chain $C_i$, if both read the CL at round $r$, then:
\[
\status_{N_1}(T, r) = \status_{N_2}(T, r)
\]
where $\status_N(T, r)$ denotes the status of CAT $T$ as observed by \nonTP $N$ at round $r$.
\end{proposition}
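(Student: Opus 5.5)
The plan is to prove that the observed status is a deterministic function of the CL prefix up to round $r$. BFT agreement then forces any two \nonTP{}s reading the CL at round $r$ to see the same prefix, and hence the same status. I would first make explicit what a \nonTP{} does: at round $r$ it reads the finalized CL blocks $B_1,\dots,B_r$ and runs the Resolver (Algorithm~\ref{alg:resolver}). It then runs the Executor with timeouts (Algorithm~\ref{alg:executor-with-timeouts}), possibly with the depth limit of Algorithm~\ref{alg:executor-depth-limited}, starting from the genesis state $\calg(C_i)$. Write this as a function $\status_N(T,r) = F(\calg(C_i), B_1,\dots,B_r, T)$. Since the \nonTP never interacts with the coordinator, $F$ takes no input other than CL contents and public parameters ($\Delta$, $\texttt{maxDepth}$).

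\textbf{Step 1 (common input).} By the CL safety assumption (BFT agreement among $2f{+}1$ of $3f{+}1$ nodes), no two honest nodes disagree on the CL state. So any two \nonTP{}s $N_1,N_2$ reading the CL at round $r$ obtain the same sequence of blocks. This fixes the same sequencer streams $\sigma^1,\dots,\sigma^r$, the same coordinator updates $\status^{1},\dots,\status^{r-1}$, and the same round counter, which is what timeouts are measured against.

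\textbf{Step 2 (determinism).} I would show by induction on rounds that the tuple $(s^{r},\sigma_A^r,\sigma_P^r,\calI^r)$ is identical for $N_1$ and $N_2$. The base case is the common genesis state with empty sequences. In the inductive step, the resolution of $\tilde\sigma^r$ uses only $\sigma_P^{r-1}$ and $\status^{r-1}$. The Accepted/Postponed split uses dependency relations $\directDependsOn{s}$ and depths $\depth_{s,\sigma}$; by Section~\ref{sec:model} these are functions of $s$ and $\sigma$ alone, through $\memtrace$ and $\changeset$. The timeout rule compares the inclusion round of $T$ with $r$ and $\Delta$, both read from the CL. Hence each status assignment at round $r$ is one of three things: \pending, the coordinator's value recorded on the CL, or \failure by timeout. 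In every case both \nonTP{}s compute the same value, and Proposition~\ref{prop:cat-status-focus} lets us restrict attention to these CAT statuses. The inter-chain clause of Definition~\ref{def:safety} follows the same way. The CL spans all chains, and the coordinator's decision for $T$ is a single CL entry read by every chain. The timeout is defined in CL rounds, not local clocks, so all chains skip $T$ at the same round.

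\textbf{Main obstacle.} The delicate point is that no local nondeterminism leaks into $F$. Two sources concern me. First, a coordinator update and a timeout could land in the same round, creating an ordering ambiguity; I would fix a canonical tie-break in the resolver, for example that timeouts are evaluated before status updates from the same block. Second, the intra-round order of sequencer and coordinator inputs within a block must be fixed; here I would simply rely on the CL ordering. If these rules are part of the algorithms, the induction goes through. If they are not, I would state them as explicit conventions.
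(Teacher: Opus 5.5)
Your proposal is correct and follows the paper's proof. The paper likewise combines BFT agreement of the CL, which gives $\text{CL}_r^{N_1}=\text{CL}_r^{N_2}$, with the fact that $\status_N(T,r)$ is derived from that CL state; your round-by-round determinism induction spells out the step the paper asserts in one line. The tie-break you worry about is already fixed by the algorithms: the Resolver (Algorithm~\ref{alg:resolver}) applies $\status^r$ before the Executor's timeout check (Algorithm~\ref{alg:executor-with-timeouts}), so a recorded coordinator decision takes precedence over a same-round timeout, which is the opposite of the convention you suggest, though any uniform rule would suffice for this proposition.
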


\begin{proof}
By assumption, the CL implements a Byzantine fault-tolerant consensus protocol with safety property $S$. 
For any CAT $T$, the CL records status updates through consensus. By the safety property $S$ of the underlying consensus protocol:
\[
\forall N_1, N_2 \in \mathcal{N}_{C_i}, \forall r: \quad \text{CL}_r^{N_1} = \text{CL}_r^{N_2}
\]
where $\text{CL}_r^N$ is the CL state at round $r$ as observed by \nonTP $N$.
Since $\status_N(T, r)$ is derived from $\text{CL}_r^N$, we have:
\[
\status_{N_1}(T, r) = \status_{N_2}(T, r)
\]

\noindent This ensures that all \nonTP{}s observe consistent CAT statuses, maintaining safety across the system.
\end{proof}

\begin{proposition}[Liveness with respect to the CL]
\label{prop:cl-liveness}
Let $L_{\text{CL}}$ be the liveness property of the CL consensus protocol. If $L_{\text{CL}}$ holds, then for any CAT $T$ and any \nonTP $N$ on any chain $C_i$, there exists a finite round $r_T$ such that:
\[
\forall r \geq r_T: \quad \status_N(T, r) \neq \bot
\]
where $\bot$ denotes an undefined status.
\end{proposition}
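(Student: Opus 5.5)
The plan is to combine the CL liveness assumption (liveness under partial synchrony: after GST, honest proposals are included within a bounded number of rounds) with the timeout mechanism of \secref{sec:cat-protocol:timeouts}, measured in CL rounds. The proof has three steps.

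\textbf{Step 1: the CAT itself gets recorded.} Since $T$ is submitted by a sequencer and $L_{\text{CL}}$ holds, $T$ is included in some finite round $r_0$. I would argue this from the liveness and censorship-resistance assumptions listed for the CL. If the sequencer itself is faulty, the CL still includes $T$ once it is observed by $f{+}1$ honest nodes.

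\textbf{Step 2: case split on whether the coordinator's status update arrives in time.}
\begin{itemize}
\item \emph{Case 1:} a coordinator status update for $T$ is included in the CL at some round $r_1 \le r_0 + \Delta$. Every \nonTP{} reading the CL then resolves $T$ to \success or \failure at round $r_1$.
\item \emph{Case 2:} no such update is included by round $r_0+\Delta$. The timeout rule of Algorithm~\ref{alg:executor-with-timeouts} is deterministic in CL rounds. So every \nonTP{} reading the CL at round $r_0+\Delta$ marks $T$ as \failure (skipped).
\end{itemize}
In either case we set $r_T = r_0+\Delta$, which is finite. The key point is that the timeout does not depend on coordinator or executor behaviour, only on CL round progression. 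This is exactly what $L_{\text{CL}}$ guarantees: rounds keep advancing and blocks keep being produced, possibly empty ones, as in step~1 of \secref{sec:naive-cat-protocol}.

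\textbf{Step 3: the status never reverts to $\bot$.} I would show that once the status is set at $r_T$, it persists for all $r \ge r_T$. A status recorded on the CL is final by CL safety (Proposition~\ref{prop:cl-safety}), so it cannot later become undefined. The remaining subtlety is a coordinator update arriving after the timeout. The resolver must ignore it, so that the status decided at the first of the two events is the one that stays. I would state this as a property of the resolver's processing order, first-event-wins within the CL block order.

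\textbf{Main obstacle: the partial-synchrony caveat.} Before GST, rounds may not progress, so $r_T$ is finite but not bounded a priori. I would handle this by noting that $L_{\text{CL}}$ guarantees that GST occurs, and after GST the bound above applies. This gives existence of a finite $r_T$, which is all the proposition claims.
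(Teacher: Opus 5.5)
Your proposal is correct and takes the same route as the paper, which combines $L_{\text{CL}}$ with the convention that a missing coordinator update within the timeout counts as \textsf{failure}. Your explicit bound $r_T = r_0+\Delta$, the case split, and the persistence step spell out what the paper's short proof leaves implicit, and the ``first-event-wins'' rule you plan to impose is already built into Algorithms~\ref{alg:resolver} and~\ref{alg:executor-with-timeouts}: a timed-out CAT is appended to $\sigma_A^r$ as \Skip{} and the resolver only scans $\sigma_P^{r-1}$, so a late coordinator update is never consulted.
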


\begin{proof}
By assumption, the CL implements a Byzantine fault-tolerant consensus protocol with liveness property $L_{\text{CL}}$. This means that for any valid input, the CL eventually produces an output within a finite number of rounds.
For any CAT $T$, the CL must eventually record a status update. The absence of a \status update from the coordinator within timeout is equivalent to a \status update with \failure. By the liveness property $L_{\text{CL}}$:

\vspace{0.2cm}
\noindent$\exists r_T: \quad \forall r \geq r_T, \forall N \in \mathcal{N}_{C_i}$: 
\[
\text{CL}_r^N \text{ contains the \status of } T
\]
Since $\status_N(T, r)$ is derived from $\text{CL}_r^N$, we have:
\[
\forall r \geq r_T: \quad \status_N(T, r) \neq \bot
\]
This ensures that all \nonTP{}s eventually observe defined CAT statuses, maintaining liveness across the system.
\end{proof}

The CL's role as the authoritative source of truth enables a key system-wide property: atomicity across chains. Since all chains read from the CL to determine CAT statuses, the CL ensures that cross-chain transactions maintain atomicity.

\begin{proposition}[Atomicity Guarantee]
\label{prop:atomicity}
For any CAT $T$, the protocol ensures that $T$ satisfies the atomicity property defined in Definition~\ref{def:cross-chain-atomicity}.
\end{proposition}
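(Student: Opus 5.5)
The plan is to reduce atomicity (Definition~\ref{def:cross-chain-atomicity}) to the CL-level consistency results already established, namely Proposition~\ref{prop:cl-safety} and Proposition~\ref{prop:cl-liveness}, together with the rule that each chain's TP/\nonTP{} applies a \cat{} part only according to the status recorded on the CL. Fix a \cat{} $T=(t_1,\ldots,t_k)$ with $t_i$ destined for chain $C_i$. First I will make precise how $\status(t_i)$ is determined on $C_i$. The resolver (Algorithm~\ref{alg:resolver}) moves $t_i$ out of the postponed set $\sigma_P$ only in two cases: when a coordinator status update for $T$ appears on the CL, or when the CL-measured timeout $\Delta$ for $T$ expires. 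The executor commits the change set of $t_i$ only when that CL-recorded outcome is \success; otherwise it replaces $t_i$ by $\Skip$. Since $\changeset(s,\Skip)=\emptymap$ and a failed transition leaves the state unchanged, the case ``$t_i$ skipped'' coincides with $\status(t_i)=\failure$ in the sense of Section~\ref{sec:states-and-change-sets}. Hence $\status(t_i)=\Phi(\text{CL})$, where $\Phi$ is a deterministic function of the CL log that does not depend on $i$. It reads the first coordinator message for $T$, or the timeout event if that comes first, and that event is identified by the CL round alone, not by local clocks.

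Second, I will show that $\Phi$ evaluates identically on every chain. By Proposition~\ref{prop:cl-safety}, any two \nonTP{}s, and in particular \nonTP{}s on $C_i$ and $C_j$ (since the CL spans all chains), observe the same CL state at each round. So they see the same coordinator message for $T$ or the same timeout round, and therefore compute the same value of $\Phi$. By Proposition~\ref{prop:cl-liveness}, this value is eventually defined, so no chain remains undecided. Together these give $\status(t_i)=\status(t_j)$ for all $i,j$, which is exactly Definition~\ref{def:cross-chain-atomicity}.

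Third, I must check that a \success decision is actually realisable on every chain. If the coordinator declares \success, a chain's part should not then fail upon application. I would argue this from the coordinator logic, which outputs \success only when every proposal reports \success for its part. I would add that the simulated outcome equals the executed outcome, because the state that $t_i$ reads is fixed by its position relative to the postponed and accepted sequences. Lemma~\ref{thm:parallel-computation-independent-transactions} and Lemma~\ref{lemma:out-of-order-execution-independent-transactions} guarantee that reordering independent accepted transactions does not change the external reads of $t_i$, via Observation~\ref{thm:state-equivalence-external-reads}.

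I expect this third step to be the main obstacle. The dependency-depth rules and the ignored set $\calI$ complicate the claim that the state used for simulation equals the state at commit, and a Byzantine coordinator could report \success incorrectly. My first attempt would be to define atomicity at the level of committed status, meaning skip versus apply as dictated by the CL. I would treat a mismatch between the simulated and actual outcome as excluded by the executor's branching over both outcomes of each pending \cat{}, which fixes the exact pre-state for each branch. If that route fails, I would weaken the claim to say that the \emph{decision} is atomic, with execution fidelity delegated to the executor's correctness as in Proposition~\ref{prop:cat-status-focus}.
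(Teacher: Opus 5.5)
Your Steps 1--2 are exactly the paper's proof: Proposition~\ref{prop:cl-safety} gives a single CL-recorded status $\alpha$ for $T$ that every \nonTP{} on every chain observes, each chain then performs $s_i \xrightarrow{\ t_i/\alpha\ } s_i'$ (a no-op when $\alpha=\failure$), and so $\status(t_i)=\alpha=\status(t_j)$. The paper never attempts your Step 3. It simply posits that $t_i$'s transition carries the CL-decided label $\alpha$, i.e.\ it proves precisely your fallback (atomicity of the apply/skip decision) and leaves execution fidelity to Propositions~\ref{prop:safety-progress} and~\ref{prop:coordinator-safety} in Theorem~\ref{thm:cat-safety-liveness}. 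So adopt that fallback: route (a) has no support in the paper, since Algorithms~\ref{alg:executor-with-timeouts} and~\ref{alg:executor-depth-limited} postpone dependent transactions rather than branch over CAT outcomes, and nothing shows that the commit-time pre-state equals the simulated one.
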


\begin{proof}
By Proposition~\ref{prop:cl-safety}, all \nonTP{}s eventually agree on the status $\alpha$ of $T=(t_1, \ldots, t_k)$ through the CL.
For any chain $C_i$ involved in $T$, the state transition follows:
\[
s_i \xrightarrow{\ t_i/\alpha \ } s_i'
\]
where $s_i' = s_i$ if $\alpha = \textsf{failure}$ (no state change).
By the CL's global consistency (Proposition~\ref{prop:cl-safety}), all \nonTP{}s observe the same status $\alpha$ for $T$, and thus its constituent transactions $t_i$ have the same status $\alpha$. Therefore:
\[
\forall i,j \in [1,k]: \quad \status(t_i) = \alpha = \status(t_j)
\]
This ensures that $T$ satisfies the atomicity property from Definition~\ref{def:cross-chain-atomicity}, maintaining consistency across all chains.
\end{proof}


\subsection{Coordinator}
\label{sec:coordinator}

A third-party coordinator is required to agree on the status of \cat{}s \cite{Zamyatin2021}.
However, the coordinator may be faulty or Byzantine. To ensure safety, the coordinator must record its decisions on the CL for accountability as well as to ensure consistent global view. In addition, the decisions of the TPs must be protected by signature and these signatures must be contained in the status updates from the coordinator recorded on the CL. To ensure liveness and because the coordinator may become faulty, we require timeouts.

\begin{proposition}[Safety with respect to the Coordinator]
\label{prop:coordinator-safety}
Let $\mathcal{C}_{\text{coord}}$ be the coordinator and $\mathcal{T}_{\text{coord}}$ be the set of TPs involved in a CAT $T$. For any status proposal $\status$ from TP $i \in \mathcal{T}_{\text{coord}}$, the coordinator can only:
\begin{enumerate}
    \item Record $\status$ on the CL: $\text{CL}_h \leftarrow \status$
    \item Ignore $\status$: $\text{CL}_h \not\leftarrow \status$
\end{enumerate}
The coordinator cannot produce $\status' \neq \status$ such that $\text{CL}_h \leftarrow \status'$.
\end{proposition}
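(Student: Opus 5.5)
The plan is to reduce the statement to unforgeability of the TPs' signatures, combined with the CL's validity checking. We rely on the requirement stated just before the proposition: every status proposal from a TP $i \in \mathcal{T}_{\text{coord}}$ carries a signature under $i$'s key over $(T, \status)$. Every coordinator status update recorded on the CL must contain these signatures. We model the CL (and every \nonTP reading it) as accepting a coordinator update for $T$ only if it is \emph{well-formed}. That means it contains, for each TP involved in $T$, a valid signature on a proposal for $T$, and the recorded outcome is the deterministic combination of those proposals (\success iff all are \success). Otherwise the update is treated as absent, and by the timeout rule of Algorithm~\ref{alg:executor-with-timeouts} absence leads to \failure after $\Delta$ rounds.

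First, we enumerate what the coordinator can do with a proposal $\status$ it receives from TP $i$. It can either include it in an update it submits to the CL, which gives case 1, or not submit it at all, which gives case 2. Delaying it is just case 2 up to the timeout, after which it coincides with a CL-determined \failure. Second, suppose the coordinator tries a third option: recording some $\status' \neq \status$ attributed to TP $i$. For the update to be accepted on the CL, it must carry a valid signature of $i$ on $(T,\status')$. Since $i$ is honest and signed only $(T,\status)$, producing such a signature would be an existential forgery against the signature scheme. Under EUF-CMA security this succeeds with negligible probability. Binding $T$ (and the round/CAT identifier) into the signed message also rules out replaying a signature of $i$ from a different CAT or round. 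Third, an update that omits or fabricates signatures is rejected by every honest CL node and every \nonTP, since validity is a deterministic function of CL content. By Proposition~\ref{prop:cl-safety} all \nonTP{}s agree on this rejection, so $\text{CL}_h \leftarrow \status'$ never takes effect.

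The main obstacle is the gap between ``the coordinator cannot write $\status'$'' and ``the CL contains no effective $\status'$''. A Byzantine coordinator can still physically post arbitrary bytes to the CL. The argument therefore needs to make explicit that the recorded status is \emph{derived} by the CL/\nonTP validity predicate from signed proposals, rather than taken from the coordinator's claim. I would state this predicate formally first. I would then argue that the combined outcome is a function only of validly signed inputs, so that even the aggregation step gives the coordinator no freedom. A secondary subtlety is a Byzantine TP that signs two conflicting proposals. Here I would restrict the claim to honest TPs, or else note that equivocation yields a slashable pair of signatures recorded on the CL.
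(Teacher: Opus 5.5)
Your proposal is correct and takes the same route as the paper: the TP signatures bind each proposal, so any substituted $\status'$ fails verification and is rejected on the CL, leaving the coordinator only the options of recording or ignoring. Your explicit validity predicate, together with the deterministic aggregation rule, the binding of $T$ against replay, and the caveat restricting the claim to non-equivocating TPs, spells out what the paper's proof compresses into ``by the CL's safety properties, invalid updates are rejected.''
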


\begin{proof}
By Proposition~\ref{prop:cl-safety}, messages recorded on the CL are safe and cannot be altered by Byzantine actors.
For any status proposal $\status$ from TP $i$, the coordinator receives $\status$ with cryptographic signatures $\text{sig}_i(\status)$. The coordinator can only:
\[
\text{CL}_h \leftarrow \status \quad \text{or} \quad \text{CL}_h \not\leftarrow \status
\]

\noindent If the coordinator attempts to record $\status' \neq \status$, the cryptographic signatures $\text{sig}_i(\status)$ would not validate for $\status'$, making the status update invalid. By the CL's safety properties, invalid updates are rejected.
Therefore, the coordinator's actions are limited to recording the original proposal or ignoring it, ensuring that only authentic status proposals are recorded.
\end{proof}

\begin{proposition}[Liveness with respect to the Coordinator]
\label{prop:coordinator-liveness}
For any CAT $T$, liveness is ensured through the CL's timeout mechanisms, not through independent coordinator timeouts.
\end{proposition}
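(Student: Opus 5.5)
The argument is by construction: I will show that every \cat{} $T$ is resolved by an event recorded on the CL within a bounded number of CL rounds. That resolution event is either a coordinator status update or a timeout-induced \failure. Neither case depends on the coordinator timing out independently.

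\textbf{Timeout anchored on the CL.} Let $r_0$ be the CL round in which the sequencer's inclusion of $T$ is finalized. By the consensus assumption, all TPs and \nonTP{}s agree on $r_0$ (Proposition~\ref{prop:cl-safety}). The timeout rule in Algorithm~\ref{alg:executor-with-timeouts} is stated purely in terms of CL rounds. If no valid coordinator status for $T$ appears in rounds $r_0+1,\dots,r_0+\Delta$, then every processor reading the CL treats $T$ as \failure at round $r_0+\Delta$. This is a deterministic function of the CL prefix up to round $r_0+\Delta$. Hence it is evaluated identically by all readers, with no local clocks or coordinator input involved.

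\textbf{Case split on coordinator behaviour.} I consider two cases.
\begin{enumerate}
\item The coordinator posts a valid status before round $r_0+\Delta$. Then $T$ resolves within $\Delta$ rounds. By Proposition~\ref{prop:coordinator-safety}, that status is an authentic, signed aggregation of TP proposals.
\item The coordinator is crashed, delayed, or Byzantine and withholds its status. Then the CL timeout fires at round $r_0+\Delta$. By Proposition~\ref{prop:cl-liveness}, the CL keeps producing rounds, so that round is actually reached.
\end{enumerate}
In both cases, $\text{resolution}(T)\le\Delta$, which is Definition~\ref{def:liveness}. Since this bound holds without any coordinator-side timer, it establishes the first half of the claim: liveness is ensured through the CL's timeout mechanism.

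\textbf{Why an independent coordinator timeout is neither needed nor acceptable.} If the coordinator (or an individual chain) declared a timeout locally, two chains could disagree. One might observe the coordinator's success status, while the other had already aborted on its own clock. That would violate inter-chain consistency in Definition~\ref{def:safety} and atomicity in Proposition~\ref{prop:atomicity}.

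\textbf{Main obstacle: a status posted exactly at the deadline.} The hardest step is ruling out a race in which a coordinator status and the timeout both apply to round $r_0+\Delta$. I will handle this by fixing a deterministic precedence rule in the CL's total order, for example that the first finalized event wins. Any later status for an already-resolved $T$ is then ignored. Since the CL's block ordering is agreed upon, every \nonTP applies the same rule. This preserves both liveness and safety in the boundary case.
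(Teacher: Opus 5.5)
Your proof follows the paper's argument: liveness comes from Proposition~\ref{prop:cl-liveness} together with the timeout that every TP and \nonTP evaluates in CL rounds, so a silent or faulty coordinator is simply overtaken by that timeout, and your explicit anchor $r_0$, case split, and deadline tie-break make the paper's short paragraph more precise (the tie-break is already fixed by Algorithms~\ref{alg:resolver} and~\ref{alg:executor-with-timeouts}, since a status recorded on the CL is resolved before the timeout check is reached). One loose point: your safety objection really concerns chain-local clocks, because a coordinator-side timeout whose decision is posted to the CL would not split the chains, whereas the decisive reason the coordinator cannot be the source of liveness is that a crashed or Byzantine coordinator never fires its own timer.
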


\begin{proof}
By Proposition~\ref{prop:cl-liveness}, the CL ensures that all CATs eventually reach a defined status within a finite number of blocks.
The coordinator's role is to facilitate agreement on CAT statuses, but the actual timeout enforcement and global consensus comes from the CL. If the coordinator fails to provide a status update, the timeout mechanism implemented in the \nonTP{} is enabled with the help of the CL consensus.
Therefore, liveness is maintained through the CL's guarantees, with the coordinator acting as a facilitator rather than the source of timeout enforcement.
\end{proof}


\subsection{Transaction Processor}
\label{sec:transaction-processor}

If the TP proposes a status to the coordinator for a given \cat{} $T$, this status proposal must be correct.\footnote{For example, if the TP proposes \success, although the execution by a \nonTP would fail, then a user could be exploited by giving up assets on another chain.} 
However, a centralized TP may become Byzantine. Thus, the correctness of the status proposals must be protected by providing a degree of finality, e.g. cryptographically (ZK) or crypto-economically (through attestations of a staked committee).

\begin{proposition}[Safety with respect to the TP]
\label{prop:safety-progress}
Let $\mathcal{P}_{\text{TP}}$ be the protection mechanism (cryptographic or crypto-economic) for TP proposals. For any CAT $T$ and TP $i$, if $\mathcal{P}_{\text{TP}}$ is valid, then:

\vspace{0.2cm}
\noindent $ \forall \status \in \{\textsf{success}, \textsf{failure}\}$:
\[
\text{Verify}_{\mathcal{P}_{\text{TP}}}(\status, T, i) \in \{\text{true}, \text{false}\}
\]
where $\text{Verify}_{\mathcal{P}_{\text{TP}}}$ validates the protection mechanism for proposal $\status$.
\end{proposition}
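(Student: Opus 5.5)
The statement is essentially a well-definedness claim: for every CAT $T$, TP $i$ and candidate status, $\text{Verify}_{\mathcal{P}_{\text{TP}}}(\status, T, i)$ returns a definite Boolean answer, and this answer is the same for every honest verifier. I would therefore not argue anything about the TP's behaviour. Instead I would model $\mathcal{P}_{\text{TP}}$ abstractly as a deterministic predicate over data recorded on the CL, and derive totality and agreement from the determinism of execution and from Proposition~\ref{prop:cl-safety}.

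The plan has three steps.

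\textbf{Step 1: fix the reference status.} I fix the inputs on which the reference status of $t_i$ (the part of $T$ on chain $C_i$) depends. These are the state $s$ of $C_i$ at the round where $T$ is processed and the pending and accepted sequences. All of these are determined by the CL contents: the transaction order and the coordinator decisions. By Proposition~\ref{prop:cl-safety}, every \nonTP on $C_i$ reads the same CL state at that round. So the reference status $\alpha^\ast$, defined by $s \xrightarrow{\ t_i/\alpha^\ast\ } s'$, is a single well-defined value in $\{\success,\failure\}$. It is unique because $\Next$ and the transition function are functions.

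\textbf{Step 2: split on the protection mechanism.} In the cryptographic (ZK) case, $\text{Verify}$ is the proof verifier applied to the statement ``$\status = \alpha^\ast$ for $t_i$ from $s$''. Verification is a deterministic polynomial-time algorithm, so it always halts with true or false. Soundness gives $\text{true}$ only if $\status = \alpha^\ast$, except with negligible probability. In the crypto-economic case, $\text{Verify}$ checks for a quorum of committee signatures on $(\status, T, i)$. This is again a deterministic check on CL-recorded data, so it returns a Boolean. Under an honest-majority or slashing assumption, a quorum on the wrong status is either impossible or punishable.

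\textbf{Step 3: combine.} Combining the two cases with Proposition~\ref{prop:coordinator-safety}, the coordinator cannot alter a verified proposal. Hence only proposals with $\text{Verify} = \text{true}$ reach the CL, and all \nonTP{}s evaluate $\text{Verify}$ identically.

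The main obstacle is Step 1: making the verified statement precise. The status depends on the state after resolving earlier CATs, which may still be pending when the proposal is made. To handle this, I would have the proof certify the status conditionally on the recorded outcomes of the CATs it depends on, which is a bounded number by \texttt{maxDepth}. The coordinator then resolves these conditional proposals consistently. If that becomes too heavy, I would restrict the claim to proposals made once all predecessors of $t_i$ in $\calD(s,\sigma_P)$ are resolved.
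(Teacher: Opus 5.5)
Your proposal is correct, but it argues something different from the paper.

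\textbf{How the two proofs differ.} The paper's proof never addresses the literal claim that $\text{Verify}_{\mathcal{P}_{\text{TP}}}$ is Boolean-valued. Instead it builds into the assumption on $\mathcal{P}_{\text{TP}}$ that verification returns true exactly when the status was honestly generated by TP $i$, and false for any other status. It then reads off that incorrect proposals are rejected.

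You take three different steps:
\begin{itemize}
\item You obtain the literal statement from the verifier being a total deterministic check (proof verification, or a signature-quorum check on CL-recorded data).
\item You define correctness as agreement with the reference status $\alpha^\ast$ of $t_i$, given by the transition semantics from a CL-determined state.
\item You get agreement among all \nonTP{}s from Proposition~\ref{prop:cl-safety}.
\end{itemize}

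\textbf{What each route buys.} The paper's route is brief, and it supplies by stipulation the strong guarantee (``the TP cannot produce an incorrect status'') that the safety part of Theorem~\ref{thm:cat-safety-liveness} invokes. Your route ties the guarantee to the formal model. It also makes explicit that under slashing alone a wrong status can still pass verification and is merely punishable, so the strong guarantee needs an honest-quorum assumption. The paper asserts outright rejection in both cases. Finally, your route surfaces the dependence of the status on still-pending \cat{}s, which the paper ignores. Note that Step 2 alone already proves the statement as written; Steps 1 and 3 establish stronger properties.

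\textbf{One inaccuracy in your obstacle paragraph.} $\texttt{maxDepth}$ bounds the length of \cat{} chains in $\calD(s,\sigma_P)$, not the number of pending \cat{}s a proposal depends on. A \cat{} depending on $k$ mutually independent pending \cat{}s of depth 1 has depth 2 for every $k$. So conditional certification may branch over $2^k$ outcomes even with $\texttt{maxDepth}=2$.

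At the implemented setting $\texttt{maxDepth}=1$ the issue vanishes. Every element of $\sigma_P$ has depth at least 1, so a \cat{} that is not ignored depends on no pending transaction. Your fallback restriction therefore holds automatically.
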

\begin{proof}
By assumption, the protection mechanism $\mathcal{P}_{\text{TP}}$ provides either:
\begin{enumerate}
    \item \textbf{Cryptographic protection}: Zero-knowledge proofs or digital signatures that are computationally infeasible to forge.
    \item \textbf{Crypto-economic protection}: Staked attestations where dishonest behavior results in economic penalties.
\end{enumerate}

\noindent For any status proposal $\status$ from TP $i$, the verification function $\text{Verify}_{\mathcal{P}_{\text{TP}}}$ ensures:
\[
\begin{gathered}
\text{Verify}_{\mathcal{P}_{\text{TP}}}(\status, T, i) = \text{true} \\[2pt]
\iff \status \text{ is honestly generated by TP } i
\end{gathered}
\]

\noindent If a TP attempts to propose an incorrect status $\status' \neq \status$, the verification will fail:
\[
\text{Verify}_{\mathcal{P}_{\text{TP}}}(\status', T, i) = \text{false}
\]

\noindent This prevents violations of the \cat{} protocol by ensuring only honest status proposals are accepted.
\end{proof}

\noindent TPs may become faulty. However, we require that the \cat{} protocol guarantees eventual liveness, even if individual TPs fail. Since we cannot enforce liveness on individual chains, we require that the coordinator employs a timeout mechanism measured in CL block heights.

\begin{proposition}[Liveness with respect to the TP]
\label{prop:follower-executor-liveness}
For any CAT $T$ and faulty TP $i$, liveness is ensured through the CL's timeout mechanisms, not through independent TP mechanisms.
\end{proposition}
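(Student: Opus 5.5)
The plan mirrors the proof of Proposition~\ref{prop:coordinator-liveness}: reduce liveness with respect to a faulty TP to the CL's liveness guarantee, Proposition~\ref{prop:cl-liveness}, combined with the timeout rule of Section~\ref{sec:cat-protocol:timeouts}. Fix a CAT $T=(t_1,\ldots,t_k)$ and a faulty TP $i$. Let $r_0$ be the CL round at which $T$ is recorded. Since $T$ was recorded on the CL, the CL's censorship-resistance and liveness assumptions make $r_0$ finite. The goal is a finite round $r_T \leq r_0+\Delta$ after which every \nonTP on every chain observes $\status_N(T,r)\neq\bot$.

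The key step is a case split on the behaviour of TP $i$, and of the coordinator acting on its proposal, within the window $[r_0, r_0+\Delta]$.

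\emph{Case 1: a valid coordinator status update for $T$ reaches the CL before round $r_0+\Delta$.} Such an update must contain the properly signed or proven proposals of all involved TPs. This is guaranteed by Propositions~\ref{prop:coordinator-safety} and~\ref{prop:safety-progress}. Once recorded, Proposition~\ref{prop:cl-liveness} gives a finite round after which all \nonTP{}s read this status.

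\emph{Case 2: no such update appears.} This covers a TP that crashes, stays silent, delays its proposal indefinitely, or submits a proposal that fails $\text{Verify}_{\mathcal{P}_{\text{TP}}}$ and is therefore rejected. In this case the timeout is measured in CL rounds, so it fires deterministically at round $r_0+\Delta$ for every \nonTP. Each \nonTP then treats the absence of an update exactly as a \failure status, as stipulated in the proof of Proposition~\ref{prop:cl-liveness}, and skips every $t_j$ at that same round. The failing TP provides no input to this decision, so its misbehaviour cannot postpone resolution beyond $r_0+\Delta$.

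Combining the two cases gives $\text{resolution}(T)\le\Delta$ in the sense of Definition~\ref{def:liveness}.

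The main obstacle is justifying that the timeout is evaluated uniformly, rather than by some TP-local mechanism that a faulty TP could stall. I will handle this by stressing that the timeout is a deterministic function of CL round numbers and CL contents only. Every \nonTP reads the same CL prefix (Proposition~\ref{prop:cl-safety}), so all of them compute the same expiry round. A faulty TP can influence at most whether Case~1 or Case~2 occurs, never whether a status exists. I will also briefly note that the unbounded waiting before round $r_0$ is excluded because $T$ is assumed to be committed to the CL; the statement is about resolution after inclusion.
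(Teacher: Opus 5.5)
Your proposal is correct and takes the same route as the paper. Both reduce to Proposition~\ref{prop:cl-liveness} and observe that the timeout is enforced by the \nonTP{}s from CL data alone, so a faulty TP is only a facilitator whose silence or invalid proposal triggers the timeout (\failure) rather than blocking resolution; your case split, the uniform-expiry argument via Proposition~\ref{prop:cl-safety}, and the explicit bound $r_0+\Delta$ are just a more detailed version of the paper's brief argument.
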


\begin{proof}
By Proposition~\ref{prop:cl-liveness}, the CL ensures that all CATs eventually reach a defined status within a finite number of blocks.
The TP's role is to execute transactions and provide status proposals, but the actual timeout enforcement and global consensus comes from the CL. If a TP $i$ fails to provide a status proposal for CAT $T$, the timeout mechanism implemented in the \nonTP{} is enabled with the help of the CL consensus.
Therefore, liveness is maintained through the CL's guarantees, with the TP acting as a facilitator for coordination rather than the source of timeout enforcement.
\end{proof}

\subsection{Summary}
\label{sec:summary}

\begin{theorem}[CAT Protocol Safety and Liveness]
\label{thm:cat-safety-liveness}
Under the assumptions of Byzantine fault tolerance for the CL and valid protection mechanisms for TPs (ZK or staking), our CAT protocol provides:
\begin{enumerate}
    \item \textbf{Safety}: All \nonTP{}s agree on CAT statuses and maintain atomicity
    \item \textbf{Liveness}: All CATs are eventually resolved within bounded timeout periods
    \item \textbf{Minimal Blocking}: Independent transactions execute in parallel, minimizing waiting time
\end{enumerate}
\end{theorem}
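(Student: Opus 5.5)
The plan is to prove Theorem~\ref{thm:cat-safety-liveness} by assembling the component-wise propositions of this section: one argument per item, each item split by protocol component.

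\textbf{Safety.} I would argue in three layers.
\begin{enumerate}
\item \emph{Authenticity of proposals.} By Proposition~\ref{prop:safety-progress}, under a valid protection mechanism $\mathcal{P}_{\text{TP}}$ (ZK or staking), only honestly generated proposals pass verification.
\item \emph{Faithfulness of the coordinator.} By Proposition~\ref{prop:coordinator-safety}, the coordinator can only record these signed proposals on the CL or drop them. It cannot forge a different status.
\item \emph{Agreement.} Proposition~\ref{prop:cl-safety} then gives intra-chain consistency. All \nonTP{}s read the same CL state at every round, and each derives $\status_N(T,r)$ deterministically from it, including the timeout rule that a missing update after $\Delta$ rounds means \failure.
\end{enumerate}
For inter-chain consistency, the CL spans all chains, so \nonTP{}s on $C_i$ and $C_j$ read the same CL. This is the key point: the timeout is measured in CL rounds, not local clocks, so every chain resolves it identically. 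Atomicity then follows directly from Proposition~\ref{prop:atomicity}.

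\textbf{Liveness.} I would invoke Proposition~\ref{prop:cl-liveness}. Under partial synchrony the CL keeps producing rounds. Within $\Delta$ rounds of $T$'s inclusion, either a verified coordinator status appears or the timeout fires and $T$ resolves to \failure. This gives $\text{resolution}(T)\le\Delta$ plus a bounded CL inclusion delay after GST, which is consistent with Definition~\ref{def:liveness}. Propositions~\ref{prop:coordinator-liveness} and \ref{prop:follower-executor-liveness} show that a faulty coordinator or TP cannot prevent this, because enforcement rests on the CL alone. The depth bound \texttt{maxDepth} keeps the pending set finite. Ignored transactions are re-examined once their blocking \cat{}s resolve, so they do not threaten termination of $T$ itself.

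\textbf{Minimal blocking.} This item is Proposition~\ref{prop:minimally-blocking-protocol}. Lemma~\ref{lemma:out-of-order-execution-independent-transactions} justifies accepting independent transactions immediately, and the correctness of doing so follows from Proposition~\ref{prop:cat-status-focus}.

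\textbf{Main obstacle.} The hard part is making the inter-chain agreement argument rigorous when the timeout races with a late coordinator update. A coordinator status might land on the CL in the same round that $\Delta$ expires. I would handle this by fixing a deterministic CL-level rule: a status update counts only if it is included in a block at height at most $h_T+\Delta$, where $h_T$ is the inclusion height of $T$. All \nonTP{}s apply the rule to identical CL data, so the race resolves the same way everywhere and no honest chain can confirm \success while another confirms \failure.
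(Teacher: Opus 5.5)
Your proposal is correct and takes essentially the same route as the paper, which also proves the theorem by composing the component results. It gets Safety from Propositions~\ref{prop:cl-safety}, \ref{prop:atomicity}, \ref{prop:coordinator-safety} and \ref{prop:safety-progress}, Liveness from Propositions~\ref{prop:cl-liveness}, \ref{prop:coordinator-liveness} and \ref{prop:follower-executor-liveness} (plus Proposition~\ref{prop:minimally-blocking-protocol} for the $\Delta$ bound), and Minimal Blocking from Proposition~\ref{prop:minimally-blocking-protocol} and Lemma~\ref{lemma:out-of-order-execution-independent-transactions}. Your explicit inter-chain step and your deterministic CL-height rule for the race between a late coordinator status and the timeout are refinements the paper leaves implicit in its requirement that timeouts be measured on the common CL, not a different argument.
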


\begin{proof}
\textbf{Safety}: By Proposition~\ref{prop:cl-safety}, the CL provides consistent views to all \nonTP{}s given a secure consensus protocol. By Proposition~\ref{prop:atomicity}, this ensures atomicity of CAT \status updates across all chains.
By Proposition~\ref{prop:coordinator-safety}, the coordinator cannot produce an incorrect \status update, only ignore it.
By Proposition~\ref{prop:safety-progress}, the TP cannot produce an incorrect status given a valid protection mechanism.

\textbf{Liveness}: By Proposition~\ref{prop:minimally-blocking-protocol} regular independent transactions can be executed without blocking, while dependent regular transactions and CATs are blocked for at most $\Delta$ blocks given a valid timeout mechanism.
By Proposition~\ref{prop:cl-liveness}, the CL enables the required timeout mechanism and ensures that all CATs eventually reach a defined status within a finite number of blocks. By Proposition~\ref{prop:coordinator-liveness} and Proposition~\ref{prop:follower-executor-liveness}, the coordinator and the TP cannot impede this progress.

\textbf{Minimal Blocking}: By Proposition~\ref{prop:minimally-blocking-protocol} and Lemma~\ref{lemma:out-of-order-execution-independent-transactions} from Section~\ref{sec:model}, independent transactions execute immediately, while dependent transactions are blocked for at most $\Delta$ rounds.
\end{proof}

\section{Implementation and Evaluation}
\label{sec:evaluation}

We have implemented the protocol described in this paper as an open-source project, \textbf{Hyperplane}\footnote{Available at \url{https://github.com/movementlabsxyz/hyperplane}}. Hyperplane implements the protocol with timeouts and a $\texttt{maxDepth}=1$, including dependency management as formalized in our model.
The codebase is organized to closely follow the formal definitions and dependency structures introduced in Section~\ref{sec:model}. 

\subsection{Experimental Setup}

Our experimental evaluation was conducted on an AWS t3.xlarge instance with 4 vCPUs and 16 GB of memory.
We simulate two chains, each with 10000 accounts preloaded with sufficient tokens for experimentation. We randomly select accounts to send transactions, where each account uses the \xtxsend{sender}{receiver}{amount}
function described in Section~\ref{sec:motivation} to send tokens to another randomly selected account following a Zipf distribution $P(k) \propto 1/k^z$, where $k$ is the account rank and $z$ is the skewness parameter. 

\begin{table}[htbp]
\centering
\renewcommand{\arraystretch}{1.3}
\begin{tabular}{lc}
\arrayrulecolor{gray!30}
\hline
\rowcolor{gray!20}
\enspace \textbf{Parameter} \enspace & \enspace \textbf{Default Value} \enspace \\
\hline
Block interval (s) & 1 \\
\hline
Transactions per second (tps) & 100 \\
\hline
Zipf parameter ($z$) & 0.8 \\
\hline
Number of accounts & 10000 \\
\hline
CATs among total transactions & 50 tps (50\%) \\
\hline
Lifetime of CATs & 10 blocks \\
\hline
$\max$ CAT dependency depth & 1 \\
\hline
Delay of slowest chain ($D$) & 5 blocks \\
\hline
\end{tabular}
\vspace{0.3cm}
\caption{Default experiment parameters}
\label{tab:exp-params}
\end{table}

We set one chain to be fast with negligible delay and one chain to be slow. As the system performance is extensively affected by the slow chain, the content of the transactions in the fast chain is not relevant for the experiments, and we, thus, send transactions duplicated to both chains.
Our protocol implements each communication path separately: user $\rightarrow$ sequencer, sequencer $\rightarrow$ CL, CL $\rightarrow$ TP, TP $\rightarrow$ coordinator, coordinator $\rightarrow$ CL. For simulation purposes, we combine all delays into a single parameter between TP and coordinator, where the slowest chain dominates system behavior. Due to the non-synchronous nature of the system, additional delays on the order of milliseconds exist on communication paths between other components due to computational limitations. The delay parameter $D$ represents the combined message delays in this system.

For default parameters in Table~\ref{tab:exp-params}, we set $D$ to 50\% of the CAT lifetime, which seems a pessimistic scenario for a production system. The block interval is set to 1 second, which is a common block interval for recent blockchains.\footnote{In a production system, the confirmation layer produces blocks approximately every second, and execution is expected to be faster than a second \cite{tonkikh2025raptrprefixconsensusrobust}.}
The number of CATs per block is set to 50\% of the throughput, which simulates a system under high cross-chain load. 
We set $z=0.8$ to create a distribution skewed towards top accounts, which reflects common patterns in cryptocurrency systems; see e.g. \cite{kusmierz2021}. 

Each data point in the following figures is calculated by repeating the same simulation several times and calculating the average value at the end. We then display the mean, min, and max values.

\subsection{Results}
\label{sec:results}

In this section we present the results of our experiments evaluating the performance and scalability of our CAT protocol. Our results demonstrate that the protocol successfully resolves CATs in a timely manner and with a good success rate, with performance characteristics that adapt well to various network conditions and workload distributions.

\paragraph{Comparison with Avalon.}
Avalon~\cite{Cai2024cats-through-state-layers} is the only directly comparable cross-chain atomic transaction protocol. Both protocols require a constant number of coordination rounds. However, Avalon relies on per-pair IBC relayers, resulting in quadratic message complexity in the number of chains. The authors report that this causes latency to scale almost linearly in practice, as relayers slow down under the growing message load---15.1s at 3 chains, 34.7s at 6, and 59.2s at 9 chains without conflicts. Our protocol avoids this bottleneck: each chain's executor sends a proposal directly to the coordinator, which aggregates all proposals in one step and publishes a single decision to the CL, with linear message complexity. At $\sim$1s CL block intervals, this yields block-level finality independent of chain count. The two-chain experimental setup is sufficient because per-chain protocol behavior is identical regardless of participant count; adding more chains does not change the per-chain communication pattern.

\paragraph{Relation to standard benchmarks.}
Cross-shard and cross-chain coordination are structurally similar problems. TPC-C~\cite{tpcc2010} is the industry-standard benchmark for evaluating online transaction processing systems; its cross-warehouse transactions directly map to CATs: each touches state on two independent partitions that must commit atomically. The TPC-C standard specifies a natural cross-partition rate of approximately 11\% of total traffic (45\% NewOrder transactions with $\sim$10\% cross-warehouse, 43\% Payment transactions with 15\% cross-warehouse). Our evaluation already covers and exceeds this regime: \figref{fig:cat-ratio} varies $r_C$ from low values (where success rates exceed 90\%) through our stress-test default of 50\%---far beyond what standard workloads prescribe.


\subsubsection{Number of CATs per block.}
\label{sec:results:cat-ratio}

We modify $n_C$ (the number of CATs per block) by varying $r_C$ (the CAT ratio), which defines the number of CATs among the total number of transactions.
\figref{fig:cat-ratio} shows the effect on the success ratio of CATs. 

\begin{figure}[htbp]
    \centering
    \includegraphics[width=0.49\textwidth,trim={0 0 0 0cm},clip]{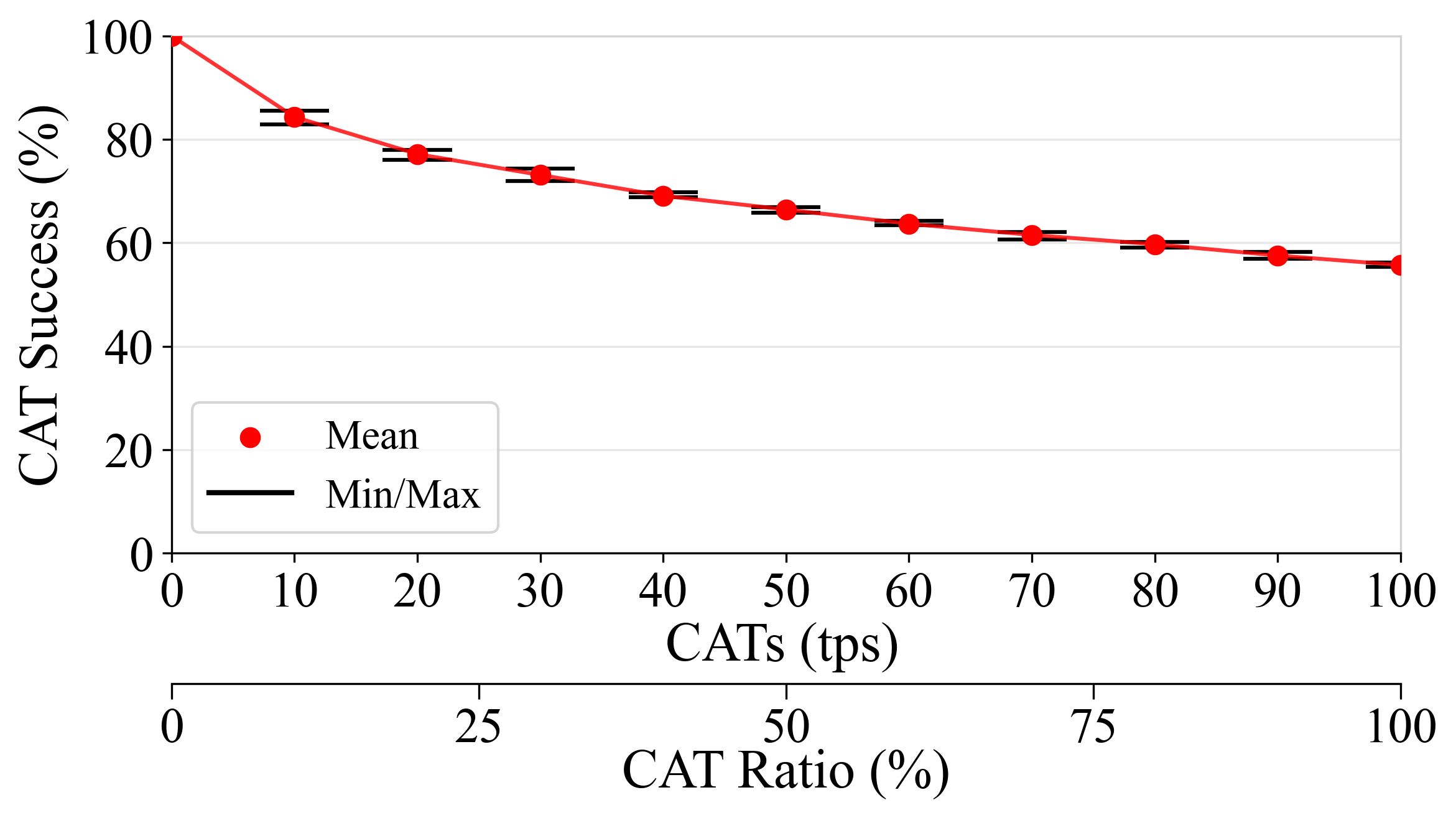}
    \caption{Average success of CATs with the number of CATs per block ($n_C$). We also display a secondary axis to put $n_C$ into context with the ratio of CATs to the total number of transactions ($r_C$).}
    \label{fig:cat-ratio}
\end{figure}

We display a secondary axis to indicate that we vary $n_C$ up to 100 tps, at which point only CATs are sent. The success ratio is significantly higher for low $n_C$ and decreases as $n_C$ increases. This is expected, as the protocol is designed to provide a non-interruptive experience for each chain's native transactions: regular transactions always make progress, at most after a short delay, regardless of pending CATs. This prioritization is enforced by the CAT timeout---a deliberately short window after which a CAT is aborted so that blocked native transactions can immediately proceed.

The declining success rate under high contention is not a cross-chain phenomenon---it is universal even on single blockchains. Large-scale empirical analysis of Solana shows overall transaction failure rates exceeding 75\% during peak congestion and contention~\cite{zheng2025solana}.

\figref{fig:cat-ratio-pending-combined} shows the impact of $n_C$ on different types of postponed transactions. As $n_C$ increases, we observe three distinct effects: (a) more regular transactions remain pending due to them waiting for a CAT to be resolved. Eventually, the number of postponed regular transactions decreases as most of the transactions are CATs. (b) More CATs remain in a resolving state as $n_C$ increases. (c) Increasingly more transactions are ignored due to dependency conflicts, highlighting the trade-off between CAT throughput and success rate.

\begin{figure}[htbp]
\centering
\includegraphics[width=0.49\textwidth,trim={0 0 0 0cm},clip]{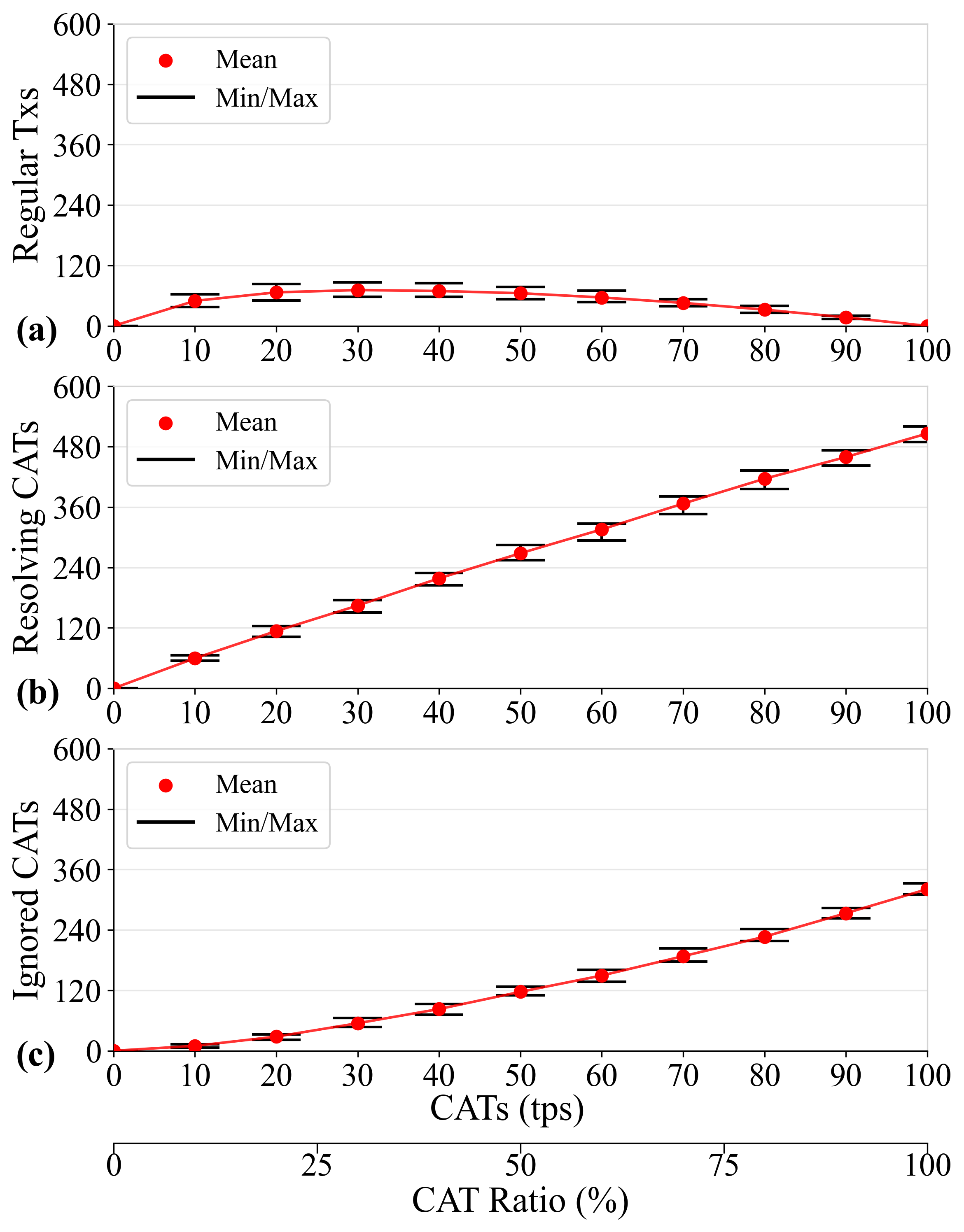}
\caption{Impact of the number of CATs per block on average number of postponed transactions: (a) regular transactions, (b) resolving CATs, and (c) temporarily ignored CATs.}
\label{fig:cat-ratio-pending-combined}
\end{figure}


\subsubsection{CAT lifetime.}
\label{sec:results:cat-lifetime}

As the CAT lifetime increases, CATs have an increased amount of time to get resolved. However, this implies that they are also blocking for an extended amount of time. In turn a large CAT lifetime exposes regular transactions that have a dependency on a pending CAT to increased latency to finality. This can be seen in \figref{fig:cat-lifetime-pending}, where the average time of regular transactions that enter the pool of pending transactions spend in that same pool (i.e., the latency) increases with the CAT lifetime. 

We note that regular transactions that are not dependent on a pending CAT are not affected in their latency, and that transactions that are dependent are maximally delayed by the CAT lifetime. This trade-off is the central design tension of the timeout parameter: the timeout must balance CAT resolution probability against native transaction blocking time. A shorter timeout prioritizes native transaction liveness at the cost of lower CAT success rates, while a longer timeout improves CAT completion but increases worst-case latency for dependent regular transactions, bounded by the lifetime.

\begin{figure}[htbp]
\centering
\includegraphics[width=0.49\textwidth,trim={0 0 0 0cm},clip]{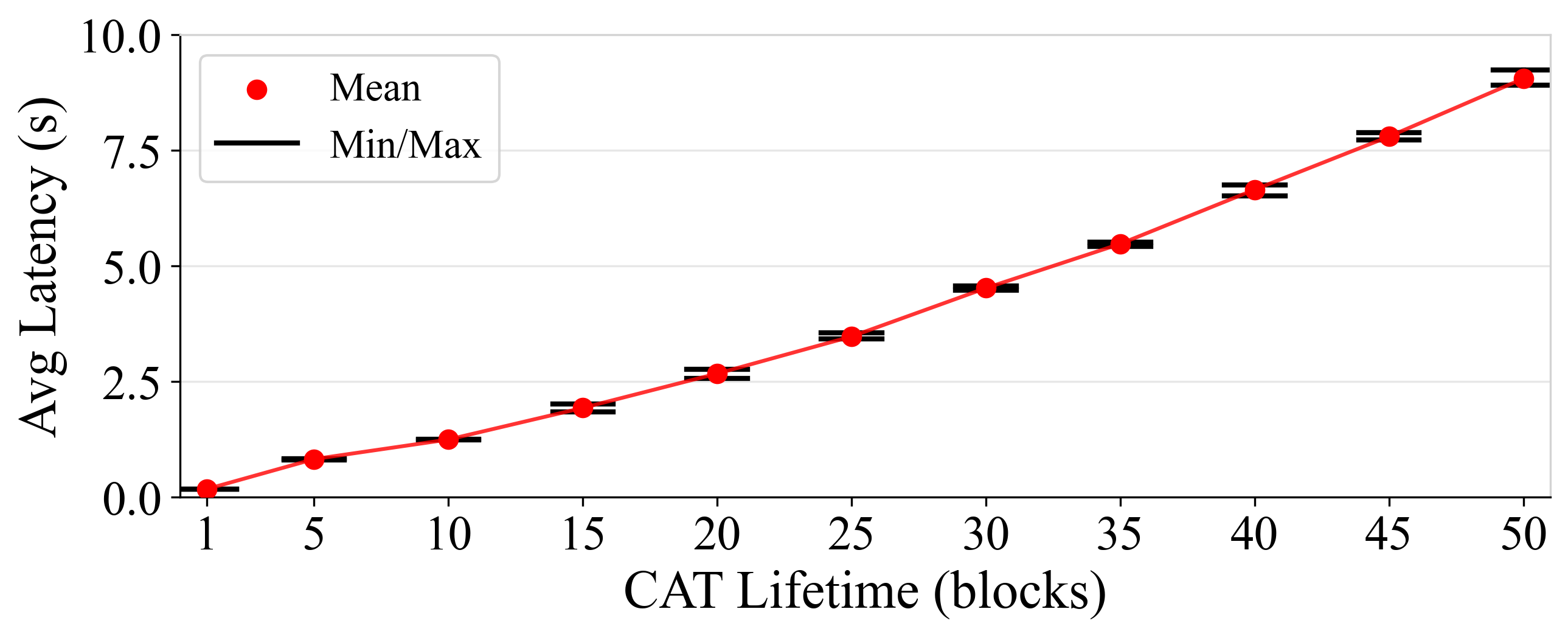}
\caption{Impact of CAT lifetime on the average latency of regular transactions that have a dependency on a CAT.}
\label{fig:cat-lifetime-pending}
\end{figure}


\subsubsection{Chain delay.}
\label{sec:results:chain-delay}

\figref{fig:chain-delay} shows the effect of the chain delay $D$ on the success rate of CATs. The success rate slowly decreases as we increase the chain delay to the CAT lifetime. 
When $D$ is close to the CAT lifetime, mainly CATs are successful that access non-contended keys, while most other CATs would fail due to timeout. This is because a CAT that has to wait for a contended key for $\delta$ blocks would require $D+\delta$ blocks to propose and resolve, which would likely be larger than the CAT lifetime. 
When $D$ is larger than the CAT lifetime, the success rate is 0, as no CATs can be resolved in time.

\begin{figure}[htbp]
\centering
\includegraphics[width=0.49\textwidth,trim={0 0 0 0cm},clip]{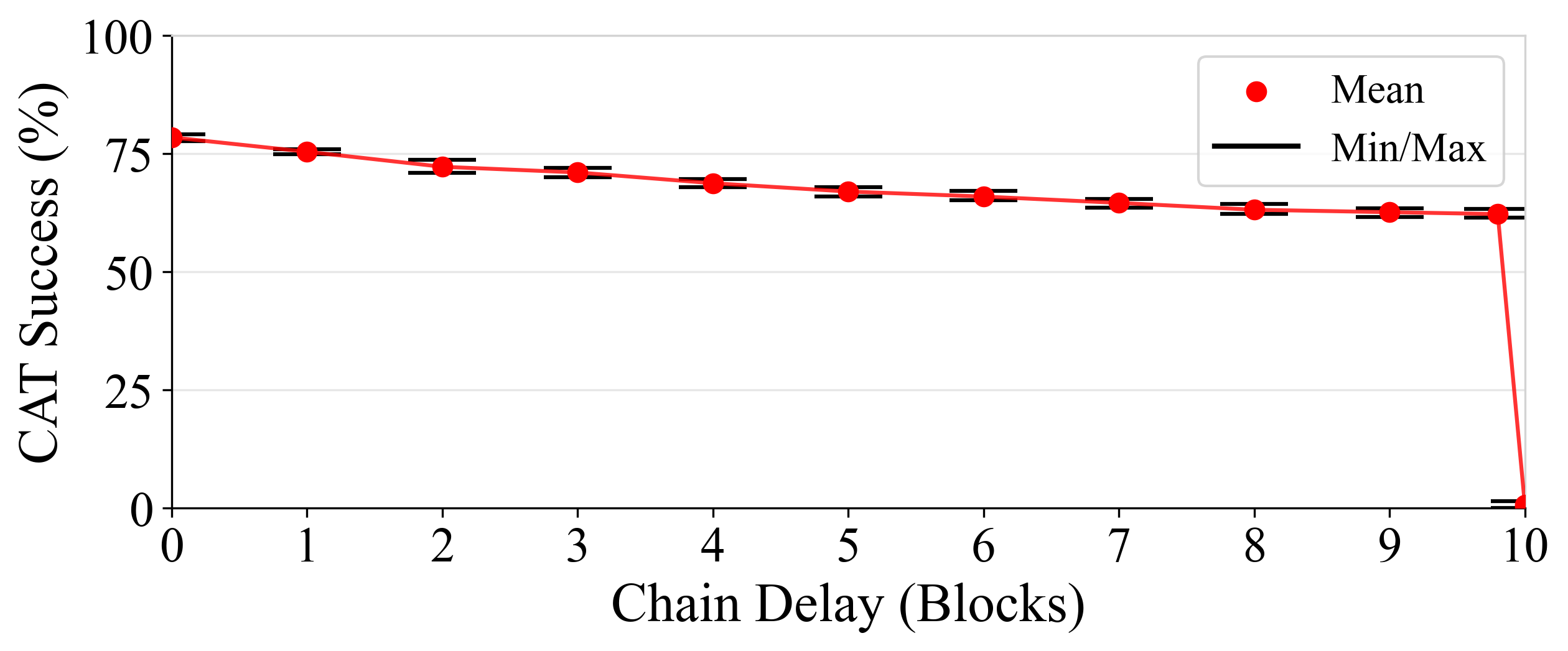}
\vspace{-0.2cm}
\caption{Average success of CATs with the chain delay.}
\label{fig:chain-delay}
\end{figure}


\subsubsection{Number of regular transactions per block.}
\label{sec:results:regular-tx-per-block}

\figref{fig:cat-ratio-v2} shows the success rate of CATs with the number of total transactions per block. We keep the average number of CATs per block constant to the value of 50.
As the throughput increases, the success rate of CATs is decreasing. Since regular transactions that depend on a CAT also block keys temporarily, the footprint of a CAT on locked keys is increasing with the throughput of regular transactions, which is one of the key reasons to keep lifetimes of CATs short.

\begin{figure}[htbp]
\centering
\includegraphics[width=0.49\textwidth,trim={0 0 0 0cm},clip]{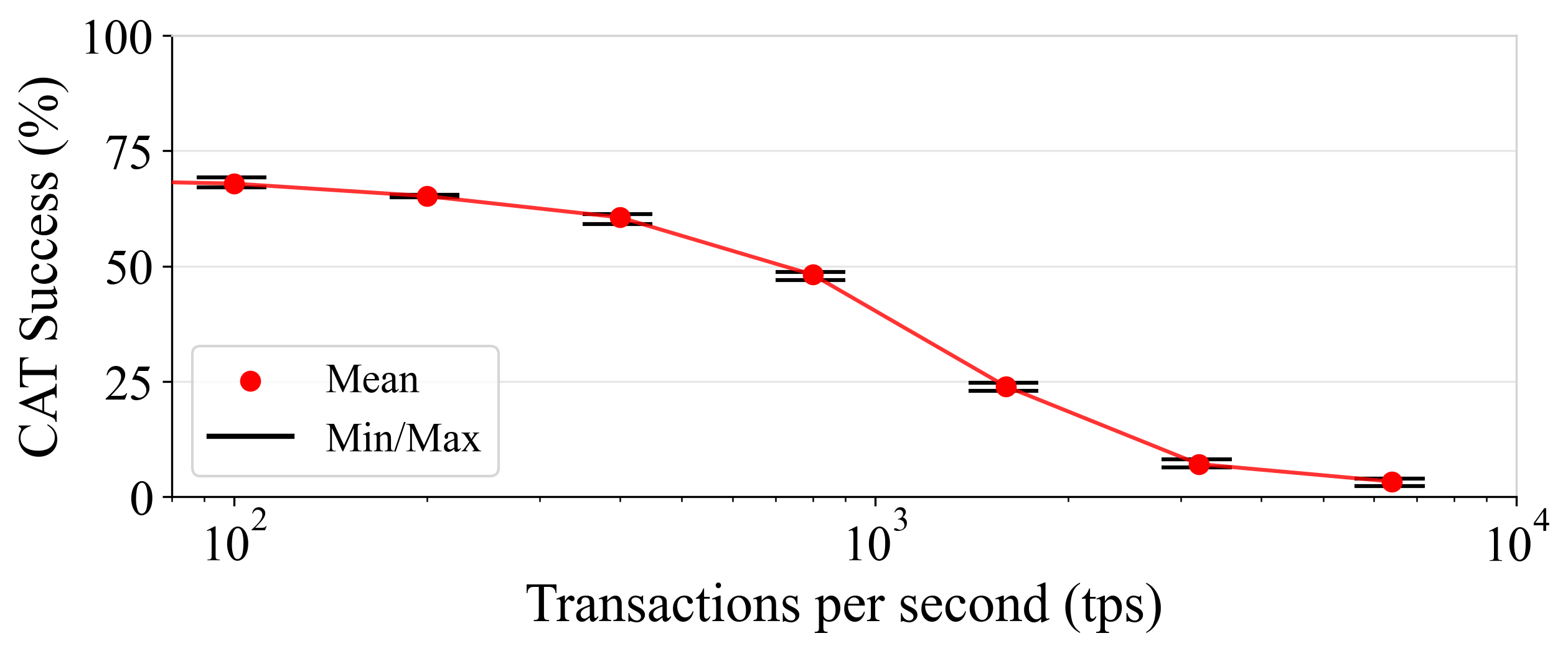}
\caption{Average success of CATs with the throughput, and while keeping the average number of CATs per block constant as per Table~\ref{tab:exp-params}.}
\label{fig:cat-ratio-v2}
\end{figure}


\subsubsection{Centralization of key accesses.}
\label{sec:results:centralization-key-accesses}

\figref{fig:centralization-key-accesses} shows the effect of the centralization of key accesses on the success rate of CATs. We vary the centralization of key accesses by varying the skewness parameter $z$ of the Zipf distribution. The success rate of CATs is high for low centralization, and decreases as the centralization increases, due to increased contention for higher ranked keys (see also the discussion of contention effects in \secref{sec:results:cat-ratio}). We also compare for different values of $r_C$, the ratio of CATs to the total number of transactions. We see the protocol performs particularly well at low centralization of account access, even if $r_C$ is high.

\begin{figure}[htbp]
\centering
\includegraphics[width=0.49\textwidth,trim={0 0 0 1.0cm},clip]{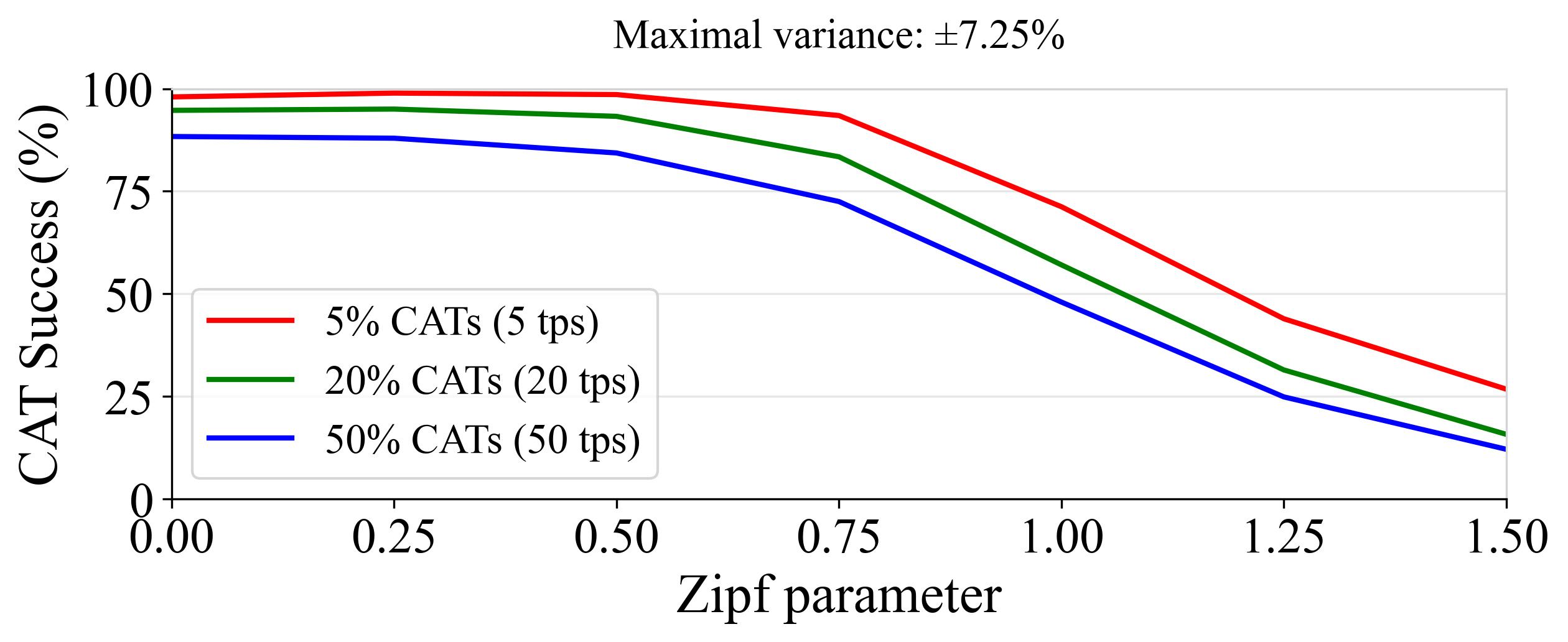}
\caption{Impact of the centralization of key accesses on the average success rate of CATs. We compare for different values of $r_C$, the ratio of CATs to the total number of transactions. Maximal variance of data points is less than 3\%.}
\label{fig:centralization-key-accesses}
\end{figure}

\section{Related Work}
\label{sec:related-work}

In the following, we highlight state-of-the-art blockchain interoperability across several ecosystems.

\subsubsection{Ethereum}

Ethereum is retrofitting interoperability onto a fragmented ecosystem of rollups. Solutions like AggLayer and Espresso aim to coordinate execution and liquidity through shared sequencing and unified bridging, while anchoring security to Ethereum L1. AggLayer provides a shared execution layer between rollups and Ethereum, using zk-proofs to post batched state changes~\cite{espresso2024agglayer,polygon2024misconceptions}.
Espresso~\cite{EspressoSequencer2024} provides a shared sequencing layer using BFT consensus (HotShot) that produces a unified transaction log across participating chains, aiming to enable cross-chain composability~\cite{EspressoCatalyst2023}. The HotShot protocol, including its threat model and data-availability layer (Tiramisu), is described in an ePrint preprint~\cite{EspressoSequencer2024}; however, it focuses on consensus and data availability rather than cross-chain execution semantics. Espresso's CIRC protocol adds inbox/outbox coordination for cross-rollup communication, but to our knowledge has no formal specification or peer-reviewed publication that would permit comparative study. A HotShot-like confirmation layer could serve as the infrastructure underneath our protocol.

\subsubsection{General Message Passing}

General Message Passing (GMP) protocols such as LayerZero~\cite{layerzero2024} provide cross-chain message delivery with verification. However, it does not guarantee atomic execution of interdependent operations across chains: if the destination-side action fails, there is no coordinated abort on the source side. Hash-time-lock contracts (HTLCs) can extend GMP to atomic asset swaps, but are limited to pairwise exchanges and require extensive time-locks. Neither GMP alone nor GMP+HTLC supports atomic execution of general transactions with shared-state dependencies across chains.
Lu~et~al.~\cite{lu2024atomicity} build a two-phase commit (2PC) protocol on top of GMP bridges, achieving atomic execution of general cross-chain transactions. Their protocol uses coarse-grained pessimistic locking: all state variables of affected contracts are locked, blocking even unrelated transactions that access different variables of the same contract. By contrast, our protocol tracks fine-grained read/write dependencies (\secref{sec:dependencies}), postponing only truly dependent transactions. Moreover, no timeout or fallback mechanism is specified if the proposer becomes unresponsive, leaving liveness without guarantees.

\subsubsection{Polkadot}

Polkadot provides interoperability via its Relay Chain, which coordinates shared consensus and security across parachains~\cite{burdges2020polkadot,polkadot2023overview}. Cross-chain communication uses XCM, an asynchronous message format that follows a fire-and-forget model~\cite{wood2021xcm}: the sender does not block on completion, and if destination-side execution fails, there is no automatic rollback on the source chain. Error-handling instructions operate only on the destination chain and cannot trigger a coordinated abort across chains. Thus, despite being a multi-chain system with native cross-chain communication, Polkadot does not provide atomic cross-chain execution.

\subsubsection{Cosmos}

Cosmos uses the IBC protocol for asynchronous, trust-minimized messaging across sovereign chains~\cite{chervinski2023ibc}. IBC provides message passing but not atomic execution; each chain maintains its own validators and consensus.
Avalon~\cite{Cai2024cats-through-state-layers} builds on IBC to achieve complete atomicity across Cosmos zones, introducing a \emph{dirty state layer} for caching state changes before commit and applying optimistic concurrency control (OCC) with a state synchronization protocol.
While similar in intent, Avalon's dirty queue commits transactions in strict order: all subsequent transactions in the dirty queue are aborted, even if they access disjoint state. By contrast, our protocol builds an explicit dependency graph that enables independent transactions to proceed in parallel past pending CATs, and bounds cascading dependencies via the $\texttt{maxDepth}$ parameter (\secref{sec:model}). We compare latency and scaling characteristics in \secref{sec:results}.

\subsubsection{Relation to Cross-chain Atomicity and Sharding}

Herlihy~\cite{herlihy2019cross} formalizes the cross-chain atomicity problem and shows that under semi-synchronous communication - a realistic assumption for blockchain systems - a globally shared ledger is required for coordination; our confirmation layer is precisely such a ledger.
The same coordination challenge arises in sharded blockchains, where independent partitions must agree on joint commit-or-abort decisions.
CAPER~\cite{amiri2019caper} addresses cross-application transactions in permissioned blockchains but conflates ordering and execution trust in permissioned chain nodes; we separate them - execution correctness is enforced via ZK proofs or staking on the transaction processor, while ordering is the confirmation layer's exclusive responsibility.
ByShard~\cite{hellings2023byshard} validates the coordinator-then-local-execution pattern in a Byzantine cross-shard setting: their Orchestrate-Execute Model applies a similar structure - a coordinator decides commit/abort atomically, then each shard executes locally.
Our protocol unifies these insights: it applies the shared-ledger requirement established by Herlihy via the confirmation layer, while adopting the coordinator-then-execute pattern validated by ByShard, in a permissionless cross-chain setting.

\subsubsection{Comparison of Cross-chain Approaches}

Table~\ref{tab:comparison} compares protocols that target atomic cross-chain execution and have published protocol details. Polkadot and Cosmos/IBC are excluded as they provide message passing but not atomic execution; we could not identify peer-reviewed protocol specifications or performance studies for AggLayer or Espresso/CIRC (\cite{EspressoSequencer2024} covers consensus and DA but not cross-chain execution). GMP+HTLC denotes GMP combined with hash-time-lock contracts; GMP+2PC denotes the two-phase commit protocol of Lu~et~al.

\begin{table}[htbp]
\centering
\renewcommand{\arraystretch}{1.3}
\footnotesize
\resizebox{\columnwidth}{!}{%
\begin{tabular}{@{}lcccc@{}}
\toprule
& \textbf{CATs} & \textbf{Avalon} & \textbf{GMP+2PC} & \textbf{GMP+HTLC} \\
& (ours) & \cite{Cai2024cats-through-state-layers} & \cite{lu2024atomicity} & \\
\midrule
Generic atomic exec. & \checkmark & \checkmark & \checkmark & Swaps \\
Dependency tracking\footnotemark & \checkmark & $\times$ & $\times$ & $\times$ \\
Liveness guarantee & \checkmark & \checkmark & $\times$ & Time-locks \\
Comm.\ complexity & $O(n)$ & $O(n^2)$ & $O(n)$ & Pairwise \\
\multirow{2}{*}{Reported latency} & \multirow{2}{*}{$\sim$1-2s} & 15s & $\sim$300s & \multirow{2}{*}{Time-locks} \\
 & & (3 chains) & (2 chains) & \\
\bottomrule
\end{tabular}%
}
\vspace{4pt}
\caption{Comparison of cross-chain coordination approaches. Communication complexity is given in the number of participating chains $n$.}
\label{tab:comparison}
\end{table}
\footnotetext{Explicit tracking of read/write dependencies between cross-chain and regular transactions (\secref{sec:dependencies}). This allows independent transactions to proceed in parallel past pending CATs, while only truly dependent transactions are postponed. See the Avalon comparison above for the contrasting approach.}

\section{Conclusion}
\label{sec:conclusion}

We have presented a novel protocol for cross-chain atomic transactions that addresses the fundamental challenges of blockchain interoperability: fragmented liquidity, high latency, and lack of atomicity. Our approach introduces a shared coordination layer with executors, coordinators, and a confirmation layer that enables deterministic, secure cross-chain execution while preserving chain autonomy.

The protocol's key innovations include simulation before commitment to avoid rollback complexity, dependency tracking to enable independent transaction execution, and timeout mechanisms to ensure liveness under asynchronous conditions. Through our implementation in Hyperplane, we demonstrate that the protocol efficiently achieves atomicity without centralized coordination or speculative rollback.

Our experimental evaluation reveals critical performance characteristics and trade-offs.
The protocol maintains robust performance as the proportion of CATs increases, provided that access pattern centralization remains moderate.
This resilience to transaction pattern variations makes the protocol suitable for diverse real-world applications.
We find that CAT success rates are sensitive to chain delays when delays approach the CAT lifetime threshold. On the other hand, while increasing CAT lifetime would improve success rates, it would increase pending regular transaction backlogs, which in turn extends finality times for dependent regular transactions, highlighting the importance of careful parameter tuning for specific use cases.

Our analysis establishes strong safety and liveness guarantees, showing that the protocol provides Byzantine fault tolerance while maintaining minimal blocking through dependency-aware scheduling. The protocol provides a foundation for truly composable cross-chain applications, enabling the seamless multi-chain experience that users expect while maintaining the security and decentralization properties of blockchain systems.

\paragraph{Future work.} Several directions remain for future investigation. First, while $\texttt{maxDepth}=1$ is the motivated operating point, an experimental sweep over varying dependency-depth bounds would further characterize the trade-off between blocking reduction and CAT success rate. Second, extending the evaluation to more than two chains would empirically confirm the protocol's constant-round scaling property. Third, evaluating the protocol under standardized database benchmarks such as TPC-C would enable direct comparison with the cross-shard literature.

\appendix
\LinesNotNumbered
\section{Algorithms}
\label{sec:algorithms}

This appendix provides detailed algorithmic descriptions of the key components of our CAT protocol. These algorithms implement the theoretical framework described in Section~\ref{sec:protocol} of the main text and provide concrete implementations for the resolver, executor, and coordinator components.

The algorithms are designed to work together in a coordinated manner, where each component has specific responsibilities:
\begin{itemize}
    \item \textbf{Resolver}: Processes pending transactions and resolves their status based on coordinator decisions
    \item \textbf{Executor}: Manages transaction execution, reordering, and dependency resolution
    \item \textbf{Coordinator}: Orchestrates cross-chain coordination and resolves CAT outcomes
\end{itemize}

\subsection{Resolver Algorithm}
\label{sec:algo:resolver}

The resolver is responsible for processing the pending transaction set from previous rounds and updating transaction statuses based on coordinator decisions, recorded at the Confirmation Layer.

This algorithm implements the resolver component described in Section~\ref{sec:protocol} of the main text. Algorithm~\ref{alg:resolver} shows the complete resolver implementation.

\begin{algorithm}[t]
  \caption{Resolver Algorithm}
  \label{alg:resolver}
  \KwIn{Resolved statuses $\status^{r}$ and pending sequence $\sigma_P^{r-1}$, where $r$ is the current round}
  \KwOut{Resolved transaction sequence $\tilde{\sigma}^r$}

  \vspace{1em}
  Initialize empty sequence $\tilde{\sigma}^r\gets [\,]$\;

  \For{each transaction $t$ in $\sigma_P^{r-1}$}{
    \uIf{$t$ is not a CAT}{
      Append $t$ to $\tilde{\sigma}^r$\;
    }
    \Else{
      Let $T$ be the CAT identifier in $t$\;
      \uIf{$\status^{r}[T] = \success$}{
          Append $t$ to $\tilde{\sigma}^r$\tcp*{the CAT becomes a regular transaction}
      }
      \uElseIf{$\status^{r}[T] = \failure$}{
        Append $\Skip$ to $\tilde{\sigma}^r$\;
      }
      \Else{
        Append $t$ to $\tilde{\sigma}^r$ \emph{(with flag pending)}\;
      }
    }
  }

  \Return $\tilde{\sigma}^r$\;
\end{algorithm}

\subsection{Executor with Timeouts Algorithm}
\label{sec:algo:executor-with-timeouts}

This version of the executor introduces timeout mechanisms to prevent indefinite blocking and ensure protocol liveness. The timeout mechanism is crucial for maintaining system responsiveness while preserving the safety guarantees of the CAT protocol.

This algorithm implements the timeout mechanism described in Section~\ref{sec:cat-protocol:timeouts} of the main text. Algorithm~\ref{alg:executor-with-timeouts} shows the complete executor with timeouts implementation. The timeout-related additions are marked in red.

\begin{algorithm}[t]
  \caption{Executor Algorithm with Timeout}
  \label{alg:executor-with-timeouts}
  \KwIn{State $s^{r-1}$,\\
        \hspace{3.1em} Transaction sequence $\sigma = \tilde{\sigma}^r . \sigma^r$ with new CATs in $\sigma^r$ marked \pending,\\
        \hspace{3.1em} \red{timeout threshold $\Delta$ and current round $r$},\\
        \hspace{3.1em} \red{map $\mathsf{timestamp}(T)$ which provides the received round for CAT $T$}}
  \KwOut{Accepted sequence $\sigma_A^r$, postponed sequence $\sigma_P^r$, proposal map $\propose$, updated state $s^r$}

  \vspace{1em}
  Initialize empty sequences $\sigma_A^r \gets [\,]$, $\sigma_P^r \gets [\,]$, and empty map $\propose \gets \emptymap$\;

  \For{each transaction $t$ in $\sigma$, in order}{
    Compute local outcome $\mathsf{result} \in \{\success, \failure\}$\;
    
    \uIf{$t$ is regular}{
      \uIf{$t \independent{s^{r-1}} \sigma_P^r[i]$ for all $i$}{
        Append $t$ to $\sigma_A^r$\;
      }
      \Else{
        Append $t$ to $\sigma_P^r$\;
      }
    }
    \ElseIf{$t$ is a \pending CAT}{
      Let $T$ be the CAT identifier in $t$\;
      \uIf{\red{$\mathsf{timestamp}(T) \le r - \Delta$}}{
        \red{Append $\Skip$ to $\sigma_A^r$} \tcp*{\red{forced resolution via timeout}}
      }
      \Else{
        \If{$T$ has not been proposed in any earlier round}{
          Insert $(T, \mathsf{result})$ into $\propose$\;
        }
        Append $t$ to $\sigma_P^r$\;
      }
    }
  }

  Compute updated state: $s^r \gets \Next(s^{r-1}, \sigma_A^r)$\;

  \Return $\sigma_A^r$, $\sigma_P^r$, $\propose$, $s^r$\;
\end{algorithm}

\subsection{Executor with Dependency Depth Limiting Algorithm}
\label{sec:algo:executor-v2}

This version of the executor introduces dependency depth limiting to prevent resource exhaustion attacks and improve system scalability. This algorithm implements dependency management strategies that balance performance with security considerations.

This algorithm implements the dependency depth limiting approach described in Section~\ref{sec:cat-protocol:limiting-dependency-depth} of the main text. Algorithm~\ref{alg:executor-depth-limited} shows the complete executor with dependency depth limiting implementation. The depth-limiting additions are marked in blue.

\begin{algorithm}[t]
  \caption{Executor Algorithm with Dependency Depth Limit}
  \label{alg:executor-depth-limited}
  \KwIn{State $s^{r-1}$,
  Transaction sequence \blue{$\sigma = \tilde{\sigma}^r .\calI^{r-1} . \sigma^r$} with new CATs in $\sigma^r$ marked \pending,\\
  \hspace{3.1em} \blue{and maximum depth $\texttt{maxDepth}$}}
  \KwOut{Accepted sequence $\sigma_A^r$, postponed sequence $\sigma_P^r$, \blue{ignored sequence $\calI^{r}$}, proposal map $\propose$, updated state $s^r$}

  \vspace{1em}
  Initialize empty sequences $\sigma_A^r \gets [\,]$, $\sigma_P^r \gets [\,]$, \blue{$\calI^{r} \gets [\,]$}, and empty map $\propose \gets \emptymap$\;

  \For{each transaction $t$ in $\sigma$, in order}{
    Compute local outcome $\mathsf{result} \in \{\success, \failure\}$\;
    \blue{Compute $\depth_{s^{r-1}, \sigma_P^r}(t)$}\;

    \uIf{\blue{$t$ is a \cat{} and $\depth_{s^{r-1}, \sigma_P^r}(t) > \texttt{maxDepth}$}}{
      \blue{Add $t$ to the ignored set $\calI^{r}$}\;
    }
    \uElseIf{\blue{$t$ is not a \cat{} and $\depth_{s^{r-1}, \sigma_P^r}(t) > \texttt{maxDepth}$}}{
      \blue{Add $t$ to the ignored set $\calI^{r}$}\;
    }
    \uElseIf{$t$ is regular}{
      \uIf{$t \independent{s^{r-1}} \sigma_P^r[i]$ for all $i$}{
        Append $t$ to $\sigma_A^r$\;
      }
      \Else{
        Append $t$ to $\sigma_P^r$\;
      }
    }
    \ElseIf{$t$ is a \pending CAT}{
      Let $T$ be the CAT identifier in $t$\;
      \If{$T$ has not been proposed in any earlier round}{
        Insert $(T, \mathsf{result})$ into $\propose$\;
      }
      Append $t$ to $\sigma_P^r$\;
    }
  }

  Compute updated state: $s^r \gets \Next(s^{r-1}, \sigma_A^r)$\;

  \Return $\sigma_A^r$, $\sigma_P^r$, $\propose$, \blue{$\calI^{r}$}, $s^r$\;
\end{algorithm}

\subsection{Coordinator Algorithm}
\label{sec:algo:coordinator}

The coordinator serves as the component where proposals are aggregated from the executor of all participating chains and which forwards a status to the Confirmation Layer once all required proposals are received.

This algorithm implements the coordinator component described in Section~\ref{sec:coordinator} of the main text. Algorithm~\ref{alg:coordinator} shows the complete coordinator implementation.

\begin{algorithm}[t]
  \caption{Event-Driven CAT Status Resolution by Coordinator}
  \label{alg:coordinator}
  \KwIn{Proposal $\propose_c^r$ from chain $c$ for round $r$,\\
  \hspace{3.2em} which is a partial map: $T \mapsto \status$ for some CATs $T$}
  
  \vspace{1em}
  Initialize empty queue $Q \gets [\,]$\;
  
  \For{each CAT $T \in \mathrm{domain}(\propose_c^r)$}{
    Store proposed $\status$ for $T$ from chain $c$\;
    \If{proposals from all chains for $T$ are received}{
      \uIf{all chains propose \textsf{success}}{
        $\status^r[T] \gets \success$\;
      }
      \uElse{
        $\status^r[T] \gets \failure$\;
      }
      Insert $(T, \status^r[T])$ into $Q$\;
    }
  }
  \If{$Q$ is not empty}{
    Send all entries in $Q$ to the Confirmation Layer (CL)\;
  }
\end{algorithm}

\bibliographystyle{IEEEtran}
\bibliography{references}

@inproceedings{Zamyatin2021,
  author    = {Zamyatin, Alexei
               and Al-Bassam, Mustafa
               and Zindros, Dionysis
               and Kokoris-Kogias, Eleftherios
               and Moreno-Sanchez, Pedro
               and Kiayias, Aggelos
               and Knottenbelt, William J.},
  editor    = {Borisov, Nikita
               and Diaz, Claudia},
  title     = {SoK: Communication Across Distributed Ledgers},
  booktitle = {Financial Cryptography and Data Security},
  year      = {2021},
  publisher = {Springer Berlin Heidelberg},
  address   = {Berlin, Heidelberg},
  pages     = {3--36},
  isbn      = {978-3-662-64331-0}
}

@inproceedings{Gelashvili2023,
  author    = {Gelashvili, Rati and Spiegelman, Alexander and Xiang, Zhuolun and Danezis, George and Li, Zekun and Malkhi, Dahlia and Xia, Yu and Zhou, Runtian},
  title     = {Block-STM: Scaling Blockchain Execution by Turning Ordering Curse to a Performance Blessing},
  year      = {2023},
  isbn      = {9798400700156},
  publisher = {Association for Computing Machinery},
  address   = {New York, NY, USA},
  url       = {https://doi.org/10.1145/3572848.3577524},
  doi       = {10.1145/3572848.3577524},
  booktitle = {Proceedings of the 28th ACM SIGPLAN Annual Symposium on Principles and Practice of Parallel Programming},
  pages     = {232--244},
  numpages  = {13},
  location  = {Montreal, QC, Canada},
  series    = {PPoPP '23}
}

@misc{polygon2024misconceptions,
  author       = {Polygon Labs},
  title        = {Clearing Up Agglayer Misconceptions},
  year         = {2024},
  month        = nov,
  url          = {https://polygon.technology/blog/clearing-up-agglayer-misconceptions},
  note         = {Accessed: 2025-07-24}
}

@misc{espresso2024agglayer,
  author       = {EspressoSystems},
  title        = {Espresso is solving rollup interoperability with the AggLayer and Polygon Labs},
  year         = {2024},
  month        = may,
  howpublished = {https://medium.com/@espressosys/espresso-is-solving-rollup-interoperability-with-the-agglayer-and-polygon-labs-b3a7d2f8f7cf},
  note         = {Accessed: 2025-07-24}
}

@misc{agglayer2025,
  author       = {{AggLayer}},
  title        = {AggLayer Documentation},
  year         = {2025},
  url          = {https://docs.agglayer.dev/},
  note         = {Accessed: 2025-09-02}
}

@misc{Cai2024cats-through-state-layers,
      author = {Yuandi Cai and Ru Cheng and Yifan Zhou and Shijie Zhang and Jiang Xiao and Hai Jin},
      title = {Enabling Complete Atomicity for Cross-chain Applications Through Layered State Commitments},
      howpublished = {Cryptology {ePrint} Archive, Paper 2024/1084},
      year = {2024},
      url = {https://eprint.iacr.org/2024/1084}
}

@article{zakhary2020atomic,
  title     = {Atomic Commitment Across Blockchains},
  author    = {Zakhary, Victor and Agrawal, Divyakant and El Abbadi, Amr},
  journal   = {Proceedings of the VLDB Endowment},
  volume    = {13},
  number    = {9},
  pages     = {1319--1331},
  year      = {2020},
  publisher = {VLDB Endowment},
  url       = {https://www.vldb.org/pvldb/vol13/p1319-zakhary.pdf}
}

@misc{EspressoCatalyst2023,
  title = {Espresso Systems and Catalyst Collaborate to Improve Interoperability},
  author = {{Espresso Systems}},
  year = {2023},
  howpublished = {https://medium.com/@espressosys/espresso-systems-and-catalyst-collaborate-to-improve-interoperability-239addbe2c2b},
  note = {Accessed 2025-07-24}
}

@misc{EspressoSequencer2024,
  title = {The Espresso Sequencing Network: {HotShot} Consensus, {Tiramisu} Data-Availability, and Builder-Exchange},
  author = {Jeb Bearer and Benedikt B{\"u}nz and Philippe Camacho and Binyi Chen and Ellie Davidson and Ben Fisch and Brendon Fish and Gus Gutoski and Fernando Krell and Chengyu Lin and Dahlia Malkhi and Kartik Nayak and Keyao Shen and Alex Xiong and Nathan Yospe and Sishan Long},
  year = {2024},
  howpublished = {Cryptology ePrint Archive, Paper 2024/1189},
  url = {https://eprint.iacr.org/2024/1189},
  note = {Preprint}
}

@article{lu2024atomicity,
  title   = {Atomicity and Abstraction for Cross-Blockchain Interactions},
  author  = {Lu, Huaixi and Jajoo, Akshay and Namjoshi, Kedar S.},
  journal = {arXiv preprint arXiv:2403.07248},
  year    = {2024},
  url     = {https://arxiv.org/abs/2403.07248}
}

@misc{polkadot2023overview,
  author       = {{Polkadot Developers}},
  title        = {Overview: Polkadot Chain Architecture},
  year         = {2023},
  url          = {https://docs.polkadot.com/polkadot-protocol/architecture/polkadot-chain/overview},
  note         = {Accessed: 2025-07-24}
}

@article{burdges2020polkadot,
  author       = {Burdges, Jeff and Cevallos, Alfonso and Czaban, Peter and Habermeier, Rob and Hosseini, Syed and Lama, Fabio and Alper, Handan Kilinc and Luo, Ximin and Shirazi, Fatemeh and Stewart, Alistair and Wood, Gavin},
  title        = {Overview of {Polkadot} and its Design Considerations},
  journal      = {arXiv preprint arXiv:2005.13456},
  year         = {2020}
}

@misc{wood2021xcm,
  author       = {Wood, Gavin},
  title        = {{XCM} Part {III}: Execution and Error Management},
  year         = {2021},
  howpublished = {Polkadot Blog},
  url          = {https://medium.com/polkadot-network/xcm-part-iii-execution-and-error-management-ceb8155dd166},
  note         = {Accessed: 2026-03-08}
}

@misc{erc7683,
  title        = {ERC-7683: Standard for Crosschain Intents},
  howpublished = {\url{https://www.erc7683.org/}},
  note         = {Accessed: 2025-07-25},
  year         = {2024},
  author       = {{ERC-7683 Contributors}},
}

@misc{gelashvili2022blockstmscalingblockchainexecution,
      title={Block-STM: Scaling Blockchain Execution by Turning Ordering Curse to a Performance Blessing}, 
      author={Rati Gelashvili and Alexander Spiegelman and Zhuolun Xiang and George Danezis and Zekun Li and Dahlia Malkhi and Yu Xia and Runtian Zhou},
      year={2022},
      eprint={2203.06871},
      archivePrefix={arXiv},
      primaryClass={cs.DC},
      url={https://arxiv.org/abs/2203.06871}, 
}

@misc{müller2023realitybasedutxoledger,
      title={Reality-based UTXO Ledger}, 
      author={Sebastian Müller and Andreas Penzkofer and Nikita Polyanskii and Jonas Theis and William Sanders and Hans Moog},
      year={2023},
      eprint={2205.01345},
      archivePrefix={arXiv},
      primaryClass={cs.DC},
      url={https://arxiv.org/abs/2205.01345}, 
}

@article{kung1981optimistic,
  author    = {H. T. Kung and John T. Robinson},
  title     = {On Optimistic Methods for Concurrency Control},
  journal   = {ACM Transactions on Database Systems (TODS)},
  volume    = {6},
  number    = {2},
  pages     = {213--226},
  year      = {1981},
  publisher = {ACM},
  doi       = {10.1145/319566.319567}
}

@article{lamport1998parttime,
author = {Lamport, Leslie},
title = {The part-time parliament},
year = {1998},
issue_date = {May 1998},
publisher = {Association for Computing Machinery},
address = {New York, NY, USA},
volume = {16},
number = {2},
issn = {0734-2071},
url = {https://doi.org/10.1145/279227.279229},
doi = {10.1145/279227.279229},
journal = {ACM Trans. Comput. Syst.},
month = may,
pages = {133–169},
numpages = {37}
}

@inproceedings{castro1999practical,
author = {Castro, Miguel and Liskov, Barbara},
title = {Practical Byzantine fault tolerance},
year = {1999},
isbn = {1880446391},
publisher = {USENIX Association},
address = {USA},
booktitle = {Proceedings of the Third Symposium on Operating Systems Design and Implementation},
pages = {173–186},
numpages = {14},
location = {New Orleans, Louisiana, USA},
series = {OSDI '99}
}

@article{dwork1988consensus,
author = {Dwork, Cynthia and Lynch, Nancy and Stockmeyer, Larry},
title = {Consensus in the presence of partial synchrony},
year = {1988},
issue_date = {April 1988},
publisher = {Association for Computing Machinery},
address = {New York, NY, USA},
volume = {35},
number = {2},
issn = {0004-5411},
url = {https://doi.org/10.1145/42282.42283},
doi = {10.1145/42282.42283},
journal = {J. ACM},
month = apr,
pages = {288–323},
numpages = {36}
}

@article{kusmierz2021,
  author  = {Bartosz Ku{\'s}mierz and Sebastian M{\"u}ller and Angelo Capossele},
  title   = {Committee Selection in DAG Distributed Ledgers and Applications},
  journal = {Intelligent Computing},
  year    = {2021}
}

@misc{tonkikh2025raptrprefixconsensusrobust,
      title={Raptr: Prefix Consensus for Robust High-Performance BFT}, 
      author={Andrei Tonkikh and Balaji Arun and Zhuolun Xiang and Zekun Li and Alexander Spiegelman},
      year={2025},
      eprint={2504.18649},
      archivePrefix={arXiv},
      primaryClass={cs.DC},
      url={https://arxiv.org/abs/2504.18649}, 
}

@misc{Robinson2020Performance,
  author    = {Peter Robinson},
  title     = {{Performance Overhead of Atomic Crosschain Transactions}},
  year      = {2020},
  howpublished = {arXiv preprint arXiv:2005.10684},
  note      = {Published 19 May 2020},
  url       = {https://arxiv.org/abs/2005.10684}
}

@misc{RobinsonRamesh2020Layer2,
  author    = {Peter Robinson and Raghavendra Ramesh},
  title     = {{Layer 2 Atomic Cross-Blockchain Function Calls}},
  year      = {2020},
  howpublished = {arXiv preprint arXiv:2005.09790},
  note      = {Published 19 May 2020},
  url       = {https://arxiv.org/abs/2005.09790}
}

@misc{Paradigm2023Intents,
  author       = {Paradigm},
  title        = {{Intent-Based Architecture and Their Risks}},
  year         = {2023},
  howpublished = {Paradigm blog},
  note         = {Accessed: 2025-08-14},
  url          = {https://www.paradigm.xyz/2023/06/intents}
}

@misc{AnomaIntents2022,
  author       = {Anoma Foundation},
  title        = {{Intents and the Intent Gossip Network}},
  year         = {2022},
  howpublished = {\url{https://anoma.net/blog/intents-and-intent-gossip-network}},
  note         = {Accessed: 2025-08-14}
}

@article{herlihy2019cross,
  author    = {Maurice Herlihy and Barbara Liskov and Liuba Shrira},
  title     = {Cross-chain Deals and Adversarial Commerce},
  journal   = {Proceedings of the VLDB Endowment},
  volume    = {13},
  number    = {2},
  pages     = {100--113},
  year      = {2019},
  publisher = {VLDB Endowment},
  doi       = {10.14778/3364324.3364326},
  url       = {https://www.vldb.org/pvldb/vol13/p100-herlihy.pdf}
}

@inproceedings{amiri2019caper,
  author    = {Mohammad Javad Amiri and Divyakant Agrawal and Amr El Abbadi},
  title     = {{CAPER}: A Cross-Application Permissioned Blockchain},
  booktitle = {Proceedings of the VLDB Endowment},
  volume    = {12},
  number    = {11},
  pages     = {1385--1398},
  year      = {2019},
  publisher = {VLDB Endowment},
  doi       = {10.14778/3342263.3342275},
  url       = {https://www.vldb.org/pvldb/vol12/p1385-amiri.pdf}
}

@article{hellings2023byshard,
  author    = {Jelle Hellings and Mohammad Sadoghi},
  title     = {{ByShard}: Sharding in a Byzantine Environment},
  journal   = {The VLDB Journal},
  volume    = {32},
  pages     = {1343--1367},
  year      = {2023},
  publisher = {Springer},
  doi       = {10.1007/s00778-023-00794-0},
  url       = {https://link.springer.com/article/10.1007/s00778-023-00794-0},
  note      = {Conference version: \url{https://vldb.org/pvldb/vol14/p2230-hellings.pdf}}
}

@inproceedings{chervinski2023ibc,
  author    = {Jo{\~a}o Otavio Chervinski and Diego Kreutz and Jin Yu},
  title     = {Analyzing the Performance of the Inter-Blockchain Communication Protocol},
  booktitle = {Proceedings of the 53rd Annual IEEE/IFIP International Conference on Dependable Systems and Networks (DSN)},
  year      = {2023},
  pages     = {53--65},
  publisher = {IEEE},
  doi       = {10.1109/DSN58367.2023.00021},
  url       = {https://arxiv.org/abs/2303.10844}
}

@article{schneider1990smr,
  author    = {Fred B. Schneider},
  title     = {Implementing Fault-Tolerant Services Using the State Machine Approach: A Tutorial},
  journal   = {ACM Computing Surveys},
  volume    = {22},
  number    = {4},
  pages     = {299--319},
  year      = {1990},
  publisher = {ACM},
  doi       = {10.1145/98163.98167},
  url       = {https://www.cs.cornell.edu/fbs/publications/SMSurvey.pdf}
}

@article{guerraoui2002nbac,
  author    = {Rachid Guerraoui},
  title     = {Non-Blocking Atomic Commit in Asynchronous Distributed Systems with Failure Detectors},
  journal   = {Distributed Computing},
  volume    = {15},
  number    = {1},
  pages     = {17--25},
  year      = {2002},
  publisher = {Springer},
  doi       = {10.1007/s00446-002-8027-4},
  url       = {https://link.springer.com/article/10.1007/s00446-002-8027-4}
}

@article{zheng2025solana,
  author    = {Xiaoye Zheng and Zhiyuan Wan and David Lo and Difan Xie and Xiaohu Yang},
  title     = {Why Does My Transaction Fail? {A} First Look at Failed Transactions on the {Solana} Blockchain},
  journal   = {Proceedings of the ACM on Software Engineering},
  volume    = {2},
  number    = {ISSTA},
  year      = {2025},
  publisher = {ACM},
  doi       = {10.1145/3728943},
  url       = {https://dl.acm.org/doi/abs/10.1145/3728943}
}

@misc{layerzero2024,
  author       = {{LayerZero Labs}},
  title        = {{LayerZero}: An Omnichain Interoperability Protocol},
  year         = {2024},
  howpublished = {\url{https://layerzero.network/}},
  note         = {Accessed: 2025-07-24}
}

@techreport{tpcc2010,
  title        = {{TPC} Benchmark {C}: Standard Specification, Revision 5.11},
  author       = {{Transaction Processing Performance Council}},
  year         = {2010},
  institution  = {TPC},
  url          = {https://www.tpc.org/tpc_documents_current_versions/pdf/tpc-c_v5.11.0.pdf},
}

\end{document}